\newcommand{\cd}{\mathop{\mathrm{cd}}}
\newcommand{\alt}{\mathop{\mathrm{alt}}}
\newtheorem{theorem}{Theorem}[section]
\newtheorem{definition}[theorem]{Definition}
\newtheorem{lemma}[theorem]{Lemma}
\newtheorem{corollary}[theorem]{Corollary}
\def\lsoft{{l\kern-0.035cm\char39\kern-0.03truecm}}
\newcommand\kriz{K\v{r}\'{\i}\v{z}}
\newcommand\barany{B\'ar\'any}
\newcommand{\qedsymb}{\hfill{\rule{2mm}{2mm}}}
\renewenvironment{proof}[1][]{\begin{trivlist}
\item[\hspace{\labelsep}{\bf\noindent Proof#1:\/}] }{\qedsymb\end{trivlist}}
\def\calA{{\cal A}}
\def\calB{{\cal B}}
\def\calG{{\cal G}}
\def\calF{{\cal F}}
\def\Z{{\mathbb{Z}}}
\def\R{\mathbb{R}}
\def\N{\mathbb{N}}
\newcommand{\SchrijverP}{\textsc{Schrijver}}
\newcommand{\KneserP}{\textsc{Kneser}}
\newcommand{\TwoStab}{\mathrm{stab}}
\newcommand{\AlmostTwoStab}{\widetilde{\mathrm{stab}}}
\newcommand{\KneserPstab}{\textsc{Kneser}_{\TwoStab}}
\newcommand{\KneserPAstab}{\textsc{Kneser}_{\AlmostTwoStab}}
\newcommand{\ZpTucker}{\Z_p\textsc{-Tucker}}
\newcommand{\PPA}{\mathsf{PPA}}
\newcommand{\PPAD}{\mathsf{PPAD}}
\newcommand{\PPAp}{\mathsf{PPA}\textsc{-}p}
\newcommand{\PPAr}{\mathsf{PPA}\textsc{-}r}
\newcommand{\PPAthree}{\mathsf{PPA}\textsc{-}3}
\newcommand{\PPAtwo}{\mathsf{PPA}\textsc{-}2}
\newcommand{\TFNP}{\mathsf{TFNP}}
\newcommand{\ConHalv}{\textsc{Con-Halving}}
\newcommand{\threeConDiv}{\textsc{Con-$3$-Division}}
\newcommand{\pConDiv}{\textsc{Con-$p$-Division}}
\newcommand{\rConDiv}{\textsc{Con-$r$-Division}}
\newcommand\ConDiv[2]{\textsc{Con-$#2$-Division}[<#1]}
\newcommand\ConHalvi[1]{\textsc{Con-Halving}[<#1]}
\newcommand{\NP}{\mathsf{NP}}
\newcommand{\eps}{\epsilon}
\renewcommand{\epsilon}{\varepsilon}
\begin{document}

\title{{\bf The Chromatic Number of Kneser Hypergraphs via Consensus Division}}

\author{
Ishay Haviv\thanks{School of Computer Science, The Academic College of Tel Aviv-Yaffo, Tel Aviv 61083, Israel. Research supported in part by the Israel Science Foundation (grant No.~1218/20).}
}

\date{}

\maketitle

\begin{abstract}
We show that the Consensus Division theorem implies lower bounds on the chromatic number of Kneser hypergraphs, offering a novel proof for a result of Alon, Frankl, and Lov{\'{a}}sz (Trans.~Amer.~Math.~Soc.,~1986) and for its generalization by \kriz~(Trans.~Amer.~Math.~Soc.,~1992).
Our approach is applied to study the computational complexity of the total search problem $\KneserP^p$, which given a succinct representation of a coloring of a $p$-uniform Kneser hypergraph with fewer colors than its chromatic number, asks to find a monochromatic hyperedge.
We prove that for every prime $p$, the $\KneserP^p$ problem with an extended access to the input coloring is efficiently reducible to a quite weak approximation of the Consensus Division problem with $p$ shares. In particular, for $p=2$, the problem is efficiently reducible to {\em any non-trivial} approximation of the Consensus Halving problem on normalized monotone functions.
We further show that for every prime $p$, the $\KneserP^p$ problem lies in the complexity class $\PPAp$. As an application, we establish limitations on the complexity of the $\KneserP^p$ problem, restricted to colorings with a bounded number of colors.
\end{abstract}


\pagenumbering{arabic}

\section{Introduction}

This paper is concerned with two classic problems: the graph-theoretic problem of determining the {\em chromatic number of Kneser hypergraphs} and the {\em consensus division} problem from the area of fair division, which lies at the intersection of economics, mathematics, and computer science.
We present a novel direct connection between the two problems, offering a new proof for a result of Alon, Frankl, and Lov{\'{a}}sz~\cite{AlonFL86} on the chromatic number of Kneser hypergraphs as well as for its generalization by \kriz~\cite{Kriz92}. We use this connection to study the computational complexity of the total search problems associated with Kneser hypergraphs and with approximate consensus division. In what follows, we provide some background on the two problems and on their computational aspects, and then describe our contribution.

\paragraph{Kneser hypergraphs.}
For an integer $r \geq 2$ and a set family $\calF$, the $r$-uniform Kneser hypergraph $K^r(\calF)$ is the hypergraph on the vertex set $\calF$, whose hyperedges are all the $r$-subsets of $\calF$ whose members are pairwise disjoint. For integers $n$ and $k$ with $n \geq r \cdot k$, let $K^r(n,k)$ denote the hypergraph $K^r(\calF)$, where $\calF = \binom{[n]}{k}$ is the family of all $k$-subsets of $[n] = \{1,2,\ldots,n\}$. When $r=2$, the superscript $r$ may be omitted. The chromatic number of a hypergraph $H$, denoted by $\chi(H)$, is the minimum number of colors that allow a proper coloring of its vertices, that is, a coloring with no monochromatic hyperedge.

The study of the chromatic number of Kneser graphs was initiated in 1955 by Kneser~\cite{Kneser55}, who observed that the graph $K(n,k)$ admits a proper coloring with $n-2k+2$ colors and conjectured that fewer colors do not suffice, that is, $\chi(K(n,k)) = n-2k+2$. Lov{\'{a}}sz~\cite{LovaszKneser} confirmed Kneser's conjecture in 1978, and his result was extended in multiple ways over the years. One extension, due to Schrijver~\cite{SchrijverKneser78}, showed that the subgraph $S(n,k)$ of $K(n,k)$, induced by the $k$-subsets of $[n]$ that are stable (i.e., include no two consecutive elements modulo $n$) has the same chromatic number.
Another extension, due to Alon, Frankl, and Lov{\'{a}}sz~\cite{AlonFL86}, confirmed a conjecture of Erd{\H{o}}s~\cite{Erdos76} by showing that $\chi(K^r(n,k)) = \lceil \frac{n-r(k-1)}{r-1} \rceil$ for all integers $r \geq 2$. Their lower bound on $\chi(K^r(n,k))$ was further generalized by \kriz~\cite{Kriz92}, as stated below, using a quantity of set families, called the $r$-colorability defect and denoted by ${\cd}_r$ (see Definition~\ref{def:color_defect}; see also~\cite{Dolnikov82}).
\begin{theorem}[\cite{Kriz92}]\label{thm:IntroKriz}
For every integer $r \geq 2$ and for every family $\calF$ of non-empty sets,
\[ \chi(K^r(\calF)) \geq \Big \lceil \frac{\cd_r(\calF)}{r-1} \Big \rceil.\]
\end{theorem}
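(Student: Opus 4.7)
The plan is to deduce Theorem~\ref{thm:IntroKriz} from the Consensus $r$-Division theorem by proving its contrapositive: from any proper coloring $c : \calF \to [t]$ of $K^r(\calF)$ I will extract an $r$-partition of $[n] \setminus S$ avoiding $\calF$ with $|S| \leq t(r-1)$, witnessing $\cd_r(\calF) \leq t(r-1)$ and hence giving the desired lower bound after rounding up.

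Set $\calF_i = c^{-1}(i)$ for $i \in [t]$; since $c$ is proper, no $\calF_i$ contains $r$ pairwise disjoint sets. I embed the ground set $[n]$ as distinct points $y_1, \ldots, y_n \in [0,1]$ and, for each color $i$, construct a (continuous) measure $\mu_i$ on $[0,1]$ encoding $\calF_i$. Applying the Consensus $r$-Division theorem to $(\mu_1, \ldots, \mu_t)$ yields $t(r-1)$ cut points and a partition of $[0,1]$ into $r$ shares $A_1, \ldots, A_r$ with $\mu_i(A_s) = \mu_i([0,1])/r$ for all $i, s$. Setting $S = \{j : y_j \textrm{ is a cut point}\}$ (so $|S| \leq t(r-1)$) and $P_s = \{j : y_j \in A_s\}$ defines a candidate witness, and the aim is to show that a suitable choice of the $\mu_i$'s forces every $P_s$ to avoid $\calF$, i.e., no $F \in \calF$ is fully contained in any $P_s$.

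The main obstacle is the design of the measures. A naive attempt with point masses $\mu_i = \sum_j w_j^{(i)} \delta_{y_j}$ yields a contradiction with the balanced-share identity whenever $F \subseteq P_s$, provided $\sum_{j \in F} w_j^{(i)} > \sum_j w_j^{(i)}/r$ for every $F \in \calF_i$. This latter demand is equivalent to asking that the fractional covering number of $\calF_i$ be strictly less than $r$, a condition strictly stronger than the hypothesis $\nu(\calF_i) \leq r-1$ available to us: the Fano plane already exhibits $\nu = 1$ together with a fractional covering number equal to $7/3 > 2$. Consequently, the measures $\mu_i$ cannot be pure point-mass fractional covers; they will have to spread their mass across intervals, and the positions $y_j$ must be chosen in coordination with the supports of the $\mu_i$, so that the $t(r-1)$ CD cuts are forced to land at a joint set of elements whose removal leaves $[n] \setminus S$ with an $r$-partition that simultaneously avoids every $\calF_i$. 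Engineering this construction, and using the matching-number bound $\nu(\calF_i) \leq r-1$ in an essential way rather than through a fractional cover, is where I expect the main technical work of the proof to lie.
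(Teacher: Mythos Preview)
Your diagnosis of the obstacle is exactly right, and in fact it is fatal for the approach as stated: no construction of \emph{additive} measures $\mu_i$ can work. The consensus identity $\mu_i(A_s)=\mu_i([0,1])/r$ yields a contradiction from $F\subseteq P_s$ only if $\mu_i$ assigns $F$ more than a $1/r$-fraction of the total mass, and your Fano plane example shows this cannot be arranged for all $F\in\calF_i$ knowing only that $\calF_i$ has no $r$ pairwise disjoint members. Spreading mass over intervals rather than points does not help: once the cuts are placed, what matters is the total $\mu_i$-mass carried by each $F$, which is again a fractional-cover condition. The ``engineering'' you anticipate cannot be done within the additive framework.

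The paper escapes this by abandoning additivity. For each color $i$ it defines a continuous but \emph{non-additive} valuation $v_i:\calB([0,1])\to[0,1]$ which, on a union of full subintervals $I_j$, equals $1$ exactly when the corresponding index set contains a member of $\calF_i$, and interpolates monotonically in between. The consensus condition then forces $v_\ell(A_s)$ to be equal across all pieces; since the set $Y$ of cut intervals satisfies $|Y|<\cd_p(\calF)$, some piece fully contains a set $B\in\calF$, so the common value is $1$, and one reads off $p$ pairwise disjoint members of $\calF_\ell$ --- a monochromatic hyperedge, contradiction. This bypasses the fractional-cover obstruction entirely: the valuation directly encodes ``contains a member of $\calF_i$'' rather than any linear proxy for it.

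The price is that the Consensus Division theorem for arbitrary continuous valuations (Theorem~\ref{thm:ConDiv}) is only available when $r$ is a prime $p$. The paper therefore first proves the bound for prime $r$, and then lifts it to composite $r$ by a separate multiplicative argument: if the inequality holds for $r_1$ and for $r_2$, it holds for $r_1 r_2$. Your proposal is missing both ingredients --- the non-additive valuations and the prime-to-composite reduction --- and the first of these is not a detail to be filled in but the core idea of the proof.
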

\noindent
While Theorem~\ref{thm:IntroKriz} implies the tight lower bound of Alon et al.~\cite{AlonFL86} on the chromatic number of $K^r(n,k)$ (see Lemma~\ref{lemma:cd_Kneser}), it does not cover the aforementioned result of Schrijver~\cite{SchrijverKneser78}.
In an attempt to simultaneously generalize both the results, additional extensions were established by several authors, e.g., Meunier~\cite{Meunier11}, Alishahi and Hajiabolhassan~\cite{AlishahiH15}, Aslam et al.~\cite{AslamCCFS19}, and Frick~\cite{Frick20}.

It is interesting to mention that although the statements of the above results are all purely combinatorial, their proofs rely on topological tools. Lov{\'{a}}sz's lower bound~\cite{LovaszKneser} on the chromatic number of $K(n,k)$ was based on the Borsuk--Ulam theorem~\cite{Borsuk33}, a fundamental result in algebraic topology, and his approach pioneered the area of topological combinatorics. The extension to hypergraphs by Alon et al.~\cite{AlonFL86} was based on a theorem of \barany, Shlosman, and Sz{\H{u}}cs~\cite{BSS81} that generalizes the Borsuk--Ulam theorem. It was shown by Matou{\v{s}}ek~\cite{Matousek04} that the chromatic number of $K(n,k)$ can also be determined as an application of Tucker's lemma, a combinatorial analogue of the Borsuk--Ulam theorem. His machinery was further developed by Ziegler~\cite{Ziegler02} and later by Meunier~\cite{Meunier11} to provide alternative proofs for the results of~\cite{AlonFL86,Kriz92,SchrijverKneser78}.

\paragraph{Consensus division.}

Another area that extensively applies topological tools is fair division, where the goal is to find fair allocations of resources among several parties. In the consensus division scenario, given $m$ continuous valuation functions $v_1, \ldots, v_m$ defined on subsets of the unit interval $[0,1]$, we aim to divide the interval into $r$ (not necessarily connected) pieces $A_1, \ldots, A_r$ using as few cuts as possible, such that each function assigns the same value to the $r$ pieces, namely, $v_i(A_{t}) = v_i(A_{t'})$ for all $i \in [m]$ and $t,t' \in [r]$. For the case $r=2$, referred to as consensus halving, the Hobby--Rice theorem~\cite{HobbyRice65} asserts that for additive valuation functions, there always exists such a division with at most $m$ cuts. For a general $r \geq 2$, Alon~\cite{Alon87Necklace} used the generalization of~\cite{BSS81} of the Borsuk--Ulam theorem to show that if the valuation functions are probability measures on $[0,1]$, then there exists a consensus division with at most $(r-1) \cdot m$ cuts. Note that this result played a central role in his proof of the Splitting Necklaces theorem~\cite{Alon87Necklace}. In fact, it turns out that if the number of desired pieces is a prime $p$, then the existence of a consensus division with at most $(p-1) \cdot m$ cuts is guaranteed for any given continuous valuation functions, which unlike probability measures, are not required to be additive nor non-negative. This extension was provided for $p=2$ by Simmons and Su~\cite{SimmonsS03} and for any prime $p$ by Filos-Ratsikas, Hollender, Sotiraki, and Zampetakis~\cite{FHSZ21} (see Theorem~\ref{thm:ConDiv}).

\paragraph{Computational aspects.}
The complexity class $\TFNP$, introduced in~\cite{MegiddoP91}, consists of the total search problems in $\NP$, i.e., the search problems for which every instance is guaranteed to have a solution, where the solutions can be verified in polynomial time. In 1994, Papadimitriou~\cite{Papa94} introduced several sub-classes of $\TFNP$ that express the mathematical arguments which yield the existence of the solutions for their problems.
In particular, he defined for every prime $p$, the complexity class $\PPAp$, associated with the principle that every bipartite graph that has a vertex whose degree is not a multiple of $p$, has another such vertex. Consequently, $\PPAp$ is the class of problems in $\TFNP$ that are efficiently reducible to the corresponding canonical problem: Given a Boolean circuit that represents a bipartite graph and given a vertex whose degree is not a multiple of $p$, find another such vertex.
This family of classes was further studied and developed by Hollender~\cite{Hollender21} and by G{\"{o}}{\"{o}}s, Kamath, Sotiraki, and Zampetakis~\cite{GoosKSZ20}, who independently introduced the classes $\PPAr$ for composites $r$.
Note that the class $\PPAtwo$ is denoted by $\PPA$, and that its analogue for directed graphs, called  $\PPAD$, is known to be contained in $\PPAr$ for all integers $r \geq 2$~\cite{Hollender21,GoosKSZ20}.

For any integer $r \geq 2$, let $\rConDiv$ denote the computational problem, which given an access to $m$ continuous valuation functions $v_1, \ldots, v_m$ over $[0,1]$, asks to find a consensus division into $r$ pieces $A_1, \ldots, A_r$ with at most $(r-1) \cdot m$ cuts. In its approximate version with precision parameter $\eps$, the pieces should satisfy $|v_i(A_t) - v_i(A_{t'})| \leq \eps$ for all $i \in [m]$ and $t,t' \in [r]$.
The problem may be considered for various families of input functions, ranging from piecewise-constant functions and probability measures to monotone functions and general functions. While piecewise-constant functions can be explicitly represented by the endpoints and values of the intervals on which they are nonzero, in the more general settings the functions are given by some succinct representation, e.g., arithmetic circuits or efficient Turing machines that compute them.
It follows from~\cite{FHSZ21} that for all primes $p$, every instance of the $\pConDiv$ problem has a solution even for $\eps=0$, hence the problem is total.

A considerable attention has been devoted in recent years to the $\rConDiv$ problem with $r=2$, referred to as $\ConHalv$.
Filos-Ratsikas and Goldberg~\cite{FG19} proved that the problem is $\PPA$-complete, even for piecewise-constant functions and inverse-polynomial precision parameter $\eps$, and derived the $\PPA$-completeness of the Splitting Necklaces with two thieves and Discrete Sandwich problems. Their result was strengthened and extended in several ways. Deligkas, Filos-Ratsikas, and Hollender~\cite{DeligkasFH22} proved that the problem remains $\PPA$-hard when the number of valuation functions is a fixed constant: for two or more general functions, and for three or more monotone functions.
Filos-Ratsikas, Hollender, Sotiraki, and Zampetakis~\cite{FHSZ23} proved that the problem remains $\PPA$-hard when the valuation functions are piecewise-uniform with only two blocks. More recently, Deligkas, Fearnley, Hollender, and Melissourgos~\cite{DeligkasFHM22} proved that the problem is $\PPA$-hard for any precision parameter $\eps < 1/5$, even when the valuation functions are piecewise-uniform with three blocks.
An additional version of the $\ConHalv$ problem, where the goal is not to partition an interval but an unordered collection of items, was studied by Goldberg, Hollender, Igarashi, Manurangsi, and Suksompong~\cite{GoldbergHIMS20}.

The complexity of the $\rConDiv$ problem for a general $r$ is much less understood. It was proved by Filos-Ratsikas, Hollender, Sotiraki, and Zampetakis~\cite{FHSZ21} that for every prime $p$, $\pConDiv$ lies in the complexity class $\PPAp$, and the question of whether it is $\PPAp$-hard was left open. Yet, for $p=3$, it was shown in~\cite{FHSZ23} that $\threeConDiv$ is $\PPAD$-hard for an inverse-exponential precision parameter $\eps$.
On the algorithmic side, Alon and Graur~\cite{AlonG21} proved that given an access to $m$ probability measures over $[0,1]$, it is possible to find in polynomial time a partition of the unit interval into $r$ pieces with at most $(r-1) \cdot m$ cuts, such that every piece receives at least $\frac{1}{m \cdot r}$ of each measure.
They also considered the case where a larger number of cuts is allowed, and showed that for a given precision parameter $\eps$, it is possible to efficiently find a solution with $O(\eps^{-2} \cdot m \log m)$ cuts.
Goldberg and Li~\cite{GoldbergL23} proved that the $\rConDiv$ problem on $m$ probability measures, where the number of allowed cuts is $2(r-1)(p-1) \cdot \frac{\lceil p/2 \rceil}{\lfloor p/2 \rfloor} \cdot m$ for a prime $p$, lies in the complexity class $\PPAp$. For example, the $\ConHalv$ problem on $m$ probability measures with $8m$ allowed cuts lies in $\PPAthree$.

Before turning to our results, let us introduce, for any integer $r \geq 2$, the computational search problem $\KneserP^r$ associated with the family of $r$-uniform Kneser hypergraphs $K^r(n,k)$.
The input consists of integers $n$ and $k$ with $n \geq r \cdot k$ along with a Boolean circuit that represents a coloring of the vertices of the hypergraph $K^r(n,k)$ with $\lfloor \frac{n-r(k-1)-1}{r-1} \rfloor$ colors, which is smaller by one than its chromatic number~\cite{AlonFL86}. The goal is to find a monochromatic hyperedge, that is, $r$ pairwise disjoint vertices that are assigned the same color by the input coloring. Omitting the superscript $r$ when $r=2$, the $\KneserP$ problem is known to lie in $\PPA$, and it is an open question whether it is $\PPA$-hard, as suggested by Deng, Feng, and Kulkarni~\cite{DengFK17}. More generally, it was asked in~\cite{FHSZ21} whether for every prime $p$, the $\KneserP^p$ problem lies in $\PPAp$ and if it is $\PPAp$-hard. While no hardness result is known for these problems, it was shown in~\cite{Haviv22-FISC} that the $\SchrijverP$ problem, which asks to find a monochromatic edge in a graph $S(n,k)$ given a coloring of its vertices with $n-2k+1$ colors, is $\PPA$-complete. It was recently shown in~\cite{Haviv23} that the problem of finding a monochromatic edge in a graph $S(n,k)$, given a coloring of its vertices with only $\lfloor n/2 \rfloor -2k+1$ colors, is efficiently reducible to the $\KneserP$ problem.

\subsection{Our Contribution}

This paper presents a novel direct connection between the chromatic number of Kneser hypergraphs and the consensus division problem.
As our first contribution, we offer a new proof of \kriz's lower bound on the chromatic number of Kneser hypergraphs~\cite{Kriz92}, stated earlier as Theorem~\ref{thm:IntroKriz}. The proof relies on the Consensus Division theorem of Filos-Ratsikas et al.~\cite{FHSZ21}. Our technique borrows and extends ideas that were applied in~\cite{GoldbergHIMS20} and in~\cite{Haviv22a}.

We then adopt a computational perspective and explore our approach to Theorem~\ref{thm:IntroKriz} as a reduction from the $\KneserP^p$ problem to the $\pConDiv$ problem for any prime $p$.
Our main contribution is an efficient reduction from the $\KneserP^p$ problem with an extended access to the input coloring to a quite weak approximation of the $\pConDiv$ problem.
Before the precise statements, let us introduce the following variants of the studied computational problems. Their formal definitions are given in Sections~\ref{subsec:Kneser} and~\ref{subsec:CD}.

\begin{itemize}
  \item The $\KneserP^p$ problem with {\em subset queries}: As in the standard $\KneserP^p$ problem, the input is a coloring of the vertices of a hypergraph $K^p(n,k)$ with fewer colors than its chromatic number, and the goal is to find a monochromatic hyperedge. Here, however, the access to the coloring allows, in addition to queries for the colors of the vertices, another type of queries called subset queries. Such a query involves a subset $D$ of $[n]$ and a color $i$, and the answer on the pair $(D,i)$ determines whether $D$ contains a vertex colored $i$. The notion of subset queries was proposed in~\cite{Haviv22a}.
  \item The $\ConDiv{\eps}{p}$ problem: As in the standard $\pConDiv$ problem, the input consists of $m$ continuous valuation functions $v_1, \ldots, v_m$ over $[0,1]$, and a solution is a partition of the unit interval into $p$ pieces $A_1, \ldots, A_p$ with at most $(p-1) \cdot m$ cuts. Here, however, the solution is required to satisfy
        \begin{eqnarray}\label{eq:strict}
            |v_i(A_t) - v_i(A_{t'})| < \eps
        \end{eqnarray}
      for all $i \in [m]$ and $t,t' \in [p]$. Namely, the difference from the standard $\pConDiv$ problem with precision parameter $\eps$ is that the inequality in~\eqref{eq:strict} is strict. In particular, for $\eps=1$, assuming that the valuation functions are normalized (i.e., return values in $[0,1]$), the solution is just required to be {\em non-trivial}. This means that the solution is just not allowed to include two pieces $A_t$ and $A_{t'}$ such that $v_i(A_t)=1$ and $v_i(A_{t'})=0$ for some $i \in [m]$, but the value of $|v_i(A_t)-v_i(A_{t'})|$ may approach $1$ as $m$ grows. When $p=2$, the problem is denoted by $\ConHalvi{\eps}$.
\end{itemize}

We prove the following theorem, which concerns the case $p=2$.
\begin{theorem}\label{thm:IntroKneserToCH}
There exists a polynomial-time reduction from the $\KneserP$ problem with subset queries to the $\ConHalvi{1}$ problem on normalized monotone functions.
\end{theorem}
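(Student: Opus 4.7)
The plan is to reduce a $\KneserP$ instance---a coloring $c : \binom{[n]}{k} \to [C]$ of $K(n,k)$ with $C = n-2k+1$ colors, accessed via ordinary color queries and subset queries---to a $\ConHalvi{1}$ instance on $m = n-2k+2$ normalized monotone valuation functions over $[0,1]$, designed so that any non-trivial consensus halving immediately exhibits a monochromatic edge.

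First, fix $n$ distinguished \emph{vertex points} $p_1 < \cdots < p_n$ in the open interval $(0,1)$, and for a measurable set $A \subseteq [0,1]$ write $D(A) := \{i \in [n] : p_i \in A\}$. I introduce one structural function and, for each color, one color function:
\[
v_{\mathrm{str}}(A) := \mathbf{1}\!\left[\,|D(A)| \geq k\,\right],
\qquad
u_i(A) := \mathbf{1}\!\left[\,\text{some } S \in \binom{D(A)}{k} \text{ satisfies } c(S) = i\,\right] \quad (i \in [C]).
\]
Each $u_i(A)$ is answered by a single subset query $(D(A), i)$, and every one of the $m = 1 + C$ functions is a monotone set function with value $0$ on $\emptyset$ and value $1$ on $[0,1]$ (after dropping any unused colors).

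Next, I would invoke $\ConHalvi{1}$ on these $m$ functions, obtaining a partition $(A_1, A_2)$ using at most $m$ cuts and satisfying $|v(A_1)-v(A_2)| < 1$ for every function $v$. Since all functions take values only in $\{0,1\}$, the strict inequality amounts to $v(A_1) = v(A_2)$. Writing $D_j := D(A_j)$, the equation for $v_{\mathrm{str}}$ rules out the case $|D_1|, |D_2| < k$ (which would give $|D_1|+|D_2| < 2k \leq n$, contradicting $|D_1|+|D_2|=n$), hence $|D_1|, |D_2| \geq k$. For each color $i$, the equation for $u_i$ says that $\binom{D_1}{k}$ realizes color $i$ iff $\binom{D_2}{k}$ does, so the color sets $C_j := \{c(S) : S \in \binom{D_j}{k}\}$ coincide. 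Both $C_j$ are non-empty because $|D_j| \geq k$, so any $i \in C_1 = C_2$ admits, by standard subset-query binary search on each $D_j$, explicit witnesses $S_j \in \binom{D_j}{k}$ with $c(S_j) = i$. Since $D_1 \cap D_2 = \emptyset$, the sets $S_1, S_2$ are disjoint and form the desired monochromatic hyperedge of $K(n,k)$.

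The main technical hurdle I expect is fitting these combinatorial step functions into the continuous input model assumed by $\ConHalvi{1}$: each indicator will actually be realized as a steep monotone piecewise-linear ramp on a narrow window around the relevant vertex point, chosen so that for any partition whose cuts avoid the ramps the $\{0,1\}$-rounded analysis above goes through verbatim, and so that a cut falling inside a ramp can be perturbed without affecting the combinatorial conclusion on $D_1, D_2$ and on the color sets $C_1, C_2$. The remaining checks---that each function admits a polynomial-size Turing-machine description (the value of $v_{\mathrm{str}}$ is obtained by counting vertex points in $A$, that of $u_i$ by one subset query to $c$) and that the witnesses $S_1, S_2$ are extracted in polynomial time from the returned partition---are routine.
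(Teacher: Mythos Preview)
Your high-level plan---one valuation function per color, then read off a monochromatic edge from the halving---is exactly the paper's idea, but the execution has a real gap at precisely the point you flag as a ``technical hurdle.'' The functions you write down depend on whether finitely many \emph{points} $p_i$ lie in $A$; they are therefore not Lipschitz (not even continuous), so they are not legal inputs to $\ConHalvi{1}$. Your proposed fix---replace each indicator by a steep ramp and ``perturb'' any cut that lands in a ramp---does not work. A ramp of width $\delta$ forces Lipschitz constant $\Theta(1/\delta)$, so sliding a cut across a ramp can move a valuation by $\Theta(1)$; after perturbation the $<1$ condition need not hold, and with it your inference that the $\{0,1\}$ values agree on both sides collapses. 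Concretely, the oracle may return a partition with $u_i(A_1)=0.01$ and $u_i(A_2)=0.99$ (difference $<1$), and any rounding to a combinatorial $D_1,D_2$ will put color $i$ on one side only. The extra structural function $v_{\mathrm{str}}$ makes things worse, not better: it raises the cut budget from $n-2k+1$ to $n-2k+2$, so after continuization both sides can have only $k-1$ fully contained cells, and then no valuation attains the value $1$ on either side, which is the only regime in which the strict inequality $<1$ carries information.

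The paper repairs both issues at once. It drops the structural function (so $m=n-2k+1$) and associates to each $j\in[n]$ the interval $I_j=((j-1)/n,\,j/n)$, sets $x^E_j=n\cdot\mu(E\cap I_j)$, and takes for color $i$ the continuous monotone extension
\[
v_i(E)\;=\;\max_{\substack{B\in\binom{[n]}{k}\\ c(B)=i}}\ \min_{j\in B}\,x^E_j,
\]
equivalently the largest $a$ with $\{j:x^E_j\ge a\}$ containing an $i$-colored $k$-set. With at most $n-2k+1$ cuts there are at least $2k-1$ uncut intervals, so some side, say $A_1$, fully contains $k$ of them; any $k$-subset $B_1$ of those gives $v_{c(B_1)}(A_1)=1$. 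Then $|v_{c(B_1)}(A_1)-v_{c(B_1)}(A_2)|<1$ forces $v_{c(B_1)}(A_2)>0$, yielding a witness $B_2$ with $x^{A_2}_j>0$ for all $j\in B_2$. Disjointness is automatic: $j\in B_1$ implies $x^{A_1}_j=1$, hence $x^{A_2}_j=0$, so $j\notin B_2$. The point is that the specific continuous extension---not a generic ``steep ramp''---is what makes the strict inequality useful when one side equals $1$ and lets you read off a disjoint witness from the other.
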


As alluded to before, the complexity of the $\KneserP$ problem (with or without subset queries) is not well understood.
Theorem~\ref{thm:IntroKneserToCH} implies that any hardness result for the $\KneserP$ problem with subset queries would imply a very strong hardness result for the $\ConHalv$ problem on normalized monotone functions, ruling out the possibility to obtain an efficient algorithm for any non-trivial approximation of the problem. On the other hand, an efficient algorithm for $\ConHalv$ that guarantees some non-trivial approximation on normalized monotone functions would imply an efficient algorithm for the $\KneserP$ problem with subset queries.
We find these consequences of the reduction quite surprising and unusual, especially because of the discrete nature of the $\KneserP$ problem. For comparison, the efficient reduction from the (discrete) Splitting Necklaces problem with two thieves to the $\ConHalv$ problem with precision parameter $\eps$, which is given in~\cite{FG19} and builds on an argument of~\cite{Alon87Necklace}, requires $\eps$ to be inverse-polynomial in the number of valuation functions (note that those functions are additive, though).
Let us stress that Theorem~\ref{thm:IntroKneserToCH} addresses the $\ConHalvi{1}$ problem when restricted to normalized monotone valuation functions but not to probability measures. Recall that for this stronger restriction, an algorithm of~\cite{AlonG21} does provide a non-trivial solution in polynomial time.
We finally note that the proof of Theorem~\ref{thm:IntroKneserToCH} essentially supplies a reduction to the version of $\ConHalv$, studied in~\cite{GoldbergHIMS20}, of finding a consensus halving of an unordered collection of items rather than of the unit interval. This makes the result stronger, as this version is efficiently reducible to the standard one.

Our next result relates the $\KneserP^p$ problem with subset queries to the $\pConDiv$ problem for every prime $p \geq 3$. Here, the precision parameter of the latter is $\tfrac{1}{2}$.

\begin{theorem}\label{thm:IntroKneserToCD}
For every prime $p \geq 3$, there exists a polynomial-time reduction from the $\KneserP^p$ problem with subset queries to the $\ConDiv{\tfrac{1}{2}}{p}$ problem on normalized monotone functions.
\end{theorem}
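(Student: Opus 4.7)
The plan is to adapt the $p=2$ reduction of Theorem~\ref{thm:IntroKneserToCH} to general prime $p \geq 3$, with $\ConDiv{\tfrac{1}{2}}{p}$ replacing $\ConHalvi{1}$. The bookkeeping matches: a coloring of $K^p(n,k)$ using $m<\chi(K^p(n,k))$ colors yields, through Kriz's Theorem~\ref{thm:IntroKriz}, the inequality $(p-1)m<\cd_p\bigl(\binom{[n]}{k}\bigr)$, which is exactly the number of cuts that the consensus-division oracle is allowed to place. Moreover, the $\tfrac{1}{2}$-precision is the natural analog of the non-trivial condition used for $p=2$: for a normalized $v_i$, having $v_i(A_{t_0})=0$ would force $v_i(A_t)<\tfrac{1}{2}$ for every $t$ by strict precision, so $\sum_t v_i(A_t)<(p-1)/2$, which for $p=3$ already contradicts $\sum_t v_i(A_t)=1$ and for larger $p$ still limits how many pieces can be $v_i$-empty.

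Given a coloring $c:\binom{[n]}{k}\to[m]$ accessed via subset queries, I would embed $[n]$ as the $n$ equally spaced points $p_j=(2j-1)/(2n)$ in $[0,1]$, so that a partition $(A_1,\ldots,A_p)$ of $[0,1]$ induces the partition $N_t=\{j\in[n]:p_j\in A_t\}$ of $[n]$. For each color $i\in[m]$, I would construct, via polynomially many subset queries, a normalized monotone function $v_i:[0,1]\to[0,1]$ whose rises are placed at threshold positions attached to color $i$, such as the smallest $j$ for which the prefix $[1,j]$ contains a $k$-subset of color $i$, together with analogous thresholds taken from the right and from intermediate prefixes. After invoking the $\ConDiv{\tfrac{1}{2}}{p}$ oracle on $(v_1,\ldots,v_m)$ to obtain pieces $(A_1,\ldots,A_p)$ with at most $(p-1)m$ cuts meeting the strict $\tfrac{1}{2}$-precision, a final decoding step uses subset queries to search for a color $i^*$ such that every $N_t$ contains a $k$-subset of color $i^*$, and outputs the resulting monochromatic hyperedge. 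Correctness follows by contradiction: if for every color $i$ some piece $N_{t(i)}$ contained no $k$-subset of color $i$, then the placement of the rises of $v_i$ would push $|v_i(A_{t(i)})-v_i(A_t)|$ up to $\tfrac{1}{2}$ for some $t$; equivalently, $(N_1,\ldots,N_p)$, after excising at most $(p-1)m$ elements near the cuts, would witness $\cd_p\bigl(\binom{[n]}{k}\bigr)\le(p-1)m$, contradicting the Kriz bound.

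The main obstacle is the precise design of the valuation functions $v_i$, which is subtler than in the $p=2$ case because the $\tfrac{1}{2}$-precision becomes a weaker constraint as $p$ grows, and already for $p\geq 5$ does not individually force every $v_i(A_t)$ to be strictly positive. Overcoming this requires the threshold data encoded in each $v_i$ to be rich enough that, even when some pieces are $v_i$-empty, a failure of the hyperedge property across all colors still triggers a precision violation somewhere. Making this analytic-combinatorial matching work in polynomial time with only subset-query access to the coloring is the technical heart of the reduction, and I expect it to extend the construction used for $p=2$ in Theorem~\ref{thm:IntroKneserToCH} by layering threshold functions derived from prefix and suffix subset queries of the coloring.
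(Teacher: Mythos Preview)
Your proposal does not contain a proof: you explicitly defer the construction of the valuation functions $v_i$, calling it ``the technical heart of the reduction'' that you ``expect'' to work out. But this is precisely the content of the theorem, and your sketch of how it might go is off in several places. First, your additivity heuristic $\sum_t v_i(A_t)=1$ is false for the normalized monotone functions in play here; they are far from measures, so no such constraint is available. Second, your discrete-point embedding $p_j=(2j-1)/(2n)$ discards exactly the structure that makes the argument go through: with points, a piece $A_t$ either contains $p_j$ or not, and you lose any notion of fractional membership. Third, your worry that the $\tfrac{1}{2}$ threshold ``becomes a weaker constraint as $p$ grows'' is misplaced; in the paper's argument the threshold $\tfrac{1}{2}$ works uniformly for every prime $p\geq 3$, and the reason is a simple pigeonhole that your point-based setup cannot express.

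Here is what the paper does and what you are missing. Partition $[0,1]$ into $n$ equal open intervals $I_j$, and for $E\in\calB([0,1])$ set $x^E_j=n\cdot\mu(E\cap I_j)\in[0,1]$. For each color $i$, define $v_i(E)$ to be, essentially, the largest $a\in[0,1]$ such that the set $\{j:x^E_j\geq a\}$ contains (via the subset-query circuit $S$) a vertex of color $i$. These $v_i$ are normalized, monotone, and $n$-Lipschitz. Given a $\ConDiv{\tfrac12}{p}$ solution $A_1,\ldots,A_p$ with at most $(p-1)m<\cd_p\bigl(\binom{[n]}{k}\bigr)$ cuts, at most $(p-1)m$ of the intervals $I_j$ are cut, so by the defect bound some piece $A_{t_1}$ fully contains a vertex $B$; for its color $\ell=c(B)$ we get $v_\ell(A_{t_1})=1$. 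Strict $\tfrac12$-precision then forces $v_\ell(A_t)>\tfrac12$ for \emph{every} $t$, and hence each $A_t$ has a set $M_t=\{j:x_j^{A_t}\geq v_\ell(A_t)\}$ containing a vertex $B_t$ of color $\ell$. The sets $B_1,\ldots,B_p$ are pairwise disjoint because $\sum_t \mu(A_t\cap I_j)=\tfrac1n$, so at most one $t$ can satisfy $\mu(A_t\cap I_j)>\tfrac{1}{2n}$; this is the pigeonhole step that requires intervals, not points, and that makes $\tfrac12$ the right threshold for all $p\geq 3$.
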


It is noteworthy that Theorems~\ref{thm:IntroKneserToCH} and~\ref{thm:IntroKneserToCD} are proved in a more general form.
Namely, we reduce from a generalized variant of the $\KneserP^p$ problem, where the input is a coloring of a hypergraph $K^p(\calF)$ for some set family $\calF$ taken from a prescribed sequence of set families, which is assumed to be efficiently computable (see Definitions~\ref{def:KneserProblemSubset} and~\ref{def:computable_seq}). The number of colors used by the input coloring may be any number smaller than \kriz's lower bound on the chromatic number of $K^p(\calF)$, as given by Theorem~\ref{thm:IntroKriz}.
For the precise statements, see Theorems~\ref{thm:Kneser->CH1} and~\ref{thm:Kneser->CD}.

We proceed with additional results on the computational complexity of the $\KneserP^p$ problem (in its standard version, without subset queries).
The following theorem settles a question of~\cite{FHSZ21}.
\begin{theorem}\label{thm:IntroPPAp}
For every prime $p$, the $\KneserP^p$ problem lies in $\PPAp$.
\end{theorem}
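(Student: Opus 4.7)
The plan is to obtain $\PPAp$ membership directly from a $\Z_p$-equivariant combinatorial Borsuk--Ulam theorem, rather than through approximate consensus division. Concretely, I would formalize the $\ZpTucker$ computational problem, whose totality is Meunier's $\Z_p$-Tucker lemma (which for $p=2$ coincides with Ziegler's octahedral Tucker lemma), and exhibit two efficient reductions: one from $\KneserP^p$ to $\ZpTucker$, mirroring the Ziegler--Meunier proof of \kriz's bound (Theorem~\ref{thm:IntroKriz}), and one from $\ZpTucker$ into the canonical complete problem for $\PPAp$.

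For the first reduction, let $c$ be the input coloring circuit for $K^p(n,k)$ with fewer than $\ceil{(n-p(k-1))/(p-1)}$ colors. I define a polynomial-size circuit $\lambda$ on $(\Z_p \cup \{0\})^n \setminus \{0^n\}$ as follows. Given a signed vector $x$, compute the coordinate sets $A_\omega(x)=\{i : x_i=\omega\}$ for $\omega \in \Z_p$, pick $\omega$ by a $\Z_p$-equivariant tie-breaking rule, and, if $|A_\omega(x)| \geq k$, set $\lambda(x)=(\omega, c(S))$ for the lexicographically smallest $k$-subset $S \subseteq A_\omega(x)$. Otherwise, encode the size of the support in the second coordinate so as to rule out spurious alternating faces. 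A $p$-tuple of inputs witnessing an alternating face of $\lambda$ then supplies $p$ pairwise disjoint $k$-subsets that $c$ assigns the same color, i.e., a monochromatic hyperedge of $K^p(n,k)$.

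For the second reduction, Meunier's combinatorial proof of the $\Z_p$-Tucker lemma proceeds by counting alternating faces of an appropriate $\Z_p$-invariant simplicial complex modulo $p$ and using $\Z_p$-symmetry to conclude the count is nonzero. This argument translates into a polynomial-time reduction to the canonical $\PPAp$ problem of locating a vertex whose degree is not divisible by $p$ in a succinctly represented bipartite graph: the faces of the complex play the role of vertices, adjacency is defined by the incidence along the counting walk, and a distinguished fixed point of the $\Z_p$-action furnishes the required starting vertex with degree not a multiple of $p$.

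The main obstacle is to carry out the second reduction so that the boundary of the complex and the fixed-point structure of the $\Z_p$-action are handled without introducing spurious solutions, since careless choices of complex or adjacency easily create vertices of non-$p$-divisible degree that do not correspond to alternating faces. The requisite machinery is essentially already present in the proof of Filos-Ratsikas et al.~\cite{FHSZ21} that $\pConDiv \in \PPAp$, where an analogous construction is carried out on a triangulated sphere; the reduction from $\ZpTucker$ is the purely discrete analog and should not pose new conceptual difficulties.
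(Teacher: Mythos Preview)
Your plan coincides with the paper's: reduce $\KneserP^p$ to $\ZpTucker$ via Ziegler's construction, then use $\ZpTucker \in \PPAp$. For the second step the paper simply invokes~\cite[Theorem~5.2]{FHSZ21} (stated here as Theorem~\ref{thm:TuckerinPPAp}), so there is no need to reprove it.

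There is, however, a real issue in your sketch of the first reduction. If $\omega$ is selected \emph{before} testing whether $|A_\omega(x)|\geq k$ (the natural equivariant choice being the sign of the first nonzero coordinate), then the ``otherwise'' branch can be reached by vectors of arbitrarily large support: for instance, $x_1=\omega^1$ and $x_2=\cdots=x_n=\omega^2$ gives $|A_{\omega^1}(x)|=1<k$ but total support $n$. Encoding the support size via $\lceil |x|/(p-1)\rceil$ then consumes about $n/(p-1)$ labels on top of the $m$ color labels, exceeding the $\ZpTucker$ budget $s\leq\lfloor(n-1)/(p-1)\rfloor$, so the instance you produce is not valid. The paper, following Ziegler~\cite{Ziegler02}, instead branches on the \emph{total} support: if $|x|\leq p(k-1)$ it outputs $(\text{first nonzero sign},\,\lceil|x|/(p-1)\rceil)$; if $|x|>p(k-1)$, pigeonhole forces some sign class to contain a $k$-set, and one outputs $(\omega',\,c(B)+\text{offset})$ where $B$ is the minimal such $k$-set across \emph{all} sign classes and $\omega'$ is the sign of the class containing it. With this case split the label count fits exactly (after padding $n$ so that $p-1$ divides $n-1$), and the extraction of a monochromatic hyperedge from a $\ZpTucker$ solution chain then goes through as you outline.
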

In fact, we provide two results that strengthen Theorem~\ref{thm:IntroPPAp} in two incomparable forms.
Firstly, as before, we present a generalized result handling a variant of the $\KneserP^p$ problem of finding a monochromatic hyperedge in a hypergraph $K^p(\calF)$, given a coloring that uses fewer colors than \kriz's lower bound on its chromatic number. Again, $\calF$ is taken from a prescribed sequence of set families satisfying certain computational assumptions (see Definition~\ref{def:computable_seq_order} and Corollary~\ref{cor:Kneser^p_cd_inPPAp}).
Secondly, we show that the membership in $\PPAp$ holds for the $\KneserPAstab^p$ problem. This problem asks to find a monochromatic hyperedge in the sub-hypergraph of $K^p(n,k)$ induced by the $k$-subsets of $[n]$ that are {\em almost stable} (i.e., include no two consecutive elements, but can include both $1$ and $n$). The input coloring may use here any number of colors smaller than the chromatic number of this hypergraph, which, as proved in~\cite{Meunier11}, is equal to that of $K^p(n,k)$.

\begin{theorem}\label{thm:IntroPPApSch}
For every prime $p$, the $\KneserPAstab^p$ problem lies in $\PPAp$.
\end{theorem}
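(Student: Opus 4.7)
The plan is to reduce the $\KneserPAstab^p$ problem to the $\pConDiv$ problem, and then to invoke the $\PPAp$-membership of $\pConDiv$ proved in~\cite{FHSZ21}. A subtlety is that Corollary~\ref{cor:Kneser^p_cd_inPPAp}, which drives Theorem~\ref{thm:IntroPPAp}, cannot be applied in a black-box fashion: for the family $\calF$ of almost-stable $k$-subsets of $[n]$ one has $\cd_p(\calF)$ strictly smaller than $n-p(k-1)$, so \kriz's bound for $\calF$ falls short of Meunier's~\cite{Meunier11} chromatic number $\chi(K^p(n,k)) = \lceil (n-p(k-1))/(p-1) \rceil$. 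The reduction therefore needs to exploit the linear order on $[n]$ in a stronger way than the colorability defect alone allows, in the spirit of Meunier's $\ZpTucker$-based argument.

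Concretely, given an input coloring $c$ with $M < \chi(K^p(n,k))$ colors, I would place the elements $1,2,\ldots,n$ along $[0,1]$ in the natural order and, for each color $i \in [M]$, construct a normalized monotone valuation function $v_i$ on $[0,1]$ whose incremental structure at the position of element $j$ encodes the participation of $j$ in the color-$i$ almost-stable $k$-subsets (in the same spirit as the reductions behind Theorems~\ref{thm:IntroKneserToCH} and~\ref{thm:IntroKneserToCD}). The Consensus $p$-Division theorem then yields a partition of $[0,1]$ into $p$ pieces $A_1,\ldots,A_p$ using at most $(p-1)M \leq n - p(k-1) - 1$ cuts, splitting $[n]$ into at most $n-p(k-1)$ consecutive blocks distributed among the $p$ pieces. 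From each piece, one extracts $k$ elements that jointly form an almost-stable subset of $[n]$; the balance guarantees of the consensus division together with the block-count bound should force the $p$ extracted $k$-subsets to be pairwise disjoint and monochromatic, producing a valid hyperedge of the almost-stable sub-hypergraph of $K^p(n,k)$.

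The main obstacle is ensuring that each extracted $k$-subset is almost-stable, i.e., contains no two consecutive elements of $[n]$ other than possibly $\{1,n\}$. This is precisely the property that Kriz's defect-based bound fails to capture, and it must be enforced by aligning the discontinuities of the valuation functions $v_i$ with the linear positions of the elements, so that each consensus cut falls in a gap separating successive elements that are assigned to the same side of the partition. Because the endpoints of $[0,1]$ are not identified, only linear-order stability is forced, which matches the ``almost'' relaxation and is what separates this setting from the strictly cyclic $\SchrijverP$-style stability. Since $\pConDiv \in \PPAp$ by~\cite{FHSZ21} and the reduction can be implemented in polynomial time (the almost-stable $k$-subsets admit an efficient ranking), the theorem follows.
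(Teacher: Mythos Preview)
Your proposal takes a different route from the paper: the paper reduces $\KneserPAstab^p$ directly to the $\ZpTucker$ problem (Theorem~\ref{thm:Schrijver^p->Tucker}), adapting Meunier's argument via the alternation number $\alt(X)$, and then invokes Theorem~\ref{thm:TuckerinPPAp}. You instead attempt to go through $\pConDiv$, in the style of Section~\ref{sec:comput_proof}.

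There is a genuine gap in your argument. You correctly observe that the colorability-defect route fails for almost-stable families, but the replacement you sketch is not a proof. First, the valuation functions are never defined: ``incremental structure at the position of element $j$ encodes the participation of $j$ in the color-$i$ almost-stable $k$-subsets'' is not a specification, and nothing in the consensus-division framework forces cuts to ``fall in a gap separating successive elements''; cuts are chosen by the oracle, not by the design of the $v_i$. Second, even granting at most $(p-1)M$ cuts, the pieces $A_t$ are unions of consecutive blocks of $[n]$, and there is no mechanism in your outline that guarantees each $A_t$ contains an almost-stable $k$-subset of a common color. In the paper's $\ZpTucker$ reduction this is exactly what the inequality $\alt(X)>p(k-1)$ delivers: a long alternating subsequence forces some $X^t$ to contain $k$ pairwise non-adjacent indices. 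Your block-count bound gives no such conclusion; a piece could consist of a single long block, from which one can extract at most $\lceil |\text{block}|/2\rceil$ non-adjacent elements, with no control over their color.

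There is also an efficiency issue you wave away. The Section~\ref{sec:comput_proof} reductions compute $v_i$ via the subset-query circuit $S$; without it, evaluating ``does this set contain an almost-stable $k$-subset of color $i$'' would require scanning exponentially many candidate subsets. An efficient ranking of almost-stable sets does not help here. To salvage the $\PPAp$ membership you would need either to supply subset queries (changing the problem) or to bypass them entirely, which is precisely what the $\ZpTucker$ route accomplishes.
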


The proofs of Theorems~\ref{thm:IntroPPAp} and~\ref{thm:IntroPPApSch} crucially rely on the mathematical arguments of Ziegler~\cite{Ziegler02} and Meunier~\cite{Meunier11}, which imply the totality of the studied problems.
We verify that their arguments can be transformed into efficient reductions to a computational problem associated with a $\Z_p$-variant of Tucker's lemma, which was shown to lie in the complexity class $\PPAp$ by Filos-Ratsikas et al.~\cite{FHSZ21}.

Finally, we apply Theorems~\ref{thm:IntroPPAp} and~\ref{thm:IntroPPApSch} to derive limitations on the complexity of variants of the $\KneserP^r$ problem, restricted to colorings with a bounded number of colors (see Theorem~\ref{thm:limit_Kneser} and Corollary~\ref{cor:Stab-PPAp}). In particular, we show that the problem of finding a monochromatic edge in a graph $S(n,k)$ given a coloring of its vertices with $\lfloor n/2 \rfloor -2k+1$ colors, lies in the complexity class $\PPAthree$ (see Corollary~\ref{cor:SchriverPPA3}). As mentioned earlier, it was shown in~\cite{Haviv23} that the latter is efficiently reducible to the $\KneserP$ problem. It thus follows that, unless $\PPA \subseteq \PPAthree$, this reduction cannot yield a $\PPA$-hardness result for the $\KneserP$ problem. We note that it is common to believe that the classes $\PPAp$ for primes $p$ do not contain each other, and that an unconditional separation between their black-box versions was provided in~\cite{Hollender21,GoosKSZ20}.

\subsection{Outline}

The rest of the paper is organized as follows.
In Section~\ref{sec:preliminaries}, we gather some definitions that will be used throughout the paper.
In Section~\ref{sec:math_proof}, we present our novel proof of Theorem~\ref{thm:IntroKriz}.
In Section~\ref{sec:comput_proof}, we study this proof from a computational perspective and prove Theorems~\ref{thm:IntroKneserToCH} and~\ref{thm:IntroKneserToCD} in generalized forms.
Finally, in Section~\ref{sec:PPAp}, we prove a generalized form of Theorem~\ref{thm:IntroPPAp}. We further prove there Theorem~\ref{thm:IntroPPApSch} and obtain some limitations on the complexity of the $\KneserP^r$ problem.

\section{Preliminaries}\label{sec:preliminaries}

For an integer $n$, let $[n] = \{1,2,\ldots,n\}$.
Throughout the paper, we identify the subsets of $[n]$ with their characteristic vectors in $\{0,1\}^n$.
For integers $n$ and $k$, let $\binom{[n]}{k}$ denote the family of $k$-subsets of $[n]$.
A subset of $[n]$ is called {\em stable} if it does not include two consecutive elements modulo $n$, equivalently, it forms an independent set in the cycle on the vertex set $[n]$ with the natural order along the cycle. Let $\binom{[n]}{k}_\TwoStab$ denote the family of stable $k$-subsets of $[n]$.
A subset of $[n]$ is called {\em almost stable} if it does not include two consecutive elements, equivalently, it forms an independent set in the path on the vertex set $[n]$ with the natural order along the path. Let $\binom{[n]}{k}_{\AlmostTwoStab}$ denote the family of almost stable $k$-subsets of $[n]$. Note that $\binom{[n]}{k}_\TwoStab \subseteq \binom{[n]}{k}_{\AlmostTwoStab} \subseteq \binom{[n]}{k}$.

The family of Kneser hypergraphs is defined as follows.
\begin{definition}[Kneser Hypergraphs]
For an integer $r \geq 2$ and a set family $\calF$, the $r$-uniform Kneser hypergraph $K^r(\calF)$ is the hypergraph on the vertex set $\calF$, whose hyperedges are all the $r$-subsets of $\calF$ whose members are pairwise disjoint. For integers $n$ and $k$ with $n \geq r \cdot k$, let $K^r(n,k)$, $K^r(n,k)_{\AlmostTwoStab}$, and $K^r(n,k)_\TwoStab$, respectively, denote the hypergraphs $K^r(\binom{[n]}{k})$, $K^r(\binom{[n]}{k}_{\AlmostTwoStab})$, and $K^r(\binom{[n]}{k}_{\TwoStab})$.
When $r=2$, the superscript $r$ may be omitted.
\end{definition}

For an integer $t \geq 1$, a hypergraph $H = (V,E)$ is said to be {\em $t$-colorable} if it admits a proper $t$-coloring, that is, a coloring of the vertices of $V$ with $t$ colors such that no hyperedge of $E$ is monochromatic. The {\em chromatic number} of $H$, denoted by $\chi(H)$, is the smallest integer $t$ for which $H$ is $t$-colorable.
It is known (see~\cite{AlonFL86,Meunier11,Frick20}) that for all integers $r \geq 2$, $k$, and $n \geq r \cdot k$, it holds that
\begin{eqnarray}\label{eq:Chromatic}
\chi(K^r(n,k)) = \chi(K^r(n,k)_{\AlmostTwoStab}) = \chi(K^r(n,k)_\TwoStab) = \Big \lceil \frac{n-r(k-1)}{r-1} \Big \rceil.
\end{eqnarray}

Theorem~\ref{thm:IntroKriz}, proved by \kriz~\cite{Kriz92}, relates the chromatic number of Kneser hypergraphs to the notion of {\em colorability defect}, defined as follows.

\begin{definition}[Colorability Defect]\label{def:color_defect}
For an integer $r \geq 2$ and a family $\calF$ of non-empty subsets of a set $X$, the {\em $r$-colorability defect} of $\calF$, denoted by $\cd_r(\calF)$, is the smallest size of a set $Y \subseteq X$, such that the sub-hypergraph of $(X,\calF)$ induced by $X \setminus Y$ is $r$-colorable, that is,
\[ {\cd}_r(\calF) = \min \big \{ |Y| ~\mid~ (X \setminus Y, \{e \in \calF \mid e \cap Y = \emptyset \}) \mbox{~is~$r$-colorable} \big \}.\]
\end{definition}

The following lemma, given in~\cite[Lemma~3.2]{Ziegler02}, determines the $r$-colorability defect of $\binom{[n]}{k}$.
\begin{lemma}[{\cite{Ziegler02}}]\label{lemma:cd_Kneser}
For integers $n, k, r$ with $n \geq r \cdot k$, it holds that $\cd_r(\binom{[n]}{k}) = n-r(k-1)$.
\end{lemma}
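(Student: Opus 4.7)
The plan is to prove equality by establishing matching upper and lower bounds on $\cd_r(\binom{[n]}{k})$, both of which hinge on a single observation: a proper $r$-coloring of the sub-hypergraph induced on $[n] \setminus Y$ (whose hyperedges are the $k$-subsets of $[n]\setminus Y$) is possible if and only if every color class has size strictly less than $k$. Indeed, any color class of size $\geq k$ contains some $k$-subset, which is a monochromatic hyperedge; conversely, if every color class has size at most $k-1$, then no $k$-subset can be monochromatic, as a monochromatic set must be contained in a single color class.

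First I would establish the upper bound $\cd_r(\binom{[n]}{k}) \leq n - r(k-1)$. Pick any $Y \subseteq [n]$ of size $n - r(k-1)$, so $|[n] \setminus Y| = r(k-1)$. Partition $[n] \setminus Y$ arbitrarily into $r$ blocks of size $k-1$ each and color each block with a distinct color. By the observation above, this gives a proper $r$-coloring of the induced sub-hypergraph, witnessing that this $Y$ satisfies the condition in Definition~\ref{def:color_defect}.

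Next I would establish the lower bound $\cd_r(\binom{[n]}{k}) \geq n - r(k-1)$. Suppose $Y \subseteq [n]$ is such that the induced sub-hypergraph admits a proper $r$-coloring. By the observation, each of the $r$ color classes has size at most $k-1$, so $|[n] \setminus Y| \leq r(k-1)$, which gives $|Y| \geq n - r(k-1)$. Taking the minimum over all valid $Y$ yields the desired bound.

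There is no real obstacle here; the argument is essentially a pigeonhole computation once one recognizes that the only hyperedges to avoid being monochromatic are the $k$-subsets and that each color class contains such a subset precisely when its cardinality reaches $k$. Combining the two bounds gives $\cd_r(\binom{[n]}{k}) = n - r(k-1)$, and the assumption $n \geq r \cdot k$ merely ensures that this quantity is non-negative (in fact at least $r$), so the statement is non-vacuous.
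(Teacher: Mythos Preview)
Your proof is correct. The paper itself does not prove this lemma; it merely cites it from~\cite{Ziegler02}. Your argument is the standard one: a partition of $[n]\setminus Y$ into $r$ classes avoids a monochromatic $k$-subset precisely when every class has fewer than $k$ elements, and the two bounds follow immediately by pigeonhole.
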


We next state the Consensus Division theorem due to Filos-Ratsikas et al.~\cite[Theorem~6.5]{FHSZ21}.
In what follows, let $\calB([0,1])$ denote the set of Lebesgue-measurable subsets of the interval $[0,1]$, and let $\mu: \calB([0,1]) \rightarrow [0,1]$ denote the Lebesgue measure on $[0,1]$. In addition, let $\triangle$ stand for the symmetric difference of sets, defined by $E_1 \triangle E_2 = (E_1 \setminus E_2) \cup (E_2 \setminus E_1)$.

\begin{theorem}[Consensus Division Theorem~{\cite{FHSZ21}}]\label{thm:ConDiv}
Let $p$ be a prime, and let $m \geq 1$ be an integer.
Let $v_1, \ldots, v_m : \calB([0,1]) \rightarrow \R$ be functions such that for each $i \in [m]$, $v_i$ satisfies the following continuity condition: for any $\eps >0$ there exists $\delta >0$ such that for all $E_1,E_2 \in \calB([0,1])$ with $\mu(E_1 \triangle E_2) \leq \delta$, it holds that $|v_i(E_1)-v_i(E_2)| \leq \eps$.
Then, there exists a consensus-$p$-division, that is, it is possible to partition the unit interval into $p$ (not necessarily connected) pieces $A_1, \ldots, A_p$ using at most $(p-1) \cdot m$ cuts, such that $v_i(A_{t}) = v_i(A_{t'})$ for all $i \in [m]$ and $t,t' \in [p]$.
\end{theorem}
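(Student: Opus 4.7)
My plan is to prove Theorem~\ref{thm:ConDiv} by the standard configuration-space/test-map paradigm, reducing the existence of a consensus-$p$-division to the \barany--Shlosman--Sz{\H{u}}cs theorem. I would first parameterize the partitions of $[0,1]$ into $p$ pieces using at most $(p-1) m$ cuts by the $n$-fold join $X = (\Z_p)^{*n}$ with $n = (p-1) m + 1$. A point of $X$ is recorded as $\sum_{i=1}^{n} \lambda_i \cdot t_i$ with $\lambda_i \geq 0$, $\sum_i \lambda_i = 1$, $t_i \in \Z_p$, subject to the join identification that coordinates with $\lambda_i = 0$ may be discarded. Such a point encodes a partition $(A_0, \ldots, A_{p-1})$ of $[0,1]$ by placing the cuts at the prefix sums of the $\lambda_i$'s and assigning the $i$-th resulting subinterval to $A_{t_i}$. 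The join identifications collapse exactly the degenerate configurations in which a subinterval has zero length, so the resulting map $X \to \calB([0,1])^p$ is well defined and continuous when $\calB([0,1])$ is given the pseudo-metric induced by $\mu(\cdot \triangle \cdot)$.

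Next, I would endow $X$ with the free $\Z_p$-action $\sum \lambda_i \cdot t_i \mapsto \sum \lambda_i \cdot (t_i+1)$, which in the parameterization simply relabels $A_t \mapsto A_{t-1}$. Let $W_p = \{y \in \R^p : \sum_{t \in \Z_p} y_t = 0\}$ denote the reduced regular representation of $\Z_p$, and define the test map $\Phi : X \to W_p^m$ coordinate-wise by
\[ \Phi_j(x) = \Big( v_j(A_t(x)) - \frac{1}{p} \sum_{s \in \Z_p} v_j(A_s(x)) \Big)_{t \in \Z_p}, \quad j \in [m]. \]
The continuity hypothesis on each $v_j$ in the $\mu(\cdot \triangle \cdot)$-sense, combined with the previous step, makes $\Phi$ continuous; its $\Z_p$-equivariance is immediate. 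A zero of $\Phi$ is exactly a partition on which every $v_j$ is balanced across $A_0, \ldots, A_{p-1}$, that is, a consensus-$p$-division.

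To produce such a zero I would invoke the \barany--Shlosman--Sz{\H{u}}cs theorem. The space $X$ is $(n-2)$-connected and carries a free $\Z_p$-action, while $W_p^m \setminus \{0\}$ is $\Z_p$-equivariantly homotopy equivalent to the sphere $S(W_p^m)$ of dimension $(p-1) m - 1 = n - 2$. Since $p$ is prime, the $\Z_p$-action on $S(W_p^m)$ is free: the only vectors in $\R^p$ fixed by a nontrivial cyclic shift are the constant ones, which lie outside $W_p$. The theorem then forbids a $\Z_p$-equivariant continuous map from $X$ to $S(W_p^m)$, and therefore $\Phi$ must vanish somewhere in $X$, yielding the desired partition together with its at most $n-1 = (p-1) m$ cuts.

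The step I expect to require the most care is the construction in the first paragraph: checking that the partition-valued map from $X$ is continuous in the $\mu(\cdot \triangle \cdot)$-pseudo-metric and that it descends through the join identifications. Small perturbations of the $\lambda_i$'s alter each $A_t$ only by a set of small Lebesgue measure, and configurations with $\lambda_i = 0$ produce partitions that do not depend on $t_i$; verifying both statements is what ties the combinatorial $\Z_p$-action on $X$ to the analytic hypothesis on the $v_j$. Once this is in place, the equivariance of $\Phi$ and the invocation of the \barany--Shlosman--Sz{\H{u}}cs theorem proceed routinely.
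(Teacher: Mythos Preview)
The paper does not prove Theorem~\ref{thm:ConDiv}: it is merely quoted from~\cite[Theorem~6.5]{FHSZ21} and used as a black box in Section~\ref{sec:math_proof}. There is therefore no ``paper's own proof'' to compare your proposal against.

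That said, your proposal is the standard configuration-space/test-map argument and is essentially how the result is established in the literature (going back to Alon's necklace-splitting proof~\cite{Alon87Necklace} for measures and to~\cite{SimmonsS03,FHSZ21} for general continuous valuations). The parameterization of partitions by $(\Z_p)^{*n}$ with $n=(p-1)m+1$, the reduced-regular-representation target $W_p^m$, the equivariance check, and the appeal to the \barany--Shlosman--Sz{\H{u}}cs/Dold theorem are all correctly set up, and the connectivity and dimension counts match ($(\Z_p)^{*n}$ is $(n-2)$-connected, while $S(W_p^m)$ has dimension $(p-1)m-1=n-2$). Your identification of the delicate step---continuity of the partition-valued map through the join identifications in the $\mu(\cdot\,\triangle\,\cdot)$ pseudo-metric---is accurate; this is exactly what the continuity hypothesis on the $v_j$ is designed for. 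So the proposal is sound, just not something the present paper undertakes.
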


As for the continuity property, we will sometimes consider the stronger notion of Lipschitz-continuity.
For $L \geq 0$, a function $v: \calB([0,1]) \rightarrow \R$ is said to be {\em $L$-Lipschitz-continuous}, if for all $E_1,E_2 \in \calB([0,1])$, it holds that $|v(E_1)-v(E_2)| \leq L \cdot \mu(E_1 \triangle E_2)$.

\section{The Chromatic Number of Kneser Hypergraphs}\label{sec:math_proof}

In this section, we present our new proof of \kriz's lower bound on the chromatic number of Kneser hypergraphs~\cite{Kriz92}, stated as Theorem~\ref{thm:IntroKriz}.
The proof directly applies the Consensus Division theorem, given in Theorem~\ref{thm:ConDiv}, and thus bypasses the topological notions used in~\cite{Kriz92} (see also~\cite{Matousek00,Ziegler02}).
Note that we focus here on the mathematical proof rather than on its computational aspects, which will be explored in the next section.
As is usual for results of this type, we first consider the case where the uniformity $r$ of the hypergraphs is a prime number.

\begin{theorem}\label{thm:KneserChrom_p}
For every prime $p$ and for every family $\calF$ of non-empty sets,
\[ \chi(K^p(\calF)) \geq \Big \lceil \frac{\cd_p(\calF)}{p-1} \Big \rceil.\]
\end{theorem}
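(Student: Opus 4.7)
The plan is to establish the contrapositive bound $\cd_p(\calF) \leq t(p-1)$ whenever $K^p(\calF)$ admits a proper $t$-coloring $c : \calF \to [t]$; since the chromatic number is an integer, this immediately yields $\chi(K^p(\calF)) \geq \lceil \cd_p(\calF)/(p-1) \rceil$. Writing $C_i = c^{-1}(i)$, I will enumerate the ground set $X = \{x_1,\ldots,x_n\}$ and identify each element $x_j$ with the subinterval $I_j = [(j-1)/n,\, j/n] \subseteq [0,1]$. The goal is then to exhibit $Y \subseteq X$ with $|Y| \leq t(p-1)$ together with a proper $p$-coloring of the sub-hypergraph on $X \setminus Y$ whose edges are the members of $\calF$ disjoint from $Y$.

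The core of the argument is to feed Theorem~\ref{thm:ConDiv} the following functions. For each color $i \in [t]$, I will set
\[
v_i(A) \;=\; \max_{e \in C_i}\, \min_{x_j \in e}\, \frac{\mu(A \cap I_j)}{\mu(I_j)} \;\in\; [0,1].
\]
Since each scaled measure $A \mapsto \mu(A \cap I_j)/\mu(I_j) = n\cdot \mu(A \cap I_j)$ is $n$-Lipschitz in the symmetric-difference pseudometric, and finite $\min$'s and $\max$'s preserve the Lipschitz constant, every $v_i$ satisfies the continuity condition demanded by Theorem~\ref{thm:ConDiv}. The decisive feature of this construction is that $v_i(A) = 1$ precisely when some $e \in C_i$ satisfies $I_j \subseteq A$ (up to a null set) for every $x_j \in e$. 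Applying the Consensus Division theorem to $v_1,\ldots,v_t$ produces a partition of $[0,1]$ into $p$ pieces $A_1,\ldots,A_p$ using at most $t(p-1)$ cuts and satisfying $v_i(A_s) = v_i(A_{s'})$ for every $i \in [t]$ and $s, s' \in [p]$.

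Next I will let $Y$ consist of those $x_j$ whose interval $I_j$ is split by some cut, so $|Y| \leq t(p-1)$, and define $\psi : X \setminus Y \to [p]$ by $\psi(x_j) = s$ when $I_j \subseteq A_s$; this makes sense because the cuts partition $[0,1]$ into labeled sub-intervals and every uncut $I_j$ lies entirely inside one of them. To finish, it suffices to prove that $\psi$ is a proper $p$-coloring of the sub-hypergraph induced on $X \setminus Y$. Suppose for contradiction that some $e \in \calF$ with $e \cap Y = \emptyset$ is monochromatic of color $s$ under $\psi$; then $I_j \subseteq A_s$ for every $x_j \in e$, forcing $v_{c(e)}(A_s) = 1$. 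By the consensus-division equality, $v_{c(e)}(A_{s'}) = 1$ for every $s' \in [p]$, which furnishes some $e_{s'} \in C_{c(e)}$ with $I_j \subseteq A_{s'}$ (up to a null set) for every $x_j \in e_{s'}$; taking $e_s = e$, the $p$ sets $e_1,\ldots,e_p$ must be pairwise disjoint and distinct, because the pieces $A_{s'}$ are pairwise disjoint whereas each $I_j$ has positive measure. This contradicts the properness of $c$, which forbids $p$ pairwise disjoint edges within a single color class of $K^p(\calF)$.

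The main difficulty I anticipate is identifying the right functions $v_i$: the outer maximum encodes an existential quantifier (``some edge of color $i$ fits into $A$''), while the inner minimum encodes a universal quantifier (``every element of that edge lies in $A$''). The interaction of these is precisely what Consensus Division can exploit, since merely equalizing the $v_i$'s across the $p$ pieces promotes a single fully-contained edge in one piece into $p$ fully-contained edges spread across all pieces, which is exactly the configuration outlawed by the properness of $c$. Once the functions are in place, verifying the Lipschitz continuity and the combinatorial accounting of cuts versus split elements should be routine.
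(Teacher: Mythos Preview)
Your proof is correct and follows essentially the same approach as the paper: the valuation functions you define via the max--min formula $v_i(A) = \max_{e \in C_i} \min_{x_j \in e} n\,\mu(A \cap I_j)$ coincide exactly with the functions the paper constructs (there described more elaborately as the largest threshold $a$ for which $\{j : x^A_j \ge a\}$ contains an $i$-colored member of $\calF$), and the subsequent application of Consensus Division, the definition of $Y$ via the cut intervals, and the disjointness argument are identical. The only cosmetic difference is that you phrase the conclusion as the contrapositive bound $\cd_p(\calF) \le t(p-1)$ rather than as a direct contradiction from $|Y| < \cd_p(\calF)$.
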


\begin{proof}
Let $p$ be a prime, and let $\calF$ be a family of non-empty sets. Assume without loss of generality that all members of $\calF$ are subsets of $[n]$ for some integer $n$.
Let $m$ be an integer satisfying $m < \frac{\cd_p(\calF)}{p-1}$, and suppose for the sake of contradiction that there exists a proper coloring $c: \calF \rightarrow [m]$ of the $p$-uniform Kneser hypergraph $K^p(\calF)$.

For each $i \in [m]$, let $\widetilde{u}_i:\{0,1\}^n \rightarrow \{0,1\}$ denote the indicator function that determines for every subset of $[n]$ whether it contains a set of $\calF$ whose color according to $c$ is $i$, that is, for every $D \subseteq [n]$, define $\widetilde{u}_i(D) = 1$ if there exists a set $B \in \calF$ satisfying $B \subseteq D$ and $c(B)=i$, and define $\widetilde{u}_i(D)=0$ otherwise.
Recall that we identify the subsets of $[n]$ with their characteristic vectors in $\{0,1\}^n$.
Note that the function $\widetilde{u}_i$ is monotone with respect to inclusion.

For each $i \in [m]$, consider the extension $u_i : [0,1]^n \rightarrow [0,1]$ of $\widetilde{u}_i$ that maps any vector $x \in [0,1]^n$ to the largest value $a \in [0,1]$ such that the set $\{ j \in [n] \mid  x_j \geq a\}$ contains a set of $\calF$ colored $i$ by $c$ if such a value exists, and to $0$ otherwise.
Equivalently, for any $x \in [0,1]^n$, let $\pi$ be a permutation of $[n]$ with $x_{\pi(1)} \leq x_{\pi(2)} \leq \cdots \leq x_{\pi(n)}$, and define
$u_i(x) = x_{\pi(j)}$ for the largest integer $j$ satisfying $\widetilde{u}_i(\{\pi(j), \pi(j+1),\ldots,\pi(n)\})=1$ if such a $j$ exists, and $u_i(x) = 0$ otherwise.
Observe that, under the convention $x_{\pi(0)}=0$, the value of $u_i(x)$ can be written as
\[u_i(x) = \sum_{j=1}^{n}{(x_{\pi(j)} - x_{\pi(j-1)}) \cdot \widetilde{u}_i(\{\pi(j), \pi(j+1),\ldots,\pi(n)\})}.\]
Notice that the function $u_i$ is an extension of the function $\widetilde{u}_i$.
Notice further that $u_i$ is monotone, in the sense that the value of $u_i(x)$ does not decrease when the value of some entry of $x$ increases.
Finally, notice that changing some entry of $x$ by $\eps$ changes $u_i(x)$ by at most $\eps$, and observe that this implies that the function $u_i$ is continuous.

Now, for each $j \in [n]$, consider the open sub-interval $I_j = (\frac{j-1}{n},\frac{j}{n})$, and associate it with the element $j$.
For each $i \in [m]$, let $v_i : \calB([0,1]) \rightarrow [0,1]$ be the function defined as follows.
For every $E \in \calB([0,1])$, let $x^E \in [0,1]^n$ denote the vector that consists of the normalized Lebesgue measures of $E$ on the sub-intervals $I_1, \ldots, I_n$, that is, $x^E_j = n \cdot \mu(E \cap I_j)$ for all $j \in [n]$, and define $v_i(E) = u_i(x^E)$.
Note that the function $v_i$ is the composition of the function $u_i$ with the function that maps any set $E \in \calB([0,1])$ to the vector $x^E$.
Since the function $u_i$ is monotone and continuous, it follows that so is $v_i$

Applying Theorem~\ref{thm:ConDiv}, we obtain that there exists a consensus-$p$-division of $v_1, \ldots, v_m$, that is, a partition of the unit interval into $p$ pieces $A_1, \ldots, A_p$ using at most $(p-1) \cdot m$ cuts, such that $v_i(A_{t}) = v_i(A_{t'})$ for all $i \in [m]$ and $t,t' \in [p]$.
Let $Y$ denote the set of indices $j \in [n]$ for which the open sub-interval $I_j$ includes a cut, and notice that $|Y| \leq (p-1) \cdot m < \cd_p(\calF)$.
Thus, every sub-interval $I_j$ with $j\in [n] \setminus Y$ is fully contained in one of the pieces $A_1, \ldots, A_p$.
Consider the coloring that assigns to every $j \in [n]\setminus Y$ the index $t \in [p]$ of the piece $A_t$ that contains $I_j$.
By $|Y| < \cd_p(\calF)$, the hypergraph $([n] \setminus Y,\{e \in \calF \mid e \cap Y = \emptyset\})$ is not $p$-colorable.
This implies that for some $t \in [p]$, there exists a set $B \in \calF$ all of whose elements share the color $t$, and thus, for all $j \in B$, it holds that $I_j \subseteq A_t$. Denoting by $\ell = c(B)$ the color assigned to $B$ by the given coloring $c$, it follows from the definition of $v_\ell$ that $v_\ell(\cup_{j \in B}{I_j}) = 1$.
By monotonicity, it further follows that $v_{\ell}(A_t)=1$, which yields, by Theorem~\ref{thm:ConDiv}, that $v_{\ell}(A_{t'}) = 1$ for all $t' \in [p]$.

Finally, for every $t' \in [p]$, the fact that $v_{\ell}(A_{t'}) = 1$ implies that there exists a set $B_{t'} \in \calF$ with $c(B_{t'}) = \ell$ such that $\mu(A_{t'} \cap I_j)=\tfrac{1}{n}$ for all $j \in B_{t'}$. Since the pieces $A_1, \ldots, A_p$ are pairwise disjoint, so are the sets $B_1, \ldots, B_p$. It thus follows that these sets form a monochromatic hyperedge in $K^p(\calF)$, in contradiction to the assumption that the coloring $c$ is proper. This completes the proof.
\end{proof}

It is well known that Theorem~\ref{thm:KneserChrom_p} implies Theorem~\ref{thm:IntroKriz} (see, e.g.,~\cite{Ziegler02}).
We provide the quick proof for completeness.
\begin{proof}[ of Theorem~\ref{thm:IntroKriz}]
Theorem~\ref{thm:KneserChrom_p} shows that Theorem~\ref{thm:IntroKriz} holds whenever $r$ is prime.
It therefore suffices to prove that for every pair of integers $r_1, r_2 \geq 2$, if the theorem holds for $r \in \{r_1, r_2\}$ then it holds for $r = r_1 r_2$. So suppose that it holds for $r_1$ and $r_2$. Let $\calF$ be a family of non-empty subsets of $[n]$ for an integer $n$, and let $c: \calF \rightarrow [m]$ be  a proper coloring of the hypergraph $K^{r_1 r_2}(\calF)$ for an integer $m$. Our goal is to prove that $m \geq \frac{\cd_{r_1 r_2}(\calF)}{r_1 r_2 -1}$.

Consider the family
\[ \calG = \Big \{ G \subseteq [n] ~\Big | ~{\cd}_{r_1}(\calF|_G) > m(r_1-1) \Big \},\]
where $\calF|_G = \{ F \in \calF \mid F \subseteq G \}$.
Define a coloring $c': \calG \rightarrow [m]$ as follows.
For every $G \in \calG$, let $c'(G)$ be a color of some monochromatic hyperedge in $K^{r_1}(\calF|_G)$ with respect to the coloring $c$. The existence of such a hyperedge for every $G \in \calG$ follows from our assumption that the theorem holds for $r_1$, which yields that $\chi(K^{r_1}(\calF|_G)) \geq \frac{\cd_{r_1}(\calF|_G)}{r_1-1}>m$, where the second inequality is due to the definition of $\calG$. We claim that $c'$ is a proper coloring of $K^{r_2}(\calG)$, and thus $\chi (K^{r_2}(\calG)) \leq m$. Indeed, otherwise there would exist $r_2$ pairwise disjoint sets $G_1, \ldots, G_{r_2} \in \calG$ that are assigned the same color by $c'$. Since each set $G_i$ with $i \in [r_2]$ contains $r_1$ pairwise disjoint sets of $\calF|_{G_i}$ that are assigned by $c$ the color $c'(G_i)$, this gives us $r_1 r_2$ pairwise disjoint sets of $\calF$ with the same color by $c$, contradicting the assumption that $c$ is a proper coloring of $K^{r_1 r_2}(\calF)$.

Finally, by the assumption that the theorem holds for $r_2$, we have $\chi(K^{r_2}(\calG)) \geq \frac{\cd_{r_2}(\calG)}{r_2-1}$, which implies that $\cd_{r_2}(\calG) \leq m(r_2-1)$. This means that it is possible to remove at most $m(r_2-1)$ of the elements of $[n]$ and to partition the remaining elements into $r_2$ sets $F_1, \ldots, F_{r_2}$, such that no set of $\calG$ is contained in any of them. In particular, for each $i \in [r_2]$, it holds that $F_i \notin \calG$, which implies that $\cd_{r_1}(\calF|_{F_i}) \leq m(r_1-1)$.
This means that it is possible to remove at most $m(r_1-1)$ elements from each $F_i$ and to partition the remaining elements into $r_1$ sets, such that no set of $\calF|_{F_i}$ is contained in any of them. It thus follows that one can remove at most $m(r_2-1) + r_2 \cdot m(r_1-1) = m(r_1 r_2-1)$ elements from $[n]$ and partition the remaining elements into $r_1 r_2$ sets, such that no set of $\calF$ is contained in any of them. This implies that $\cd_{r_1 r_2}(\calF) \leq m(r_1 r_2-1)$, providing the desired lower bound on $m$.
\end{proof}

We conclude this section by verifying that $\chi(K^r(n,k)) = \lceil \frac{n-r(k-1)}{r-1} \rceil$ for all $r \geq 2$.
The lower bound on $\chi(K^r(n,k))$, originally proved in~\cite{AlonFL86}, follows by combining Theorem~\ref{thm:IntroKriz} with Lemma~\ref{lemma:cd_Kneser}.
For the upper bound, which is given in~\cite{Erdos76}, put $t = \lceil \frac{n-r(k-1)}{r-1} \rceil$, let $X_1, \ldots, X_{t-1}$ be $t-1$ pairwise disjoint $(r-1)$-subsets of $[n]$, and observe that the number of elements of $[n]$ that do not belong to any of these sets is $n-(t-1)(r-1) \leq rk-1$. Consider the coloring that assigns to every $k$-subset $B$ of $[n]$ a color $i \in [t-1]$ such that $B \cap X_i \neq \emptyset$ if such an $i$ exists, and the color $t$ otherwise. It can be easily checked that this coloring of $K^r(n,k)$ is proper.

\section{Reduction from \texorpdfstring{$\KneserP^p$}{Kneser-p} to Approximate \texorpdfstring{$\pConDiv$}{Con-p-Div}}\label{sec:comput_proof}

In this section, we use the proof technique presented in Section~\ref{sec:math_proof} to obtain efficient reductions from the computational problems associated with Kneser hypergraphs to those associated with approximate consensus division. We start by formally introducing the involved computational problems and then present the reductions.

\subsection{The \texorpdfstring{$\KneserP^r$}{Kneser-r} Problem}\label{subsec:Kneser}

In order to make our results as strong as possible, we introduce a problem of finding monochromatic hyperedges in general Kneser hypergraphs, defined as follows.

\begin{tcolorbox}
\begin{definition}[The $\KneserP^r(\calF,m)$ Problem]\label{def:KneserProblem}
For a set $\calA$, let $\calF = (\calF_\alpha)_{\alpha \in \calA}$ be a sequence of set families, where for each $\alpha \in \calA$, $\calF_\alpha$ is a family of non-empty subsets of $[n_\alpha]$ for some integer $n_\alpha$, and let $m:\calA \rightarrow \N$ be a function.
In the $\KneserP^r(\calF,m)$ problem, the input consists of
\begin{itemize}\setlength\itemsep{-0.2em}
  \item an element $\alpha \in \calA$ and
  \item a Boolean circuit $C:\{0,1\}^{n_\alpha} \rightarrow [m(\alpha)]$ that represents a coloring $c:\calF_\alpha \rightarrow [m(\alpha)]$ of the sets of $\calF_\alpha$ with $m(\alpha)$ colors, in the sense that for every $B \in \calF_\alpha$, it holds that $C(B) = c(B)$.
\end{itemize}
The goal is to find a monochromatic hyperedge in $K^r(\calF_\alpha)$, that is, $r$ pairwise disjoint sets $B_1 , B_2,\ldots, B_r \in \calF_\alpha$ such that $C(B_1) = C(B_2) = \cdots = C(B_r)$.
\end{definition}
\end{tcolorbox}
\noindent
Note that for every sequence $\calF = (\calF_\alpha)_{\alpha \in \calA}$ and for every function $m: \calA \rightarrow \N$ for which it holds that $m(\alpha) < \chi(K^r(\calF_\alpha))$ for all $\alpha \in \calA$, the $\KneserP^r(\calF,m)$ problem is total.
In particular, by Theorem~\ref{thm:IntroKriz}, the $\KneserP^r(\calF,\lfloor \frac{{\cd}_r(\calF_\alpha)-1}{r-1} \rfloor )$ problem is total, where the latter is an abbreviation for the $\KneserP^r(\calF,m)$ problem with the function $m$ defined by $m(\alpha)=\lfloor \frac{{\cd}_r(\calF_\alpha)-1}{r-1} \rfloor$.

We next define a variant of the $\KneserP^r(\calF,m)$ problem with an extended access to the input coloring, referred to as {\em subset queries}. Intuitively, subset queries facilitate transforming our proof of Theorem~\ref{thm:KneserChrom_p} into an efficient reduction, as the proof applies the Consensus Division theorem to functions that determine whether a given set contains a {\em subset} assigned a specified color.

\begin{tcolorbox}
\begin{definition}[The $\KneserP^r(\calF,m)$ Problem with Subset Queries]\label{def:KneserProblemSubset}
For a set $\calA$, let $\calF = (\calF_\alpha)_{\alpha \in \calA}$ be a sequence of set families, where for each $\alpha \in \calA$, $\calF_\alpha$ is a family of non-empty subsets of $[n_\alpha]$ for some integer $n_\alpha$, and let $m:\calA \rightarrow \N$ be a function.
In the $\KneserP^r(\calF,m)$ problem {\em with subset queries}, the input consists of
\begin{itemize}\setlength\itemsep{-0.2em}
  \item an element $\alpha \in \calA$,
  \item a Boolean circuit $C:\{0,1\}^{n_\alpha} \rightarrow [m(\alpha)]$ that represents a coloring $c:\calF_\alpha \rightarrow [m(\alpha)]$ of the sets of $\calF_\alpha$ with $m(\alpha)$ colors, in the sense that for every $B \in \calF_\alpha$, it holds that $C(B) = c(B)$, and
  \item a Boolean circuit $S:\{0,1\}^{n_\alpha} \times [m(\alpha)] \rightarrow \{0,1\}$ that is supposed to allow subset queries to the coloring $c$, namely, for every set $D \subseteq [n_\alpha]$ and a color $i \in [m(\alpha)]$, it is supposed to satisfy $S(D,i)=1$ if there exists a set $B \in \calF_\alpha$ such that $B \subseteq D$ and $c(B)=i$, and $S(D,i)=0$ otherwise.
\end{itemize}
The goal is to find either a monochromatic hyperedge in $K^r(\calF_\alpha)$ or a violation of the circuit $S$, namely,
\begin{itemize}\setlength\itemsep{-0.2em}
  \item (false negative) two sets $B,D \subseteq [n_\alpha]$ and a color $i \in [m(\alpha)]$ such that $B \in \calF_\alpha$, $B \subseteq D$, $C(B)=i$, and yet $S(D,i)=0$, or
  \item (false positive) a set $D \subseteq [n_\alpha]$ and a color $i \in [m(\alpha)]$ such that $S(D,i)=1$ whereas for every set $D'$ obtained from $D$ by removing a single element it holds that $S(D',i)=0$, and in addition, either $D \notin \calF_\alpha$ or $D \in \calF_\alpha$ and $C(D) \neq i$.
\end{itemize}
\end{definition}
\end{tcolorbox}
\noindent
Note that by allowing the solutions of the $\KneserP^r(\calF,m)$ problem with subset queries to form violations of the circuit $S$, we obtain a non-promise problem.

We will be particularly interested in the following special cases of Definitions~\ref{def:KneserProblem} and~\ref{def:KneserProblemSubset}.
For any integer $r \geq 2$, let $\calA^{(r)}$ denote the set of all pairs of integers $(n,k)$ with $n \geq r \cdot k$, and let $\calF^{(r)}$, $\calF^{(r,\mathrm{\TwoStab})}$, and $\calF^{(r,\AlmostTwoStab)}$ denote, respectively, the sequences of set families defined by
\[\calF^{(r)}_{(n,k)} = \binom{[n]}{k},~~~~~\calF^{(r,\TwoStab)}_{(n,k)} = \binom{[n]}{k}_{\TwoStab},~~~~~\calF^{(r,\AlmostTwoStab)}_{(n,k)} = \binom{[n]}{k}_{\AlmostTwoStab}\]
for all $(n,k) \in \calA^{(r)}$. Consider the function $m^{(r)} : \calA^{(r)} \rightarrow \N$ defined by $m^{(r)}(n,k) = \lfloor \frac{n-r(k-1)-1}{r-1} \rfloor$, which by~\eqref{eq:Chromatic}, satisfies $m^{(r)}(n,k) = \chi ( K^r(n,k))-1$. We define the three problems $\KneserP^r$, $\KneserP^r_{\TwoStab}$, and $\KneserP^r_{\AlmostTwoStab}$, respectively, to be $\KneserP^r(\calF^{(r)},m^{(r)})$, $\KneserP^r(\calF^{(r,\TwoStab)},m^{(r)})$, and $\KneserP^r(\calF^{(r,\AlmostTwoStab)},m^{(r)})$.
For any function $m:\calA^{(r)} \rightarrow \N$, we let $\KneserPstab^r(n,k,m)$ denote the $\KneserP^r(\calF^{(r,\TwoStab)},m)$ problem.
When $r=2$, the superscript $r$ may be omitted.
For any function $m:\calA^{(2)} \rightarrow \N$, we let $\KneserP(n,k,m)$ and $\SchrijverP(n,k,m)$ denote, respectively, the problems $\KneserP(\calF^{(2)},m)$ and $\KneserP(\calF^{(2,\TwoStab)},m)$.

We will use the notion of polynomially computable sequences of set families, defined below.
\begin{definition}\label{def:computable_seq}
For a set $\calA$, let $\calF = (\calF_\alpha)_{\alpha \in \calA}$ be a sequence of set families, where for each $\alpha \in \calA$, $\calF_\alpha$ is a family of non-empty subsets of $[n_\alpha]$ for some integer $n_\alpha$.
The sequence $\calF$ is {\em polynomially computable} if there exist polynomials $q_1,q_2$ such that
\begin{enumerate}
  \item\label{itm:weak1} there exists an algorithm that given an element $\alpha \in \calA$ and a set $B \subseteq [n_\alpha]$ runs in time $q_1(n_\alpha)$ and determines whether $B \in \calF_\alpha$, and
  \item\label{itm:weak2} there exists an algorithm that given an element $\alpha \in \calA$ and a set $D \subseteq [n_\alpha]$ runs in time $q_2(n_\alpha)$, returns a subset of $D$ that belongs to the family $\calF_\alpha$ if such a subset exists, and declares that no such subset exists otherwise.
\end{enumerate}
\end{definition}

The following lemma gives simple examples of polynomially computable sequences.
\begin{lemma}\label{lemma:weaklyFnk}
For every integer $r \geq 2$, the sequence $\calF^{(r)}$ is polynomially computable.
\end{lemma}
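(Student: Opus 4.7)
The plan is to verify Definition~\ref{def:computable_seq} directly for the sequence $\calF^{(r)}$, where $\calF^{(r)}_{(n,k)} = \binom{[n]}{k}$ for each $(n,k) \in \calA^{(r)}$. Both required procedures reduce to elementary cardinality computations on subsets of $[n]$ represented by their characteristic vectors, so they should run in time linear in $n$, which matches the polynomial-in-$n_\alpha$ bound demanded by the definition (here $n_\alpha = n$).

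For Item~\ref{itm:weak1}, given a pair $(n,k) \in \calA^{(r)}$ and a set $B \subseteq [n]$ encoded by its characteristic vector in $\{0,1\}^n$, I would simply count the number of $1$'s in the vector and accept if and only if this count equals $k$; this is a single linear scan. For Item~\ref{itm:weak2}, given $(n,k)$ and $D \subseteq [n]$, note that $\binom{[n]}{k}$ contains a subset of $D$ if and only if $|D| \geq k$, since any $k$-element subset of $D$ is automatically a $k$-subset of $[n]$. The algorithm would therefore compute $|D|$ in $O(n)$ time, then either output the $k$ smallest elements of $D$ (when $|D| \geq k$) or declare that no such subset exists. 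Correctness is immediate from the definitions.

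There is no substantive obstacle to overcome; the content of the lemma is that membership in $\binom{[n]}{k}$ and extraction of an arbitrary $k$-subset are computationally trivial. The only point worth flagging is that the choice of polynomials $q_1, q_2$ in Definition~\ref{def:computable_seq} can be taken as $q_1(n) = q_2(n) = O(n)$, independent of $r$, so the same pair of polynomials witnesses polynomial computability uniformly for all $r \geq 2$.
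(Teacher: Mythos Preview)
Your proposal is correct and matches the paper's own proof essentially verbatim: both verify Item~\ref{itm:weak1} by checking $|B|=k$ and Item~\ref{itm:weak2} by testing $|D|\geq k$ and outputting some $k$-subset of $D$ when possible. The only cosmetic difference is that you specify the $k$ smallest elements and note the $O(n)$ running time explicitly, while the paper says ``an arbitrary subset'' and ``polynomial in $n$''.
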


\begin{proof}
Fix an integer $r \geq 2$. Recall that $\calF^{(r)}_{(n,k)} = \binom{[n]}{k}$ for all $(n,k) \in \calA^{(r)}$.
We show that $\calF^{(r)}$ satisfies the conditions of Definition~\ref{def:computable_seq}.
For Item~\ref{itm:weak1}, consider the algorithm that given a pair $(n,k) \in \calA^{(r)}$ and a set $B \subseteq [n]$, checks whether $|B|=k$.
For Item~\ref{itm:weak2}, consider the algorithm that given a pair $(n,k) \in \calA^{(r)}$ and a set $D \subseteq [n]$, checks whether $|D|\geq k$, and if so, returns an arbitrary subset of $D$ of size $k$.
Clearly, these algorithms can be implemented in time polynomial in $n$, hence $\calF^{(r)}$ is polynomially computable.
\end{proof}

We conclude this section with two lemmas on the $\KneserP^r(\calF,m)$ problem with subset queries.

\begin{lemma}\label{lemma:subset_in_F}
Let $\calF = (\calF_\alpha)_{\alpha \in \calA}$ be a polynomially computable sequence of set families, where for each $\alpha \in \calA$, $\calF_\alpha$ is a family of non-empty subsets of $[n_\alpha]$ for some integer $n_\alpha$, and let $m:\calA \rightarrow \N$ be a function.
Given an instance $(\alpha, C, S)$ of the $\KneserP^r(\calF,m)$ problem with subset queries and given a set $D \subseteq [n_\alpha]$ and a color $i \in [m(\alpha)]$ such that $S(D,i)=1$, it is possible to find in polynomial time either a set $B \in \calF_\alpha$ such that $B \subseteq D$ and $C(B)=i$ or a violation of the circuit $S$.
\end{lemma}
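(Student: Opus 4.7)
The plan is to use a greedy minimization procedure on the set $D$ together with the polynomial-time membership test for $\calF_\alpha$ guaranteed by Definition~\ref{def:computable_seq}. The key observation is that the second type of violation in Definition~\ref{def:KneserProblemSubset} (false positive) applies precisely to sets $D$ that are \emph{minimal} with respect to the property $S(D,i)=1$, yet either are not in $\calF_\alpha$ or are in $\calF_\alpha$ with the wrong color. So either we can reach such a minimal $D$ that is in $\calF_\alpha$ with color $i$ (in which case we have found our set $B$), or we obtain a violation.

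Concretely, the procedure starts from the given $D$ and repeats the following loop: for each $j \in D$, evaluate $S(D \setminus \{j\}, i)$, and if the value is $1$ for some $j$, replace $D$ by $D \setminus \{j\}$. The loop terminates when $S(D',i)=0$ for every $D'$ obtained from $D$ by removing a single element. Each iteration strictly decreases $|D|$, so the loop performs at most $n_\alpha$ iterations, with at most $n_\alpha$ circuit evaluations per iteration, for a total that is polynomial in the input size. Throughout, the invariant $S(D,i)=1$ is maintained.

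Once the loop terminates, I would invoke Item~\ref{itm:weak1} of Definition~\ref{def:computable_seq}, applied to the current $D$, to decide in polynomial time whether $D \in \calF_\alpha$. If $D \in \calF_\alpha$ and $C(D)=i$, the set $B = D$ satisfies $B \in \calF_\alpha$, $B \subseteq D$, and $C(B)=i$, so it is the required monochromatic witness. Otherwise, either $D \notin \calF_\alpha$, or $D \in \calF_\alpha$ but $C(D) \neq i$; in both subcases the pair $(D,i)$ is precisely a false-positive violation of $S$ in the sense of Definition~\ref{def:KneserProblemSubset}, and can be returned as the output.

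There is no real obstacle here; the only point to verify carefully is that the case analysis after the loop exhausts all possibilities allowed by the definition of a solution, which it does: either we exit with a genuine set in $\calF_\alpha$ of color $i$ contained in $D$, or we exit with a minimal $D$ witnessing a false-positive violation. The whole procedure consists of $O(n_\alpha^2)$ evaluations of $S$ together with one membership test in $\calF_\alpha$, all polynomial in the instance size.
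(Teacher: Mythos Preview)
Your proposal is correct and follows essentially the same approach as the paper: greedily remove elements from $D$ while maintaining $S(D,i)=1$, and upon reaching a minimal such $D$, either it lies in $\calF_\alpha$ with color $i$ (giving the desired $B$) or it witnesses a false-positive violation. The only cosmetic difference is that the paper checks membership in $\calF_\alpha$ and the color at each iteration (allowing early termination), whereas you defer this check until after full minimization; both variants are equally valid and run in polynomial time.
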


\begin{proof}
Suppose that we are given an instance $(\alpha, C, S)$, a set $D \subseteq [n_\alpha]$, and a color $i \in [m(\alpha)]$, as in the statement of the lemma.
Consider the algorithm that first checks if $D \in \calF_\alpha$ and $C(D)=i$. If so, the algorithm simply returns $D$, which satisfies the required properties. Otherwise, the algorithm goes over all the subsets $D' \subseteq D$ obtained from $D$ by removing a single element and evaluates the circuit $S$ at the pairs $(D',i)$. If all those sets $D'$ satisfy $S(D',i)=0$, then the set $D$ and the color $i$ form a false positive violation of the circuit $S$, and the algorithm returns this violation. Otherwise, some $D'$ satisfies $S(D',i)=1$, and the algorithm proceeds by repeating the process where $D$ is replaced by such $D'$. Note that the number of iterations does not exceed $n_\alpha$.
The total running time is polynomial, because $\calF$ is polynomially computable, which allows us to efficiently check membership in $\calF_\alpha$, and because the circuits $C$ and $S$ can be evaluated in polynomial time.
\end{proof}

\begin{lemma}\label{lemma:viloationSubset}
Let $\calF = (\calF_\alpha)_{\alpha \in \calA}$ be a polynomially computable sequence of set families, where for each $\alpha \in \calA$, $\calF_\alpha$ is a family of non-empty subsets of $[n_\alpha]$ for some integer $n_\alpha$, and let $m:\calA \rightarrow \N$ be a function.
Given an instance $(\alpha, C, S)$ of the $\KneserP^r(\calF,m)$ problem with subset queries and given two sets $D_1, D_2 \subseteq [n_\alpha]$ and a color $i \in [m(\alpha)]$ such that $D_1 \subseteq D_2$, $S(D_1,i)=1$, and $S(D_2,i)=0$, it is possible to find in polynomial time a violation of the circuit $S$.
\end{lemma}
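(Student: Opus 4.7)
The plan is to walk from $D_1$ up to $D_2$ by adding one element at a time, locate the step at which the value of $S(\cdot,i)$ flips from $1$ to $0$, and then feed the set just before the flip into Lemma~\ref{lemma:subset_in_F} to extract either a direct violation of $S$ or a false negative.

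Concretely, I would enumerate the elements of $D_2 \setminus D_1$ in some order $j_1,\ldots,j_t$ and form the chain $E_0 = D_1 \subseteq E_1 \subseteq \cdots \subseteq E_t = D_2$ defined by $E_s = D_1 \cup \{j_1,\ldots,j_s\}$. Note that $t \geq 1$, since the hypotheses $S(D_1,i)=1$ and $S(D_2,i)=0$ force $D_1 \neq D_2$. Evaluating $S(E_s,i)$ for every $s \in \{0,1,\ldots,t\}$ takes polynomial time, because $t \leq n_\alpha$ and $S$ is a polynomial-size circuit; and the hypotheses guarantee an index $s \in [t]$ for which $S(E_{s-1},i) = 1$ and $S(E_s,i) = 0$.

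Next, I would apply Lemma~\ref{lemma:subset_in_F} to the instance $(\alpha, C, S)$ together with the set $E_{s-1}$ and the color $i$. Since $S(E_{s-1},i) = 1$, the lemma returns in polynomial time either a violation of $S$ (in which case we are already done) or a set $B \in \calF_\alpha$ satisfying $B \subseteq E_{s-1}$ and $C(B) = i$. In the latter case, $B \subseteq E_{s-1} \subseteq E_s$ and $C(B) = i$, while on the other hand $S(E_s,i) = 0$. Hence the triple consisting of the sets $B \subseteq E_s$ and the color $i$ constitutes exactly a false negative violation of the circuit $S$ in the sense of Definition~\ref{def:KneserProblemSubset}.

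I do not anticipate any serious obstacle: the statement is essentially a chain-monotonicity argument, and Lemma~\ref{lemma:subset_in_F} already absorbs the delicate case analysis that distinguishes genuine witnesses from false positives. The only mild subtlety is to observe that once $B$ is produced, it is the containment $B \subseteq E_s$ (rather than merely $B \subseteq E_{s-1}$) that matches the false negative template, which is why the flip is located between $E_{s-1}$ and $E_s$ rather than earlier in the chain.
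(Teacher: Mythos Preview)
Your argument is correct, but it does more work than necessary. The paper's proof skips the chain walk entirely: it applies Lemma~\ref{lemma:subset_in_F} directly to $D_1$ (after first checking whether $D_1 \in \calF_\alpha$ and $C(D_1)=i$, in which case $(D_1,D_2,i)$ is already a false negative). If Lemma~\ref{lemma:subset_in_F} returns a set $B \in \calF_\alpha$ with $B \subseteq D_1$ and $C(B)=i$, then since $B \subseteq D_1 \subseteq D_2$ and $S(D_2,i)=0$, the triple $(B,D_2,i)$ is the desired false negative. In other words, the superset $D_2$ itself already serves as the ``bad'' set, so locating the exact flip point $E_s$ along a chain is unnecessary. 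Your approach buys nothing extra here; the paper's version is strictly simpler, though both are polynomial time and both ultimately rely on Lemma~\ref{lemma:subset_in_F} in the same way.
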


\begin{proof}
Suppose that we are given an instance $(\alpha, C, S)$, sets $D_1, D_2 \subseteq [n_\alpha]$, and a color $i \in [m(\alpha)]$, as in the statement of the lemma.
Consider the algorithm that first checks whether $D_1 \in \calF_\alpha$ and $C(D_1)=i$. If so, the sets $D_1,D_2$ and the color $i$ form a false negative violation of $S$, and the algorithm returns this violation.
Otherwise, $D_1 \notin \calF_\alpha$ or $C(D_1) \neq i$.
In this case, by running the algorithm from Lemma~\ref{lemma:subset_in_F} on $(D_1,i)$, we get either a set $B \in \calF_\alpha$ such that $B \subseteq D_1$ and $C(B)=i$ or a violation of $S$. Notice that in the former case, the sets $B,D_2$ and the color $i$ form a false negative violation of $S$, hence in both cases, the algorithm can return a violation of $S$, and we are done.
Note that the running time of the algorithm is polynomial.
\end{proof}

\subsection{The \texorpdfstring{$\pConDiv$}{Con-p-Div} Problem}\label{subsec:CD}

We present now the formal definition of the approximate $\pConDiv$ problem for any prime $p$. The definition essentially extends the one given for $p=2$ in~\cite[Appendix~B, Definition~8]{DeligkasFH22}. One difference between the definitions is that we require a strict inequality in the condition~\eqref{eq:starict_def} below. For this reason, we denote the problem by $\ConDiv{\eps}{p}$ and deviate from the notation $\eps \textsc{-} \pConDiv$ used in the literature.
Note that we do not make any assumptions on the input valuation functions, and therefore allow solutions that demonstrate violations of their expected properties.

\begin{tcolorbox}
\begin{definition}[The $\pConDiv$ Problem]\label{def:ConDiv}
For an $\eps \in (0,1]$, a prime $p$, and a fixed polynomial $q$, the $\ConDiv{\eps}{p}$ problem on normalized monotone functions is defined as follows.
The input consists of
\begin{itemize}\setlength\itemsep{-0.2em}
  \item a Lipschitz parameter $L \geq 0$ and
  \item $m$ Turing machines $v_1, \ldots, v_m$ that are supposed to compute $L$-Lipschitz-continuous normalized monotone valuation functions in $\calB([0,1]) \rightarrow [0,1]$.
\end{itemize}
The goal is to find either a partition of $[0,1]$ into $p$ pieces $A_1, \ldots, A_p$ using at most $(p-1) \cdot m$ cuts, such that
  \begin{eqnarray}\label{eq:starict_def}
  |v_i(A_t)-v_i(A_{t'})| < \eps~~~~~\text{for all $i \in [m]$ and $t,t' \in [p]$,}
  \end{eqnarray}
or a violation of some valuation function $v_i$, namely,
\begin{itemize}\setlength\itemsep{-0.2em}
  \item a violation of the normalization of $v_i$, or
  \item a violation of the running time of $v_i$, i.e., an input $E$ on which $v_i$ does not terminate within $q(|E|+|v_i|)$ steps, or
  \item a violation of the monotonicity of $v_i$, or
  \item a violation of the $L$-Lipschitz-continuity of $v_i$.
\end{itemize}
When $p=2$, we refer to the $\ConDiv{\eps}{2}$ problem as $\ConHalvi{\eps}$.
\end{definition}
\end{tcolorbox}
\noindent
Theorem~\ref{thm:ConDiv} implies that for any $\eps \in (0,1]$ and for every prime $p$, the $\ConDiv{\eps}{p}$ problem is total and thus lies in $\TFNP$.

\subsection{From \texorpdfstring{$\KneserP$}{Kneser-p} to \texorpdfstring{$\ConHalvi{1}$}{Con-Halving[<1]}}

We consider now the case $p=2$ and present our reduction from the general $\KneserP(\calF, m)$ problem with subset queries, where $m$ is smaller by one than the bound given by Theorem~\ref{thm:IntroKriz}, to the $\ConHalvi{\eps}$ problem with $\eps=1$. As a consequence, we will obtain Theorem~\ref{thm:IntroKneserToCH}.

\begin{theorem}\label{thm:Kneser->CH1}
Let $\calF = (\calF_\alpha)_{\alpha \in \calA}$ be a polynomially computable sequence of set families.
Then, there exists a polynomial-time reduction from the $\KneserP(\calF, \cd_2(\calF_\alpha)-1)$ problem with subset queries to the $\ConHalvi{1}$ problem on normalized monotone functions.
\end{theorem}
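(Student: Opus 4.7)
The plan is to transform the proof of Theorem~\ref{thm:KneserChrom_p} (applied with $p=2$) into an efficient reduction. Given an instance $(\alpha,C,S)$ of $\KneserP(\calF,\cd_2(\calF_\alpha)-1)$ with subset queries, set $n=n_\alpha$ and $m=\cd_2(\calF_\alpha)-1$, associate each $j\in[n]$ with the open sub-interval $I_j=(\tfrac{j-1}{n},\tfrac{j}{n})$, and for every color $i\in[m]$ build a polynomial-time Turing machine $v_i$ that realizes the function of Section~\ref{sec:math_proof}. On input $E$, the machine forms $x^E_j=n\cdot\mu(E\cap I_j)$, sorts $x^E$ ascendingly via a permutation $\pi$, scans $j$ from $n$ downward issuing the queries $S(\{\pi(j),\ldots,\pi(n)\},i)$, and outputs $x^E_{\pi(j^*)}$ for the largest index $j^*$ with answer $1$ (and $0$ if none exists). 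Because $\|x^{E_1}-x^{E_2}\|_\infty\leq n\cdot\mu(E_1\triangle E_2)$, the intended abstract function $\max_{B\in\calF_\alpha,\,c(B)=i}\min_{j\in B} x^E_j$ is normalized, monotone, and $n$-Lipschitz, so the reduction invokes a $\ConHalvi{1}$ oracle on the instance $(L,v_1,\ldots,v_m)$ with $L=n$.

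Extracting a monochromatic edge from an oracle-returned consensus halving $(A_1,A_2)$ with at most $m$ cuts mirrors the proof of Theorem~\ref{thm:KneserChrom_p}. Let $Y\subseteq[n]$ be the indices whose sub-intervals contain a cut, so $|Y|<\cd_2(\calF_\alpha)$. The polynomial computability of $\calF$ locates a set $B\in\calF_\alpha$ contained in $D_t=\{j\in[n]\setminus Y:I_j\subseteq A_t\}$ for some $t\in\{1,2\}$; such a $B$ must exist by the definition of $\cd_2$. Setting $\ell=C(B)$, a single query to $S(D_t,\ell)$ either already exhibits a false-negative violation (when it returns $0$) or forces the implementation to compute $v_\ell(A_t)=1$. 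The \emph{strict} guarantee $|v_\ell(A_1)-v_\ell(A_2)|<1$ then yields $v_\ell(A_{t'})>0$ on the other piece, and by construction this value is witnessed by a specific set $D'$ with $S(D',\ell)=1$ each of whose elements has positive measure in $A_{t'}$. Feeding $(D',\ell)$ into Lemma~\ref{lemma:subset_in_F} either returns a false-positive violation of $S$ or produces $B'\subseteq D'$ in $\calF_\alpha$ with $C(B')=\ell$; the disjointness $B\cap B'=\emptyset$ is immediate since every $j\in B$ has $\mu(A_{t'}\cap I_j)=0$, so $\{B,B'\}$ is the desired monochromatic edge. The strict inequality in $\ConHalvi{1}$ is exactly what makes this last implication work, explaining why any non-trivial approximation suffices.

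The main obstacle is that the oracle might instead return a witness that some $v_i$ fails to be monotone or $n$-Lipschitz, a phenomenon that can genuinely occur when $S$ is unreliable. For a monotonicity violation $E_1\subseteq E_2$ with $v_i(E_1)>v_i(E_2)$, re-running the Turing machine on both inputs pins down the set $D_1=\{\pi_1(j_1),\ldots,\pi_1(n)\}$ with $S(D_1,i)=1$ that the trace on $E_1$ returned; the componentwise inequality $x^{E_1}\leq x^{E_2}$ then forces $D_1$ to lie inside the threshold set $\tilde D_2=\{k:x^{E_2}_k>v_i(E_2)\}$, which has the form $\{\pi_2(j^\dagger),\ldots,\pi_2(n)\}$ for some $j^\dagger$ strictly above the index returned on $E_2$; the scan on $E_2$ therefore already certified $S(\tilde D_2,i)=0$, and Lemma~\ref{lemma:viloationSubset} converts the nested pair $(D_1,\tilde D_2)$ into an $S$-violation. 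A Lipschitz violation is reduced to the same scenario using the bound $|x^{E_1}_k-x^{E_2}_k|\leq n\cdot\mu(E_1\triangle E_2)$, which, whenever the Lipschitz failure exceeds $n\cdot\mu(E_1\triangle E_2)$, again traps $D_1$ inside the relevant threshold set. All steps of the reduction run in polynomial time: each evaluation of a $v_i$ issues $O(n)$ calls to $S$ and $C$, and every invocation of the polynomial computability of $\calF$ or of Lemmas~\ref{lemma:subset_in_F} and~\ref{lemma:viloationSubset} is polynomial by assumption.
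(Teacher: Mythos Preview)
Your approach is essentially the paper's: build the $m$ valuation functions from the subset-query oracle $S$, invoke $\ConHalvi{1}$, and convert any oracle output (either a halving or a violation) into a monochromatic edge or an $S$-violation via Lemmas~\ref{lemma:subset_in_F} and~\ref{lemma:viloationSubset}. The one structural difference is your definition of $v_i$: the paper uses the telescoping sum $v_i(E)=\sum_j (x^E_{\pi(j)}-x^E_{\pi(j-1)})\cdot S(\{\pi(j),\ldots,\pi(n)\},i)$, separately introduces the threshold value $a^E_i$, and proves an auxiliary lemma showing that whenever $v_i(E)\neq a^E_i$ one can extract an $S$-violation. You bypass this by letting $v_i(E)$ be the threshold value directly, which is legitimate and slightly more direct; your Lipschitz handling is likewise more direct than the paper's detour through $E_3=E_1\cup E_2$.

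There is one imprecision. You write that $S(D_t,\ell)=1$ ``forces the implementation to compute $v_\ell(A_t)=1$'', but $D_t=\{j\in[n]\setminus Y:I_j\subseteq A_t\}$ need not coincide with any suffix $\{\pi(j),\ldots,\pi(n)\}$ that the machine queries on input $A_t$: a cut inside $I_j$ whose two sides are both assigned to $A_t$ puts $j$ in $Y$ (hence $j\notin D_t$) while still giving $x^{A_t}_j=1$. The missing step is: if nevertheless $v_\ell(A_t)<1$, then the suffix $\widetilde D=\{j:x^{A_t}_j=1\}$, which \emph{is} queried, returned $0$, and since $D_t\subseteq\widetilde D$ with $S(D_t,\ell)=1$, Lemma~\ref{lemma:viloationSubset} yields an $S$-violation. (The paper handles the same point by noting that $E=\bigcup_{j\in B}I_j\subseteq A_t$ has $v_\ell(E)=1$, so $v_\ell(A_t)<1$ is itself a monotonicity violation.) With this patch your argument is complete.
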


\begin{proof}
For a polynomially computable sequence of set families $\calF = (\calF_\alpha)_{\alpha \in \calA}$, consider an instance of the $\KneserP(\calF, \cd_2(\calF_\alpha)-1)$ problem with subset queries, that is, an element $\alpha \in \calA$ and Boolean circuits $C:\{0,1\}^{n} \rightarrow [m]$ and $S:\{0,1\}^{n} \times [m] \rightarrow \{0,1\}$ that are supposed to represent a coloring $c:\calF_\alpha \rightarrow [m]$ of the vertices of $K(\calF_\alpha)$ where $n = n_\alpha$ and $m = \cd_2(\calF_\alpha)-1$ (see Definition~\ref{def:KneserProblemSubset}). Consider the reduction that given such an input returns $m$ Turing machines that compute the valuation functions $v_1, \ldots, v_m : \calB([0,1]) \rightarrow [0,1]$ defined as follows. For each $i \in [m]$ and any $E \in \calB([0,1])$, let $x^E \in [0,1]^n$ denote the vector defined by $x^E_j = n \cdot \mu(E \cap I_j)$ where $I_j = (\frac{j-1}{n},\frac{j}{n})$ for all $j \in [n]$, let $\pi$ be a permutation of $[n]$ such that $x^E_{\pi(1)} \leq x^E_{\pi(2)} \leq \cdots \leq x^E_{\pi(n)}$, and define
\begin{eqnarray}\label{eq:v_i(C)}
v_i(E) = \sum_{j=1}^{n}{(x^E_{\pi(j)} - x^E_{\pi(j-1)}) \cdot S(\{\pi(j), \pi(j+1),\ldots,\pi(n)\},i)},
\end{eqnarray}
where $x^E_{\pi(0)}=0$.
Notice that $v_i(E)$ is a combination of $0,1$ values with non-negative coefficients whose sum is at most $1$. This implies that $v_i(E) \in [0,1]$ for all $E \in \calB([0,1])$, that is, $v_i$ is normalized.

We claim that given the Boolean circuit $S$, it is possible to produce in polynomial time efficient Turing machines that compute the functions $v_1,\ldots,v_m$. Indeed, for each $i \in [m]$, consider the Turing machine that given any $E \in \calB([0,1])$, represented as a union of sub-intervals of $[0,1]$ with rational endpoints, computes the vector $x^E$ whose values are the normalized Lebesgue measures of $E$ on the intervals $I_1, \ldots, I_n$ and finds a permutation $\pi$ of $[n]$ that corresponds to a non-decreasing order of these values. Then, the machine determines the value of $v_i(E)$ given in~\eqref{eq:v_i(C)} using $n$ evaluations of the Boolean circuit $S$. Observe that the running time of such a machine is polynomial in the size of the representation of $E$ and in the size of the circuit $S$, which is not larger than the size of the encoding of the machine itself. We further define the Lipschitz parameter to be $L=n$. This gives us an instance of the $\ConHalvi{1}$ problem on normalized monotone functions.

Before proving the correctness of the reduction, let us add some notation.
For a set $E \in \calB([0,1])$ and a color $i \in [m]$, let $a^E_i$ denote the largest value $a \in [0,1]$ such that $S(\{j \in [n] \mid x^E_j \geq a\},i)=1$ if such an $a$ exists, and $a^E_i = 0$ otherwise.
Observe, using~\eqref{eq:v_i(C)}, that if the answers of the circuit $S$ are consistent with some coloring as described in Definition~\ref{def:KneserProblemSubset}, then for all $E \in \calB([0,1])$ and $i \in [m]$, it holds that $v_i(E) = a^E_i$. The following lemma shows that given a set $E$ and a color $i$ for which this equality does not hold, it is possible to efficiently find a violation of the circuit $S$.

\begin{lemma}\label{lemma:violation}
Given a set $E \in \calB([0,1])$ and a color $i \in [m]$ such that $v_i(E) \neq a^E_i$, it is possible to find in polynomial time a violation of the circuit $S$.
\end{lemma}

\begin{proof}
Consider the algorithm that given a set $E \in \calB([0,1])$ and a color $i \in [m]$ with $v_i(E) \neq a^E_i$, computes the vector $x^E$ and a permutation $\pi$ of $[n]$ that corresponds to a non-decreasing order of the values of $x^E$. Then, for each $j \in [n]$, the algorithm evaluates the circuit $S$ at the pair $(T_j,i)$, where $T_j = \{\pi(j), \pi(j+1), \ldots, \pi(n)\}$.
By $v_i(E) \neq a^E_i$, using~\eqref{eq:v_i(C)}, it follows that some $j$ satisfies that $S(T_j,i)=0$ and $S(T_{j+1},i)=1$. Such a $j$ can be found efficiently, and by Lemma~\ref{lemma:viloationSubset}, the sets $T_j,T_{j+1}$ (which satisfy $T_{j+1} \subseteq T_j$) and the color $i$ can be used to efficiently find a violation of $S$.
\end{proof}

We prove now the correctness of the reduction, namely, that given a solution for the produced $\ConHalvi{1}$ instance, it is possible to find in polynomial time a solution for the given instance of $\KneserP(\calF, \cd_2(\calF_\alpha)-1)$ with subset queries.
Suppose first that the given solution is a violation of some valuation function $v_i$ with $i \in [m]$.
By construction, it is impossible to violate the normalization of $v_i$ or the running time of the Turing machine that computes it, hence the solution violates either the monotonicity or the $L$-Lipschitz-continuity of $v_i$.

If the solution violates the monotonicity of $v_i$, then we are given two sets $E_1,E_2 \in \calB([0,1])$ such that $E_1 \subseteq E_2$ and $v_i(E_2) < v_i(E_1)$. We may assume that $v_i(E_1) = a^{E_1}_i$ and $v_i(E_2) = a^{E_2}_i$, and thus $a^{E_2}_i < a^{E_1}_i$, as otherwise, by Lemma~\ref{lemma:violation}, it is possible to efficiently find a violation of the circuit $S$.
Put $D_1 = \{ j \in [n] \mid x^{E_1}_j \geq a^{E_1}_i \}$, and notice that the definition of $a^{E_1}_i$ implies that $S(D_1,i)=1$.
Put $D_2 = \{ j \in [n] \mid x^{E_2}_j \geq a^{E_1}_i \}$, and use the fact that $E_1 \subseteq E_2$ to obtain that $x^{E_1}_j \leq x^{E_2}_j$ for all $j \in [n]$, which implies that $D_1 \subseteq D_2$. However, using $a^{E_2}_i < a^{E_1}_i$, it follows that $S(D_2,i)=0$. By Lemma~\ref{lemma:viloationSubset}, the sets $D_1, D_2$ and the color $i$ can be used to efficiently find a violation of $S$.

If the solution violates the $L$-Lipschitz-continuity of $v_i$ for $L=n$, then we are given two sets $E_1,E_2 \in \calB([0,1])$ such that for some $\delta > 0$, $\mu(E_1 \triangle E_2) = \delta$ and $|v_i(E_1)-v_i(E_2)| > n \cdot \delta$. Assume without loss of generality that $v_i(E_1) \geq v_i(E_2)$, which implies that $v_i(E_1) > v_i(E_2)+n \cdot \delta$.
Put $E_3 = E_1 \cup E_2$. We may assume that $v_i(E_3) \geq v_i(E_1)$, and thus $v_i(E_3) > v_i(E_2)+n \cdot \delta$, as otherwise the sets $E_1$ and $E_3$ violate the monotonicity of $v_i$, and this violation can be handled as before.
We may also assume, by Lemma~\ref{lemma:violation}, that $v_i(E_2) = a^{E_2}_i$ and $v_i(E_3) = a^{E_3}_i$, implying that $a_i^{E_3} > a_i^{E_2}+n \cdot \delta$.
Put $D_3 = \{ j \in [n] \mid x^{E_3}_j \geq a^{E_3}_i \}$, and notice that the definition of $a^{E_3}_i$ implies that $S(D_3,i)=1$.
Now, put $D_2 = \{ j \in [n] \mid x^{E_2}_j \geq a^{E_3}_i - n \cdot \delta \}$. By $\mu(E_1 \triangle E_2) = \delta$, it follows that $x^{E_3}_j \leq x^{E_2}_j + n \cdot \delta$ for all $j \in [n]$, which implies that $D_3 \subseteq D_2$. However, using $a^{E_2}_i < a^{E_3}_i - n \cdot \delta$, it follows that $S(D_2,i)=0$. By Lemma~\ref{lemma:viloationSubset}, the sets $D_2, D_3$ and the color $i$ can be used to efficiently find a violation of $S$.

It remains to consider a solution which forms a partition of the unit interval into two pieces $A_1$ and $A_2$ with at most $m$ cuts, such that $|v_i(A_{1}) - v_i(A_{2})|<1$ for all $i \in [m]$.
Given such a solution, it is possible to efficiently construct the set $Y$ of indices $j \in [n]$ for which the sub-interval $I_j$ contains a cut.
Every sub-interval $I_j$ with $j\in [n] \setminus Y$ is fully contained in either $A_1$ or $A_2$.
For $t \in [2]$, put $M_t = \{ j \in [n] \setminus Y \mid I_j \subseteq A_t\}$.
By $|Y| \leq m < \cd_2(\calF_\alpha)$, it follows that for some $t_1 \in [2]$, there exists a set $B_1 \in \calF_\alpha$ such that $B_1 \subseteq M_{t_1}$.
Moreover, such a set $B_1$ can be found efficiently, because the sequence $\calF$ is polynomially computable (see Definition~\ref{def:computable_seq}, Item~\ref{itm:weak2}).
Note that for each $j \in B_1$, it holds that $\mu(A_{t_1} \cap I_j) = \tfrac{1}{n}$.

Let $\ell = c(B_1)$ denote the color assigned to the set $B_1$ by the input coloring $c$. This color can be efficiently determined using the circuit $C$. It may be assumed that $S(B_1,\ell)=1$, as otherwise $B_1$ and $\ell$ provide a false negative violation of the circuit $S$.
Put $E = \cup_{j \in B_1}{I_j}$. By Lemma~\ref{lemma:violation}, we may assume that $v_\ell(E)=a^E_\ell=1$.
Since $E \subseteq A_{t_1}$, if $v_\ell(A_{t_1})<1$ we obtain a violation of the monotonicity of $v_\ell$, which can be handled as before.
Otherwise, it holds that $v_\ell(A_{t_1})=1$.
Our assumption on the given solution implies that $|v_\ell(A_1)-v_\ell(A_2)|<1$, so letting $t_2$ denote the element of $[2] \setminus \{t_1\}$, it follows that $v_{\ell}(A_{t_2})>0$.
Again, by Lemma~\ref{lemma:violation}, we may assume that $v_\ell(A_{t_2})=a^{A_{t_2}}_\ell >0$.
This yields that the set $D = \{ j \in [n] \mid x^{A_{t_2}}_j \geq a^{A_{t_2}}_\ell\}$ satisfies $S(D,\ell)=1$. By Lemma~\ref{lemma:subset_in_F}, it is possible to efficiently find either a violation of $S$ or a set $B_2 \in \calF_\alpha$ such that $B_2 \subseteq D$ and $C(B_2)=\ell$.
We finally observe that the sets $B_1$ and $B_2$ are disjoint.
Indeed, for each $j \in [n]$, if $j \in B_1$ then $\mu(A_{t_1} \cap I_j) = \tfrac{1}{n}$ and if $j \in B_2$ then $\mu(A_{t_2} \cap I_j) \geq \tfrac{1}{n} \cdot a^{A_{t_2}}_\ell >0$, but these two conditions cannot hold simultaneously, because $A_1$ and $A_2$ form a partition of $[0,1]$, hence $\mu(A_{t_1} \cap I_j)+\mu(A_{t_2} \cap I_j) = \mu(I_j) = \tfrac{1}{n}$.
This implies that the sets $B_1$ and $B_2$ form a monochromatic edge in $K(\calF_\alpha)$ and thus a solution for the given instance of $\KneserP(\calF, \cd_2(\calF_\alpha)-1)$ with subset queries, completing the proof.
\end{proof}

We are ready to derive Theorem~\ref{thm:IntroKneserToCH}.
Recall that $\calF^{(2)}$ stands for the sequence of the set families $\calF^{(2)}_{(n,k)} = \binom{[n]}{k}$ for pairs of integers $(n,k)$ with $n \geq 2k$.

\begin{proof}[ of Theorem~\ref{thm:IntroKneserToCH}]
By Lemma~\ref{lemma:weaklyFnk}, the sequence $\calF^{(2)}$ is polynomially computable.
This allows us to apply Theorem~\ref{thm:Kneser->CH1}, which yields that there exists a polynomial-time reduction from $\KneserP(\calF^{(2)}, \cd_2(\calF^{(2)}_{(n,k)})-1)$ with subset queries to $\ConHalvi{1}$ on normalized monotone functions.
By Lemma~\ref{lemma:cd_Kneser}, it holds that $\cd_2(\calF^{(2)}_{(n,k)}) = n-2k+2$. It thus follows that the $\KneserP(\calF^{(2)}, \cd_2(\calF^{(2)}_{(n,k)})-1)$ problem coincides with the $\KneserP$ problem, and we are done.
\end{proof}

\subsection{From \texorpdfstring{$\KneserP^p$}{Kneser-p} to \texorpdfstring{$\ConDiv{\tfrac{1}{2}}{p}$}{Con-p-Div[<1/2]}}

We next present a reduction, for any prime $p \geq 3$, from the $\KneserP^p(\calF, m)$ problem with subset queries, where $m$ is smaller than the bound given by Theorem~\ref{thm:IntroKriz}, to the $\ConDiv{\eps}{p}$ problem with $\eps=\tfrac{1}{2}$. As a consequence, we will obtain Theorem~\ref{thm:IntroKneserToCD}.

\begin{theorem}\label{thm:Kneser->CD}
Let $p \geq 3$ be a prime, and let $\calF = (\calF_\alpha)_{\alpha \in \calA}$ be a polynomially computable sequence of set families.
Then, there exists a polynomial-time reduction from the $\KneserP^p(\calF, \lfloor \frac{\cd_p(\calF_\alpha)-1}{p-1} \rfloor )$ problem with subset queries to the $\ConDiv{\tfrac{1}{2}}{p}$ problem on normalized monotone functions.
\end{theorem}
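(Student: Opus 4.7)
The plan is to generalize the proof of Theorem~\ref{thm:Kneser->CH1} by using the exact same valuation functions $v_1, \ldots, v_m$ defined via~\eqref{eq:v_i(C)}, but applied now to the $\ConDiv{\tfrac{1}{2}}{p}$ problem with $p$ pieces and precision parameter $\tfrac{1}{2}$. Given an instance $(\alpha, C, S)$ of the $\KneserP^p(\calF, \lfloor\frac{\cd_p(\calF_\alpha)-1}{p-1}\rfloor)$ problem with subset queries, the reduction outputs, for each color $i \in [m]$ with $m = \lfloor \frac{\cd_p(\calF_\alpha)-1}{p-1} \rfloor$, the same Turing machine from the $p=2$ construction, together with the Lipschitz parameter $L = n$. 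The output functions are automatically normalized monotone, and Lemmas~\ref{lemma:subset_in_F}, \ref{lemma:viloationSubset}, and the analogue of Lemma~\ref{lemma:violation} for this setting remain valid, since they concern only the pointwise structure of $v_i$ given by~\eqref{eq:v_i(C)} and do not depend on $p$.

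The verification that any violation of monotonicity or $L$-Lipschitz-continuity of some $v_i$ can be turned in polynomial time into a violation of $S$ is identical to the $p=2$ case, so the only genuinely new part is the analysis of a legitimate consensus-$p$-division $A_1, \ldots, A_p$ with at most $(p-1) \cdot m$ cuts satisfying $|v_i(A_t) - v_i(A_{t'})| < \tfrac{1}{2}$ for all $i \in [m]$ and $t, t' \in [p]$. Letting $Y$ be the set of indices $j \in [n]$ for which $I_j$ contains a cut, we have $|Y| \leq (p-1) \cdot m \leq \cd_p(\calF_\alpha) - 1 < \cd_p(\calF_\alpha)$, so the sub-hypergraph on $[n] \setminus Y$ is not $p$-colorable. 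The coloring that sends each $j \in [n] \setminus Y$ to the unique $t \in [p]$ with $I_j \subseteq A_t$ must therefore admit a monochromatic set; using that $\calF$ is polynomially computable (Definition~\ref{def:computable_seq}, Item~\ref{itm:weak2}), we can efficiently find $t_1 \in [p]$ and a set $B_{t_1} \in \calF_\alpha$ with $B_{t_1} \subseteq M_{t_1} := \{j \in [n]\setminus Y \mid I_j \subseteq A_{t_1}\}$.

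Setting $\ell = C(B_{t_1})$, and using Lemmas~\ref{lemma:viloationSubset} and the $v_i$-violation argument as in Theorem~\ref{thm:Kneser->CH1}, we may assume both $S(B_{t_1}, \ell) = 1$ and $v_\ell(A_{t_1}) = 1$ (else a violation of $S$ is produced). Here is where the choice $\eps = \tfrac{1}{2}$ enters: the consensus property forces $v_\ell(A_{t'}) > 1 - \tfrac{1}{2} = \tfrac{1}{2}$ for every $t' \neq t_1$. We may also assume $v_\ell(A_{t'}) = a^{A_{t'}}_\ell$ via the analogue of Lemma~\ref{lemma:violation}, so the set $D_{t'} = \{j \in [n] \mid x^{A_{t'}}_j \geq a^{A_{t'}}_\ell\}$ satisfies $S(D_{t'}, \ell) = 1$, and Lemma~\ref{lemma:subset_in_F} yields either a violation of $S$ or a set $B_{t'} \in \calF_\alpha$ with $B_{t'} \subseteq D_{t'}$ and $C(B_{t'}) = \ell$, in polynomial time for each $t'$.

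The main obstacle — and the key reason for the precision $\tfrac{1}{2}$ — is verifying that the sets $B_{t_1}, \{B_{t'}\}_{t' \neq t_1}$ are pairwise disjoint, yielding a monochromatic hyperedge in $K^p(\calF_\alpha)$. For any $j \in [n]$, since $A_1, \ldots, A_p$ partition $[0,1]$, we have $\sum_{t=1}^{p} x^{A_t}_j = n \cdot \sum_t \mu(A_t \cap I_j) = n \cdot \mu(I_j) = 1$. For $t' \neq t_1$, every $j \in B_{t'}$ satisfies $x^{A_{t'}}_j \geq a^{A_{t'}}_\ell > \tfrac{1}{2}$, so no index $j$ can belong to both $B_t$ and $B_{t'}$ for distinct $t, t' \neq t_1$ (the two contributions would sum to more than $1$); and $B_{t_1}$ is disjoint from each $B_{t'}$ with $t' \neq t_1$ because every $j \in B_{t_1}$ has $x^{A_{t_1}}_j = 1$, forcing $x^{A_{t'}}_j = 0 < \tfrac{1}{2}$. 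This completes the reduction, and this threshold argument is exactly where the value $\tfrac{1}{2}$ is tight: any larger precision parameter would break the pairwise disjointness guarantee.
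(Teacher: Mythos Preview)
Your proposal is correct and follows essentially the same approach as the paper: the same valuation functions, the same handling of violations, and the same use of the $\tfrac{1}{2}$ threshold to force pairwise disjointness via $\sum_t x^{A_t}_j = 1$. The only cosmetic difference is that the paper, after locating the initial set $B$ and color $\ell$, uniformly constructs $B_t$ for \emph{all} $t \in [p]$ (including $t_1$) from the sets $\{j : x^{A_t}_j \geq a^{A_t}_\ell\}$, whereas you retain the original $B_{t_1}$ and split the disjointness argument into two cases; both work.
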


\begin{proof}
Fix a prime $p \geq 3$.
For a polynomially computable sequence of set families $\calF = (\calF_\alpha)_{\alpha \in \calA}$, consider an instance of the $\KneserP^p(\calF, \lfloor \frac{\cd_p(\calF_\alpha)-1}{p-1} \rfloor)$ problem with subset queries, that is, an element $\alpha \in \calA$ and Boolean circuits $C:\{0,1\}^{n} \rightarrow [m]$ and $S:\{0,1\}^{n} \times [m] \rightarrow \{0,1\}$ that are supposed to represent a coloring $c:\calF_\alpha \rightarrow [m]$ of the vertices of $K^p(\calF_\alpha)$ where $n = n_\alpha$ and $m = \lfloor \frac{\cd_p(\calF_\alpha)-1}{p-1} \rfloor$ (see Definition~\ref{def:KneserProblemSubset}). Consider the reduction that given such an input returns $m$ Turing machines that compute the valuation functions $v_1, \ldots, v_m : \calB([0,1]) \rightarrow [0,1]$ defined as follows. For each $i \in [m]$ and any $E \in \calB([0,1])$, let $x^E \in [0,1]^n$ denote the vector defined by $x^E_j = n \cdot \mu(E \cap I_j)$ where $I_j = (\frac{j-1}{n},\frac{j}{n})$ for all $j \in [n]$, let $\pi$ be a permutation of $[n]$ such that $x^E_{\pi(1)} \leq x^E_{\pi(2)} \leq \cdots \leq x^E_{\pi(n)}$, and define
\begin{eqnarray*}
v_i(E) = \sum_{j=1}^{n}{(x^E_{\pi(j)} - x^E_{\pi(j-1)}) \cdot S(\{\pi(j), \pi(j+1),\ldots,\pi(n)\},i)},
\end{eqnarray*}
where $x^E_{\pi(0)}=0$.
As shown in the proof of Theorem~\ref{thm:Kneser->CH1}, the functions $v_1,\ldots,v_m$ are normalized, and it is possible to produce in polynomial time efficient Turing machines that compute them. We further define the Lipschitz parameter to be $L=n$. This gives us an instance of the $\ConDiv{\tfrac{1}{2}}{p}$ problem on normalized monotone functions.

For correctness, we show that given a solution for the produced $\ConDiv{\tfrac{1}{2}}{p}$ instance, it is possible to efficiently find a solution for the given instance of $\KneserP^p(\calF, \lfloor \frac{\cd_p(\calF_\alpha)-1}{p-1} \rfloor)$ with subset queries. As shown in the proof of Theorem~\ref{thm:Kneser->CH1}, if the given solution is a violation of one of the valuation functions, then it can be used to efficiently find a violation of the circuit $S$. To avoid repetitions, we omit the details here and focus on the case where the given solution is a partition of the unit interval into $p$ pieces $A_1, \ldots, A_p$ with at most $(p-1) \cdot m$ cuts, such that $|v_i(A_{t}) - v_i(A_{t'})|<\tfrac{1}{2}$ for all $i \in [m]$ and $t,t' \in [p]$.
Given such a solution, it is possible to efficiently construct the set $Y$ of indices $j \in [n]$ for which the sub-interval $I_j$ contains a cut.
Every sub-interval $I_j$ with $j\in [n] \setminus Y$ is fully contained in one of the pieces $A_1, \ldots,A_p$.
For each $t \in [p]$, put $M_t = \{ j \in [n] \mid I_j \subseteq A_t\}$.
By $|Y| \leq (p-1) \cdot m < \cd_p(\calF_\alpha)$, it follows that for some $t_1 \in [p]$, there exists a set $B \in \calF_\alpha$ such that $B \subseteq M_{t_1}$.
Moreover, such a set $B$ can be found efficiently, because the sequence $\calF$ is polynomially computable (see Definition~\ref{def:computable_seq}, Item~\ref{itm:weak2}).
Note that for each $j \in B$, it holds that $\mu(A_{t_1} \cap I_j) = \tfrac{1}{n}$.
We will assume from now on that for any given set $E \in \calB([0,1])$ and color $i \in [m]$, it holds that $v_i(E) = a^E_i$, where $a^E_i$ is the largest value $a \in [0,1]$ such that $S(\{j \in [n] \mid x^E_j \geq a \},i)=1$ if such an $a$ exists, and $a^E_i = 0$ otherwise. As shown in the proof of Theorem~\ref{thm:Kneser->CH1}, if this is not the case, then $E$ and $i$ can be used to efficiently obtain a violation of the circuit $S$.

Let $\ell = c(B)$ denote the color assigned to the set $B$ by the input coloring $c$. This color can be efficiently determined using the circuit $C$. It may be assumed that $S(B,\ell)=1$, as otherwise $B$ and $\ell$ provide a false negative violation of the circuit $S$.
Put $E = \cup_{j \in B}{I_j}$, and observe that $v_\ell(E) = a^E_\ell=1$.
Since $E \subseteq A_{t_1}$, if $v_\ell(A_{t_1})<1$ we obtain a violation of the monotonicity of $v_\ell$.
Otherwise, it holds that $v_\ell(A_{t_1})=1$.
Our assumption on the given solution implies that for each $t \in [p]$, it holds that $|v_\ell(A_{t_1})-v_\ell(A_t)|<\tfrac{1}{2}$ and thus $v_\ell(A_t) = a_\ell^{A_t}> \tfrac{1}{2}$.

Now, for each $t \in [p]$, put $M_t = \{ j \in [n] \mid x_{j}^{A_t} \geq a_{\ell}^{A_t} \}$, and observe that $S(M_t,\ell)=1$.
By Lemma~\ref{lemma:subset_in_F}, it is possible to efficiently find either a violation of $S$ or a set $B_t \in \calF_\alpha$ such that $B_t \subseteq M_t$ and $C(B_t)=\ell$.
If no violation is found, the obtained sets $B_1, \ldots, B_p$ are pairwise disjoint, because for each $j \in [n]$, it holds that $\sum_{t=1}^{p}{\mu(A_{t} \cap I_j)}= \mu(I_j)= \tfrac{1}{n}$, hence at most one $t \in [p]$ can satisfy $\mu(A_{t} \cap I_j)  = \frac{1}{n} \cdot x_{j}^{A_t} \geq \frac{1}{n} \cdot a_{\ell}^{A_t} > \tfrac{1}{2n}$.
This implies that the sets $B_1, \ldots, B_p$ form a monochromatic hyperedge in $K^p(\calF_\alpha)$ and thus a solution for the given instance of $\KneserP^p(\calF, \lfloor \frac{\cd_p(\calF_\alpha)-1}{p-1} \rfloor )$ with subset queries, completing the proof.
\end{proof}

We are ready to derive Theorem~\ref{thm:IntroKneserToCD}.

\begin{proof}[ of Theorem~\ref{thm:IntroKneserToCD}]
Fix a prime $p \geq 3$.
By Lemma~\ref{lemma:weaklyFnk}, the sequence $\calF^{(p)}$ is polynomially computable, allowing us to apply Theorem~\ref{thm:Kneser->CD}, which yields that there exists a polynomial-time reduction from $\KneserP^p(\calF^{(p)}, \lfloor \frac{\cd_p(\calF^{(p)}_{(n,k)})-1}{p-1} \rfloor)$ with subset queries to $\ConDiv{\tfrac{1}{2}}{p}$ on normalized monotone functions.
By Lemma~\ref{lemma:cd_Kneser}, it holds that $\cd_p(\calF^{(p)}_{(n,k)}) = n-p(k-1)$. It thus follows that the $\KneserP^p(\calF^{(p)}, \lfloor \frac{\cd_p(\calF^{(p)}_{(n,k)})-1}{p-1} \rfloor)$ problem coincides with the $\KneserP^p$ problem, and we are done.
\end{proof}

\section{\texorpdfstring{$\KneserP^p$}{Kneser-p} lies in \texorpdfstring{$\PPAp$}{PPA-p}}\label{sec:PPAp}

In this section, we prove the membership of the $\KneserP^p$ problem in the complexity class $\PPAp$ for every prime $p$, confirming Theorem~\ref{thm:IntroPPAp}. The result is proved in two stronger forms through known connections between the chromatic number of Kneser hypergraphs and a $\Z_p$-variant of Tucker's lemma. We then establish limitations on the complexity of variants of the $\KneserP^r$ problem, restricted to colorings with a bounded number of colors. We start by presenting the computational search problem associated with the $\Z_p$-Tucker lemma.

\subsection{The \texorpdfstring{$\ZpTucker$}{Zp-Tucker} Problem}

The definition of the $\ZpTucker$ problem requires a few notations.
For a prime $p$, we denote the elements of the cyclic group $\Z_p$ of order $p$ by $\omega^t$ for $t \in [p]$.
A {\em signed set} over $\Z_p$ is a set whose elements are associated with signs from $\Z_p$. A signed subset of $[n]$ over $\Z_p$ can be represented by a vector $X \in (\Z_p \cup \{0\})^n$, where the subset consists of the elements $j \in [n]$ with $X_j \neq 0$, and the sign of every such $j$ is $X_j$. For two signed sets $X, Y \in (\Z_p \cup \{0\})^n$, we denote by $X \preceq Y$ the fact that for every $j \in [n]$, if $X_j \neq 0$ then $X_j = Y_j$.

\begin{tcolorbox}
\begin{definition}(The $\ZpTucker$ Problem)
For a prime $p$, the $\ZpTucker$ problem is defined as follows.
Its input consists of two integers $n$ and $s$ satisfying $s \leq \lfloor \frac{n-1}{p-1} \rfloor$ along with a Boolean circuit that represents a $\Z_p$-equivariant map $\lambda: (\Z_p \cup \{0\})^n \setminus \{0\}^n \rightarrow \Z_p \times [s]$, that is, a function that maps every nonzero $X \in (\Z_p \cup \{0\})^n$ to a pair $\lambda(X) = (\lambda_1(X), \lambda_2(X))$ in $\Z_p \times [s]$, where for each $t \in [p]$ it holds that $\lambda(\omega^t X)=(\omega^t \lambda_1(X), \lambda_2(X))$.
The goal is to find a chain of $p$ signed sets $X_1 \preceq X_2 \preceq \cdots \preceq X_p$ in $(\Z_p \cup \{0\})^n \setminus \{0\}^n$ that are assigned by $\lambda$ the same absolute value with pairwise distinct signs, that is, for some permutation $\pi$ of $[p]$ and some $\ell \in [s]$, it holds that $\lambda(X_t) = (\omega^{\pi(t)}, \ell)$ for all $t \in [p]$.
\end{definition}
\end{tcolorbox}
\noindent
Note that the assumption that the map $\lambda$ is $\Z_p$-equivariant can be enforced syntactically.
The existence of a solution for every instance of the $\ZpTucker$ problem was proved by Ziegler~\cite{Ziegler02}. Its membership in $\PPAp$, stated below, follows from a much more general result due to Filos-Ratsikas et al.~\cite[Theorem~5.2]{FHSZ21}.
\begin{theorem}[{\cite{FHSZ21}}]\label{thm:TuckerinPPAp}
For every prime $p$, the $\ZpTucker$ problem lies in $\PPAp$.
\end{theorem}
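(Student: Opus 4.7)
The plan is to construct a polynomial-time reduction from $\ZpTucker$ to the canonical $\PPAp$-complete problem, which asks, given a Boolean-circuit representation of a bipartite graph $G$ together with a vertex $v$ whose degree is not a multiple of $p$, to find another vertex with the same property. Given an input $(n,s,\lambda)$ of $\ZpTucker$, I would define $G$ on combinatorial configurations derived from chains of signed subsets of $[n]$ together with $\lambda$-labels, and exhibit a canonical starting vertex $v$ of degree $1 \bmod p$. Any other vertex of $G$ whose degree is not a multiple of $p$ will, by construction, correspond to a chain that cannot be locally pivoted in $p$ coherent ways, and this failure will witness a fully labeled chain $X_1 \preceq \cdots \preceq X_p$ solving the $\ZpTucker$ instance.

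The graph $G$ would follow the combinatorial proof of the $\Z_p$-Tucker lemma of Ziegler~\cite{Ziegler02}, where the punctured cube $(\Z_p \cup \{0\})^n \setminus \{0\}^n$ is viewed as a simplicial complex whose simplices are chains of signed sets, equipped with a free $\Z_p$-action. One side of the bipartition would consist of partial chains of some fixed length whose $\lambda$-labels partially match the target pattern $(\omega^1,\ell),\ldots,(\omega^p,\ell)$, and the other side of chains obtained by a single local modification (inserting a signed set, deleting a signed set, or pivoting a sign via the $\Z_p$-action). The labeling pattern is preserved under such moves, so that at any non-solution vertex the set of legal moves decomposes into $\Z_p$-orbits of size exactly $p$, forcing the degree to be divisible by $p$. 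Because each configuration requires only polynomially many evaluations of $\lambda$, the graph $G$ is representable by a polynomial-size Boolean circuit, and the syntactic enforcement of $\Z_p$-equivariance on $\lambda$ ensures that the circuit is well defined.

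The main obstacle is the degree accounting modulo $p$: one must design the local pivoting rule so that every non-solution configuration admits exactly $p$ valid successors that form a single $\Z_p$-orbit, while the canonical starter is a distinguished orbit representative of degree $1 \bmod p$, and a solution to $\ZpTucker$ is the unique obstruction to this pivoting closing up. This is precisely where primality of $p$ is essential, since orbit sizes under the free $\Z_p$-action are exactly $p$ only when $p$ is prime; the whole counting argument is the combinatorial shadow of the topological content of the $\Z_p$-Tucker lemma. I would adapt the path-following steps of~\cite{Ziegler02} to this bipartite-graph framework, and verify, in the spirit of the general reduction of Filos-Ratsikas et al.~\cite[Theorem~5.2]{FHSZ21}, that every local operation on chains of signed sets is efficiently computable, thereby placing $\ZpTucker$ inside $\PPAp$.
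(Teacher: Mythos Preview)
The paper does not supply its own proof of this theorem: it is stated as a quoted result, with the sentence ``Its membership in $\PPAp$, stated below, follows from a much more general result due to Filos-Ratsikas et al.~\cite[Theorem~5.2]{FHSZ21}.'' So there is nothing in the paper to compare your proposal against; the paper simply invokes the cited theorem as a black box.

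As for your sketch itself, the overall shape is reasonable --- one does place $\ZpTucker$ in $\PPAp$ by building, from chains of signed sets and the map $\lambda$, a combinatorial object on which a free $\Z_p$-action forces degrees to vanish modulo $p$ except at a planted start and at solutions --- but what you have written is an outline, not a proof. You never actually specify the vertex set, the edge relation, the starting vertex, or the local pivoting rule, and the degree-mod-$p$ accounting (which is the entire content of the argument) is asserted rather than carried out. One technical slip: your remark that ``orbit sizes under the free $\Z_p$-action are exactly $p$ only when $p$ is prime'' is off --- a free $\Z_p$-action has orbits of size $p$ for any $p$, by definition of freeness. Primality enters the argument elsewhere, typically to guarantee that the action on the relevant sub-configurations \emph{is} free (since $\Z_p$ has no nontrivial proper subgroups when $p$ is prime), and to make certain divisibility counts go through. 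If you want a self-contained proof rather than a citation, you would need to follow the actual construction in~\cite[Theorem~5.2]{FHSZ21} and verify the degree computation explicitly; the present proposal does not yet do that.
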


In order to obtain the membership of the $\KneserP^p$ problem in $\PPAp$ for general sequences of set families, we need the following definition.

\begin{definition}\label{def:computable_seq_order}
For a set $\calA$, let $\calF = (\calF_\alpha)_{\alpha \in \calA}$ be a sequence of set families, where for each $\alpha \in \calA$, $\calF_\alpha$ is a family of non-empty subsets of $[n_\alpha]$ for some integer $n_\alpha$.
The sequence $\calF$ is {\em strongly polynomially computable} if it is possible to associate with each family $\calF_\alpha$ a linear order on its members, denoted by $\leq$, such that there exist polynomials $q_1,q_2,q_3$ satisfying that
\begin{enumerate}
  \item\label{itm:seq1s} there exists an algorithm that given an element $\alpha \in \calA$ runs in time $q_1(n_\alpha)$ and returns a Boolean circuit $C_1:\{0,1\}^{2n_\alpha} \rightarrow \{0,1\}$ such that for every pair of sets $B_1,B_2 \in \calF_\alpha$, it holds that $C_1(B_1,B_2)=1$ if and only if $B_1 \leq B_2$,
  \item\label{itm:seq2s} there exists an algorithm that given an element $\alpha \in \calA$ runs in time $q_2(n_\alpha)$ and returns a Boolean circuit $C_2:\{0,1\}^{n_\alpha} \rightarrow \{0,1\}^{n_\alpha}$ such that for every set $D\subseteq [n_\alpha]$, $C_2(D)$ is the minimal subset of $D$, with respect to the order $\leq$, that belongs to the family $\calF_\alpha$ if such a subset exists, and the empty set otherwise, and
  \item\label{itm:seq3s} for every prime $p$, there exists an algorithm that given an element $\alpha \in \calA$ runs in time $q_3(n_\alpha)$ and returns the value of ${\cd}_p(\calF_\alpha)$.
\end{enumerate}
\end{definition}

The following lemma gives simple examples of strongly polynomially computable sequences.

\begin{lemma}\label{lemma:stronglyFnk}
For every integer $r \geq 2$, the sequence $\calF^{(r)}$ is strongly polynomially computable.
\end{lemma}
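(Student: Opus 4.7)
The plan is to verify the three conditions of Definition~\ref{def:computable_seq_order} directly, using a natural linear order on each family $\binom{[n]}{k}$. I will order $k$-subsets lexicographically by their sorted tuples: for $B_1, B_2 \in \binom{[n]}{k}$ with sorted enumerations $B_1 = \{b_1 < \cdots < b_k\}$ and $B_2 = \{b'_1 < \cdots < b'_k\}$, declare $B_1 \leq B_2$ iff $(b_1, \ldots, b_k) \leq (b'_1, \ldots, b'_k)$ in the standard lexicographic order on integer tuples. The point of this choice, which I will exploit in Item~\ref{itm:seq2s}, is that for any $D \subseteq [n]$ with $|D| \geq k$ the $\leq$-minimum of $\{B \in \binom{[n]}{k} \mid B \subseteq D\}$ is precisely the set of the $k$ smallest elements of $D$, an operation that admits a direct Boolean-circuit realization.

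For Item~\ref{itm:seq1s}, given $\alpha = (n,k)$ I will emit a circuit $C_1$ that, on input characteristic vectors $B_1, B_2 \in \{0,1\}^n$, first tests $|B_1| = |B_2| = k$ via standard Hamming-weight subcircuits, and then performs the lexicographic comparison by a single left-to-right scan: at the first position where $B_1$ and $B_2$ differ, the vector carrying a $1$ at that position has a strictly smaller next element in its sorted list and is therefore the $\leq$-smaller set. For Item~\ref{itm:seq2s}, the circuit $C_2$ on input $D \in \{0,1\}^n$ will first verify $|D| \geq k$ and return $0^n$ if not, and otherwise output the characteristic vector of the $k$ smallest elements of $D$ by means of standard prefix-sum subcircuits that retain the first $k$ ones of $D$. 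Correctness is immediate from the order: for any $B \in \binom{[n]}{k}$ with $B \subseteq D$ and any index $i$, the $i$-th smallest element of $B$ is at least the $i$-th smallest element of $D$, so the set of the $k$ smallest elements of $D$ is the $\leq$-minimum of $\{B \in \binom{[n]}{k} \mid B \subseteq D\}$. Both circuits have size polynomial in $n$, and their syntactic construction runs in polynomial time.

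For Item~\ref{itm:seq3s}, Lemma~\ref{lemma:cd_Kneser} will supply $\cd_p(\binom{[n]}{k}) = n - p(k-1)$ whenever $n \geq p \cdot k$, which is evaluable in polynomial time. In the residual range $n < p \cdot k$ the same expression $\max(0, n - p(k-1))$ remains correct, since the sub-hypergraph of $\binom{[n]}{k}$ on $[n] \setminus Y$ is $p$-colorable precisely when $|[n] \setminus Y| \leq p(k-1)$. No single step here poses a serious obstacle; all three verifications are essentially routine once the order is fixed. The only design choice worth highlighting is picking lexicographic order on sorted tuples, as this is what makes the ``$k$ smallest elements of $D$'' rule in Item~\ref{itm:seq2s} directly match the $\leq$-minimum and yields circuit-level descriptions of both $C_1$ and $C_2$ at essentially no cost.
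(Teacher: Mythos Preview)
Your proof is correct and takes essentially the same approach as the paper: the lexicographic order on sorted $k$-tuples coincides with the paper's order (``$B_1 \leq B_2$ iff the smallest element of $B_1 \triangle B_2$ lies in $B_1$''), and both proofs use the same circuit descriptions (left-to-right scan of characteristic vectors for Item~\ref{itm:seq1s}, the $k$ smallest elements of $D$ for Item~\ref{itm:seq2s}, and Lemma~\ref{lemma:cd_Kneser} for Item~\ref{itm:seq3s}). Your treatment of the residual range $n < p \cdot k$ in Item~\ref{itm:seq3s} is a small extra care the paper omits, but it is a harmless refinement rather than a different route.
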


\begin{proof}
Fix an integer $r \geq 2$, and recall that $\calF^{(r)}_{(n,k)} = \binom{[n]}{k}$ for all $(n,k) \in \calA^{(r)}$.
We associate with the sets of $\calF^{(r)}_{(n,k)}$ the linear order $\leq$, defined by $B_1 \leq B_2$ if $B_1 = B_2$ or the smallest element of $B_1 \triangle B_2$ belongs to $B_1$.
To show that $\calF^{(r)}$ is strongly polynomially computable, we verify the three conditions of Definition~\ref{def:computable_seq_order}.
For Item~\ref{itm:seq1s}, consider the algorithm that, given a pair $(n,k) \in \calA^{(r)}$, produces a Boolean circuit that on input of two sets $B_1,B_2 \in \binom{[n]}{k}$, checks whether $B_1 \leq B_2$, that is, either their characteristic vectors are equal, or the first entry in which they differ is $1$ in the vector of $B_1$.
For Item~\ref{itm:seq2s}, consider the algorithm, that given a pair $(n,k) \in \calA^{(r)}$, produces a Boolean circuit that on an input set $D \subseteq [n]$, returns the empty set if $|D|<k$, and the set of the $k$ smallest elements of $D$ otherwise.
Clearly, both algorithms can be implemented in time polynomial in $n$.
For Item~\ref{itm:seq3s}, observe that the desired algorithm follows from Lemma~\ref{lemma:cd_Kneser}. This completes the proof.
\end{proof}

\subsection{From \texorpdfstring{$\KneserP^p$}{Kneser-p} to \texorpdfstring{$\ZpTucker$}{Zp-Tucker}}

The following theorem asserts that the $\KneserP^p(\calF,m)$ problem is efficiently reducible to the $\ZpTucker$ problem for every strongly polynomially computable sequence $\calF$, whenever the number of colors $m$ is smaller than the bound given by Theorem~\ref{thm:IntroKriz}. Its proof verifies that a mathematical argument of Ziegler~\cite{Ziegler02} can be transformed into an efficient reduction. We present it here with the details for completeness.

\begin{theorem}\label{thm:Kneser^p(F)->Tucker}
Let $p$ be a prime, and let $\calF = (\calF_\alpha)_{\alpha \in \calA}$ be a strongly polynomially computable sequence of set families.
Then, $\KneserP^p(\calF, \lfloor \frac{\cd_p(\calF_\alpha)-1}{p-1} \rfloor )$ is polynomial-time reducible to $\ZpTucker$.
\end{theorem}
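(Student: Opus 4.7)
The plan is to turn Ziegler's topological proof of Theorem~\ref{thm:IntroKriz} into an efficient reduction. Given an input of the $\KneserP^p(\calF,\lfloor(\cd_p(\calF_\alpha)-1)/(p-1)\rfloor)$ problem, write $n=n_\alpha$, $d=\cd_p(\calF_\alpha)$, and $m=\lfloor(d-1)/(p-1)\rfloor$; the value of $d$ is computable in polynomial time by Item~\ref{itm:seq3s} of Definition~\ref{def:computable_seq_order}. Choose $N$ to be the smallest integer in $\{n,n+1,\ldots,n+p-2\}$ satisfying $(p-1)\mid (N-d)$, and set $s=(N-d)/(p-1)+m$. Since $(p-1)\mid(N-d)$ and $m\le\lfloor(d-1)/(p-1)\rfloor$, it follows that $s\le(N-d)/(p-1)+\lfloor(d-1)/(p-1)\rfloor=\lfloor(N-1)/(p-1)\rfloor$, so $(N,s)$ respects the $\ZpTucker$ constraint.

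The reduction outputs a Boolean circuit for a $\Z_p$-equivariant map $\lambda:(\Z_p\cup\{0\})^N\setminus\{0\}^N\to\Z_p\times[s]$. For $X$ in the domain, let $X^{(t)}=\{j\in[N]\mid X_j=\omega^t\}$ for $t\in[p]$. The labeling splits into two cases. In \emph{Case B}, where some $X^{(t)}\cap[n]$ contains a set of $\calF_\alpha$, let $B^*$ be the minimum such set under the linear order on $\calF_\alpha$ from Definition~\ref{def:computable_seq_order}, let $t^*\in[p]$ be the unique sign with $B^*\subseteq X^{(t^*)}\cap[n]$, and set $\lambda(X)=(\omega^{t^*},(N-d)/(p-1)+c(B^*))$, where $c$ is the coloring given by the input circuit $C$. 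In the complementary \emph{Case A}, the parts $X^{(1)}\cap[n],\ldots,X^{(p)}\cap[n]$ form a proper $p$-coloring of $[n]\setminus(X^{(0)}\cap[n])$ with respect to $\calF_\alpha$, forcing $|X^{(0)}\cap[n]|\ge d$ and hence $|X|\le N-d$; define $\lambda(X)=(X_{a(X)},\lceil|X|/(p-1)\rceil)$, where $a(X)$ is the largest index with $X_{a(X)}\ne 0$. The circuits guaranteed by Definition~\ref{def:computable_seq_order} together with $C$ let us construct the circuit for $\lambda$ in polynomial time, and the $\Z_p$-equivariance is immediate from the definitions.

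To convert a $\ZpTucker$ solution $X_1\preceq\cdots\preceq X_p$ into a monochromatic hyperedge of $K^p(\calF_\alpha)$, observe first that Case B is upward closed under $\preceq$ and that the second-coordinate ranges $[1,(N-d)/(p-1)]$ and $[(N-d)/(p-1)+1,s]$ are disjoint, so the chain lies entirely in one case. Case A is excluded as follows: the requirement of pairwise distinct signs in the first coordinates $X_i[a(X_i)]$ forces the positions $a(X_i)$ to be strictly increasing (any repetition $a(X_i)=a(X_j)$ would identify two allegedly distinct signs via $X_i\preceq X_j$), yielding $|X_p|-|X_1|\ge p-1$, which contradicts the fact that the common second coordinate $\lceil|X_i|/(p-1)\rceil=\ell$ confines the $|X_i|$ to a bucket of diameter at most $p-2$. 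Hence the chain lies in Case B, and each $X_i$ contributes a set $B_i^*\in\calF_\alpha$ with $B_i^*\subseteq X_i^{(t_i^*)}\subseteq X_p^{(t_i^*)}$. The distinctness of the $t_i^*$ combined with the pairwise disjointness of the parts $X_p^{(t)}$ makes the $B_i^*$ pairwise disjoint, while the equality of second coordinates forces $c(B_1^*)=\cdots=c(B_p^*)$, providing the required monochromatic hyperedge.

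The main obstacle is designing Case A so that it neither exceeds the available budget of $\lfloor(N-1)/(p-1)\rfloor$ labels nor admits a spurious length-$p$ chain with distinct signs. Compressing the support size by the factor $p-1$ is essential: without it the direct Ziegler-style labeling would saturate the $\ZpTucker$ constraint only when $p=2$, while the chain argument just barely rules out Case A because a strictly increasing sequence $a(X_1)<\cdots<a(X_p)$ buys $p-1$ new support elements, exceeding the bucket diameter $p-2$ by one. The small padding up to dimension $N$ is included precisely so that $(p-1)\mid(N-d)$ turns the ceiling into an equality, making the counts match tightly and yielding a polynomial-time reduction whose correctness can be read off from the strong polynomial computability of $\calF$.
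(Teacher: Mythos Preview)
Your proof is correct and follows essentially the same approach as the paper's, namely Ziegler's argument: label small-support signed sets by the sign of an extremal nonzero entry together with $\lceil |X|/(p-1)\rceil$, and label the remaining signed sets via the colour of a canonical member of $\calF_\alpha$ contained in one of the parts. The cosmetic differences---using the largest rather than the smallest nonzero index, splitting cases by the semantic criterion ``some $X^{(t)}\cap[n]$ contains a set of $\calF_\alpha$'' rather than by the threshold on $|X|$, and padding so that $(p-1)\mid(N-d)$ rather than $(p-1)\mid(n-1)$---do not affect the argument; in particular, your exclusion of Case~A via the strictly increasing $a(X_i)$ is just a rephrasing of the paper's pigeonhole step showing that two $|X_{t_i}|$ must coincide.
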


\begin{proof}
Fix a prime $p$ and a strongly polynomially computable sequence of set families $\calF = (\calF_\alpha)_{\alpha \in \calA}$.
Consider an instance of $\KneserP^p(\calF, \lfloor \frac{\cd_p(\calF_\alpha)-1}{p-1} \rfloor )$, that is, an element $\alpha \in \calA$ along with a Boolean circuit that represents a coloring $c: \calF_\alpha \rightarrow [m]$ of the vertices of $K^p(\calF_\alpha)$, where $\calF_\alpha$ is a family of subsets of $[n]$ for some $n=n_\alpha$ and $m = m(\alpha) = \lfloor \frac{\cd_p(\calF_\alpha)-1}{p-1} \rfloor$.
It may be assumed that $p-1$ divides $n-1$. Indeed, otherwise we increase the size $n$ of the ground set of $\calF_\alpha$ by at most $p-2$, so that this condition will be satisfied. This change of $n$ does not affect the chromatic number of $K^p(\calF_\alpha)$ nor the value of ${\cd}_p(\calF_\alpha)$.

Consider the function $\lambda: (\Z_p \cup \{0\})^n \setminus \{0\}^n \rightarrow \Z_p \times [s]$ defined as follows. For a given $X \in (\Z_p \cup \{0\})^n \setminus \{0\}^n$, put $X^t = \{j \in [n] \mid X_j = \omega^t \}$ for each $t \in [p]$, and let $|X| = \sum_{t=1}^{p}{|X^t|}$ denote the number of nonzero elements in $X$. We define $\lambda(X) = (\lambda_1(X), \lambda_2(X))$ according to the following two cases.
\begin{enumerate}
  \item If $|X| \leq n-\cd_p(\calF_\alpha)$, then let $t \in [p]$ denote the index of the set $X^t$ that includes the smallest element of $\cup_{t' \in [p]}{X^{t'}}$, and define
  $\lambda(X) = (\omega^t, \lceil \frac{|X|}{p-1} \rceil)$.
  \item If $|X| > n-\cd_p(\calF_\alpha)$, then it follows that some $X^t$ contains a set of $\calF_\alpha$. Among the sets of $\calF_\alpha$ that are contained in one of the sets $X^1, \ldots, X^p$, let $B$ denote the minimal one with respect to the linear order associated with $\calF_\alpha$. Let $t \in [p]$ denote the index satisfying $B\subseteq X^t$, and define $\lambda(X) = (\omega^t, c(B)+\lceil \frac{n-\cd_p(\calF_\alpha)}{p-1} \rceil )$.
\end{enumerate}
Observe that the function $\lambda$ returns pairs in $\Z_p \times [s]$ for
\[s = m + \Big \lceil \frac{n-\cd_p(\calF_\alpha)}{p-1} \Big \rceil = \Big \lfloor \frac{\cd_p(\calF_\alpha)-1}{p-1} \Big \rfloor + \Big \lfloor \frac{n-\cd_p(\calF_\alpha)-1}{p-1} \Big \rfloor +1 = \frac{n-1}{p-1},\]
where in the last equality we use the fact that $p-1$ divides $n-1$.
We claim that $\lambda$ is $\Z_p$-equivariant. To see this, consider some $X \in (\Z_p \cup \{0\})^n \setminus \{0\}^n$ and $t \in [p]$, and notice that $|X| = |\omega^t X|$. Notice further that for each $t' \in [p]$, the indices of the entries in $X$ that are equal to $\omega^{t'}$ are precisely the indices of the entries in $\omega^t X$ that are equal to $\omega^{t+t'}$. In particular, if the first nonzero element in $X$ is $\omega^{t'}$ then the first nonzero element in $\omega^t X$ is $\omega^{t+t'}$. By considering the two cases in the definition of $\lambda$, it follows that $\lambda(\omega^t X) = (\omega^t \lambda_1(X),\lambda_2(X))$, as required.
The assumption that $\calF$ is strongly polynomially computable (see Definition~\ref{def:computable_seq_order}) implies that given a Boolean circuit that represents the coloring $c$, it is possible to construct in polynomial time a Boolean circuit that represents the function $\lambda$.

To prove the correctness of the reduction, consider a solution for the produced instance of $\ZpTucker$, that is, a chain of $p$ signed sets $X_1 \preceq X_2 \preceq \cdots \preceq X_p \in (\Z_p \cup \{0\})^n \setminus \{0\}^n$ for which there exist a permutation $\pi$ of $[p]$ and an $\ell \in [s]$, such that $\lambda(X_t) = (\omega^{\pi(t)}, \ell)$ for all $t \in [p]$. We first claim that $\ell > \lceil \frac{n-\cd_p(\calF_\alpha)}{p-1} \rceil$. To see this, suppose for contradiction that $\ell \leq \lceil \frac{n-\cd_p(\calF_\alpha)}{p-1} \rceil$, and observe that the definition of $\lambda$ implies that $\ell = \lceil \frac{|X_t|}{p-1} \rceil$ for all $t \in [p]$. It follows that for some indices $t_1 < t_2 \in [p]$, it holds that $|X_{t_1}| = |X_{t_2}|$. However, by $X_{t_1} \preceq X_{t_2}$, it follows that $X_{t_1} = X_{t_2}$, in contradiction to the fact that $\lambda_1(X_{t_1}) \neq \lambda_1(X_{t_2})$.

Finally, using $\ell > \lceil \frac{n-\cd_p(\calF_\alpha)}{p-1} \rceil$, the definition of $\lambda$ implies that for every $t \in [p]$, there exists a subset of $X_t^{\pi(t)}$ that belongs to $\calF_\alpha$. For each $t \in [p]$, let $B_t$ denote the minimal set of $\calF_\alpha$ that is contained in $X_t^{\pi(t)}$, with respect to the order associated with $\calF_\alpha$. By the definition of $\lambda$, we have $\ell = c(B_t) + \lceil \frac{n-\cd_p(\calF_\alpha)}{p-1} \rceil$ for all $t \in [p]$. The assumption that $\calF$ is strongly polynomially computable allows us to find the sets $B_1, \ldots, B_p$ efficiently (see Definition~\ref{def:computable_seq_order}, Item~\ref{itm:seq2s}).
For any $t_1 < t_2 \in [p]$, the fact that $X_{t_1} \preceq X_{t_2}$ implies that $B_{t_1}$ and $B_{t_2}$ are disjoint. It thus follows that the sets $B_1, \ldots, B_p$ form a monochromatic hyperedge of the coloring $c$ of $K^p(\calF_\alpha)$.
Since this monochromatic hyperedge can be found in polynomial time given the signed sets $X_1,X_2, \ldots, X_p$, we are done.
\end{proof}

By combining Theorem~\ref{thm:TuckerinPPAp} with Theorem~\ref{thm:Kneser^p(F)->Tucker}, we derive the following corollary.
\begin{corollary}\label{cor:Kneser^p_cd_inPPAp}
Let $p$ be a prime, and let $\calF = (\calF_\alpha)_{\alpha \in \calA}$ be a strongly polynomially computable sequence of set families.
Then, the $\KneserP^p(\calF, \lfloor \frac{\cd_p(\calF_\alpha)-1}{p-1} \rfloor )$ problem lies in $\PPAp$.
\end{corollary}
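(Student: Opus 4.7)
The plan is to observe that this corollary is an immediate composition of the two results already established: Theorem~\ref{thm:TuckerinPPAp}, which places the $\ZpTucker$ problem in $\PPAp$, and Theorem~\ref{thm:Kneser^p(F)->Tucker}, which supplies a polynomial-time reduction from $\KneserP^p(\calF, \lfloor \frac{\cd_p(\calF_\alpha)-1}{p-1} \rfloor)$ to $\ZpTucker$ whenever $\calF$ is strongly polynomially computable. Since $\PPAp$, like every standard sub-class of $\TFNP$, is closed under polynomial-time (many-one) reductions, membership of the target problem in $\PPAp$ transfers to the source.

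Concretely, I would first invoke Theorem~\ref{thm:Kneser^p(F)->Tucker} on the given sequence $\calF$ to obtain an efficient reduction that transforms any instance $(\alpha, C)$ of $\KneserP^p(\calF, \lfloor \frac{\cd_p(\calF_\alpha)-1}{p-1} \rfloor)$ into an instance $(n, s, \Lambda)$ of $\ZpTucker$, together with a polynomial-time procedure that converts any $\ZpTucker$-solution back into a solution of the original Kneser instance. Then I would compose this with the $\PPAp$-membership of $\ZpTucker$ from Theorem~\ref{thm:TuckerinPPAp}, i.e., with the polynomial-time reduction from $\ZpTucker$ to the canonical $\PPAp$-complete problem. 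The composition is itself a polynomial-time reduction, witnessing that $\KneserP^p(\calF, \lfloor \frac{\cd_p(\calF_\alpha)-1}{p-1} \rfloor)$ belongs to $\PPAp$.

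There is essentially no obstacle here, since the corollary is a pure transitivity statement; the substantive work has already been carried out in Theorem~\ref{thm:Kneser^p(F)->Tucker}. The only minor point worth noting explicitly is that the reduction from Theorem~\ref{thm:Kneser^p(F)->Tucker} is many-one rather than Turing, so no concern about oracle-style composition with $\PPAp$ arises, and the totality of the source problem (guaranteed by Theorem~\ref{thm:IntroKriz} applied to $m = \lfloor \frac{\cd_p(\calF_\alpha)-1}{p-1} \rfloor$) ensures it genuinely lies in $\TFNP$, which is the setting in which $\PPAp$-membership is defined.
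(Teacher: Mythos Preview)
Your proposal is correct and matches the paper's approach exactly: the paper states the corollary as an immediate consequence of combining Theorem~\ref{thm:TuckerinPPAp} with Theorem~\ref{thm:Kneser^p(F)->Tucker}, precisely via the closure of $\PPAp$ under polynomial-time reductions that you spell out.
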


As a special case of Corollary~\ref{cor:Kneser^p_cd_inPPAp}, we derive Theorem~\ref{thm:IntroPPAp}.

\begin{proof}[ of Theorem~\ref{thm:IntroPPAp}]
Fix a prime $p$.
By Lemma~\ref{lemma:stronglyFnk}, the sequence $\calF^{(p)}$ is strongly polynomially computable, hence by Corollary~\ref{cor:Kneser^p_cd_inPPAp}, the $\KneserP^p(\calF^{(p)}, \lfloor \frac{\cd_p(\calF^{(p)}_{(n,k)})-1}{p-1} \rfloor)$ problem lies in $\PPAp$.
By Lemma~\ref{lemma:cd_Kneser}, it holds that $\cd_p(\calF^{(p)}_{(n,k)}) = n-p(k-1)$, hence the latter problem coincides with the $\KneserP^p$ problem, and we are done.
\end{proof}

We next consider the $\KneserPAstab^p$ problem associated with the hypergraph $K^p(n,k)_{\AlmostTwoStab}$, that is, the sub-hypergraph of $K^p(n,k)$ induced by the almost stable $k$-subsets of $[n]$. Corollary~\ref{cor:Kneser^p_cd_inPPAp} can be applied to obtain a membership result in $\PPAp$ for this setting, however, it does not give the largest possible number of colors. To obtain the result with an optimal number of colors, which is smaller than the chromatic number only by one, we apply a modified argument of Meunier~\cite{Meunier11}, verifying that it provides an efficient reduction. Let us mention, though, that the proof of~\cite{Meunier11} applies a slightly different variant of the $\Z_p$-Tucker lemma, whose proof relies on Dold's theorem~\cite{Dold83}. The proof below uses the version of the lemma that corresponds to our definition of the $\ZpTucker$ problem, which lies in $\PPAp$. See~\cite[Remark~1]{FHSZ21} for a discussion on the computational aspects of Dold's theorem.

\begin{theorem}\label{thm:Schrijver^p->Tucker}
For every prime $p$, $\KneserPAstab^p$ is polynomial-time reducible to $\ZpTucker$.
\end{theorem}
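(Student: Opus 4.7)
The plan is to adapt the proof of Theorem~\ref{thm:Kneser^p(F)->Tucker} by replacing the cardinality of the signed set with its alternation number in the definition of $\lambda$, so that the labeling can accommodate the tight bound $m = \chi(K^p(n,k)_{\AlmostTwoStab}) - 1$, which in general exceeds what a direct invocation of $\cd_p$ on $\binom{[n]}{k}_{\AlmostTwoStab}$ would give. This is the computational rendition of the $\Z_p$-Tucker argument used by Meunier~\cite{Meunier11} for Schrijver-type hypergraphs, and it is cleaner in the almost stable setting because we only need to avoid linear adjacencies in $[n]$ rather than cyclic ones.

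As in the proof of Theorem~\ref{thm:Kneser^p(F)->Tucker}, it may be assumed (by padding the universe with up to $p-2$ dummy positions) that $p-1$ divides $n-1$. For nonzero $X \in (\Z_p \cup \{0\})^n$, let $\alt(X)$ denote the length of the longest subsequence of nonzero entries of $X$ at non-dummy positions with pairwise distinct consecutive signs, equivalently the length of the greedy-from-left trace that starts at the leftmost non-dummy nonzero position and repeatedly appends the leftmost further non-dummy position whose sign differs from that of the last chosen one. Define $\lambda(X) = (\lambda_1(X), \lambda_2(X))$ by two cases. If $\alt(X) \leq p(k-1)$, set $\lambda_1(X)$ to be the sign of the leftmost non-dummy nonzero entry of $X$ (with a trivial default when only dummy positions are nonzero) and $\lambda_2(X) = \lceil \alt(X)/(p-1)\rceil$. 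Otherwise $\alt(X) \geq p(k-1)+1$; by pigeonhole on any witnessing alternating subsequence, some sign class contains at least $k$ of its indices, and these indices are pairwise at distance $\geq 2$ in $[n]$ (because consecutive entries of an alternating subsequence differ in sign), so they form an almost stable $k$-subset contained in some $X^t$. Let $B$ be the lex-minimal such $k$-subset over all choices of $t$, and set $\lambda_1(X) = \omega^t$ for the corresponding $t$ and $\lambda_2(X) = \lceil p(k-1)/(p-1)\rceil + c(B)$. A count in the spirit of Theorem~\ref{thm:Kneser^p(F)->Tucker} yields $s = \lceil p(k-1)/(p-1)\rceil + m = (n-1)/(p-1)$, and the map is $\Z_p$-equivariant and polynomial-time computable from $C$.

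For correctness, given a $\ZpTucker$ solution $X_1 \preceq \cdots \preceq X_p$ with $\lambda(X_t) = (\omega^{\pi(t)}, \ell)$ for a permutation $\pi$ of $[p]$, I would first rule out $\ell \leq \lceil p(k-1)/(p-1)\rceil$. In that range the $\alt$ values lie in a window of $p-1$ consecutive integers, so $\alt(X_i) = \alt(X_j)$ for some $i < j$. Using $X_i \preceq X_j$, either the leftmost non-dummy nonzero positions of $X_i$ and $X_j$ coincide, forcing the same sign and contradicting $\pi(i) \neq \pi(j)$, or $X_j$ has a new leftmost non-dummy position $q$ with sign $\omega^{\pi(j)} \neq \omega^{\pi(i)}$; prepending $q$ to the greedy alternating trace of $X_i$ (which begins with sign $\omega^{\pi(i)}$) yields an alternating subsequence of $X_j$ of length $\alt(X_i)+1$, contradicting $\alt(X_i) = \alt(X_j)$. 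Hence $\ell$ lies in the coloring region, the associated subsets $B_t \subseteq X_t^{\pi(t)}$ share the color $\ell - \lceil p(k-1)/(p-1)\rceil$, and for $i < j$ the chain relation $X_i \preceq X_j$ yields $B_i \subseteq X_j^{\pi(i)}$, which is disjoint from $X_j^{\pi(j)} \supseteq B_j$ because $\pi(i) \neq \pi(j)$. Thus $\{B_1, \ldots, B_p\}$ is a monochromatic hyperedge of $K^p(n,k)_{\AlmostTwoStab}$, and it can be extracted from the solution in polynomial time.

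The main technical obstacle is the alternation-region step: $\lambda_1$ and the definition of $\alt$ must be paired so that a chain step $X \preceq X'$ either preserves the leftmost non-dummy entry (and hence $\lambda_1$) or strictly increases $\alt$. The greedy-from-left trace, together with the choice of $\lambda_1$ as the sign of the leftmost non-dummy nonzero entry, achieves this exactly, since prepending any earlier entry of a different sign lengthens the trace by one. A secondary point is to manage the dummy positions introduced by padding; confining $\alt$ and $\lambda_1$ to the original positions, and assigning $\lambda$ trivially on signed sets supported only on dummy positions, keeps the coloring-region extraction inside $\binom{[n]}{k}_{\AlmostTwoStab}$ and does not interfere with the alternation-region contradiction.
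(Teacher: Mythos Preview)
Your plan is essentially the paper's: replace cardinality by the alternation number $\alt$ in the definition of $\lambda$, exactly following Meunier's argument. The coloring-region analysis and the extraction of the monochromatic hyperedge are fine. The gap is in your treatment of the padding.

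You pad so that $p-1$ divides $N-1$, then exclude the dummy positions from both $\alt$ and $\lambda_1$, and handle signed sets supported only on dummies by a ``trivial default''. This breaks the alternation-region contradiction. Concretely, take $p=3$, $k=2$, $n=6$, so $N=7$ with one dummy position (position~$7$); here $p(k-1)=3$ and $\ell=1$ corresponds to $\alt\in\{1,2\}$ together with the default $\alt=0$. Consider the chain
\[
X_1=(0,0,0,0,0,0,\omega^1),\quad X_2=(0,0,0,0,\omega^2,0,\omega^1),\quad X_3=(0,\omega^3,0,0,\omega^2,0,\omega^1).
\]
Then $\alt(X_1)=0$, $\alt(X_2)=1$, $\alt(X_3)=2$, and with the natural $\Z_p$-equivariant default (sign of the first nonzero dummy entry) one gets $\lambda(X_1)=(\omega^1,1)$, $\lambda(X_2)=(\omega^2,1)$, $\lambda(X_3)=(\omega^3,1)$. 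This is a valid $\ZpTucker$ output, yet no two of the $\alt$ values coincide, so your pigeonhole step fails and no hyperedge can be extracted. No choice of a $\Z_p$-equivariant default avoids this, since you can always rotate $X_1$ to realise any desired $\lambda_1(X_1)$.

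The paper's fix is not to exclude the dummy positions but to \emph{include} them in $\alt$ and to pad by the smallest $a$ with $(p-1)\mid p(k-1)+a$, shifting the threshold to $p(k-1)+a$. Then every nonzero $X$ has $\alt(X)\geq 1$, the $p$ values $\alt(X_1),\ldots,\alt(X_p)$ genuinely lie in a window of size $p-1$, and your pigeonhole/leftmost-sign argument goes through verbatim; moreover, in the coloring region the extra $a\leq p-2$ tail positions can contribute at most $a$ to $\alt$, so $\alt(X)>p(k-1)+a$ still forces an alternating subsequence of length $p(k-1)+1$ inside $[n]$ and hence an almost stable $k$-subset of $[n]$ in some $X^t$. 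Your claimed equality $s=(n-1)/(p-1)$ is also not always exact under your padding convention (it is only an inequality), whereas the paper's choice of $a$ makes the arithmetic tight.
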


\begin{proof}
Fix a prime $p$, and consider an instance of $\KneserPAstab^p$, that is, integers $n$ and $k$ satisfying $n \geq p \cdot k$ along with a Boolean circuit that represents a coloring $c: \binom{[n]}{k}_{\AlmostTwoStab} \rightarrow [m]$ of the vertices of $K^p(n,k)_{\AlmostTwoStab}$ where $m = \lfloor \frac{n-p(k-1)-1}{p-1} \rfloor$. Let $a$ denote the smallest non-negative integer such that $p-1$ divides $p(k-1)+a$, and set $n'=n+a$.

Consider the function $\lambda: (\Z_p \cup \{0\})^{n'} \setminus \{0\}^{n'} \rightarrow \Z_p \times [s]$ defined as follows. For a given $X$ in $(\Z_p \cup \{0\})^{n'} \setminus \{0\}^{n'}$, let $\alt(X)$ denote the length of a longest alternating subsequence of nonzero terms of $X$, namely, the largest integer $r$ for which there exist indices $i_1 < \cdots < i_r$ such that $X_{i_j} \neq 0$ for all $j \in [r]$ and $X_{i_j} \neq X_{i_{j+1}}$ for all $j \in [r-1]$.
Put $X^t = \{j \in [n'] \mid X_j = \omega^t \}$ for each $t\in [p]$, and define $\lambda(X) = (\lambda_1(X), \lambda_2(X))$ according to the following two cases.
\begin{enumerate}
  \item If $\alt(X) \leq p(k-1)+a$, then let $t \in [p]$ denote the index of the set $X^t$ that includes the smallest element of $\cup_{t' \in [p]}{X^{t'}}$, and define $\lambda(X) = (\omega^t, \lceil \frac{\alt(X)}{p-1} \rceil)$.
  \item If $\alt(X) > p(k-1)+a$, then consider a longest alternating subsequence of nonzero terms of $X$. Since its length is at least $p(k-1)+a+1$, it must contain an alternating subsequence of $p(k-1)+1$ nonzero terms of $X$ that lie in the first $n$ entries of $X$. This implies that some set $X^t$ contains an almost stable $k$-subset of $[n]$. Among the almost stable $k$-subsets of $[n]$ that are contained in one of the sets $X^1, \ldots, X^p$, let $B$ denote the minimal one with respect to some linear order associated with $\binom{[n]}{k}_{\AlmostTwoStab}$ (say, $B_1 \leq B_2$ if $B_1 = B_2$ or the smallest element of $B_1 \triangle B_2$ belongs to $B_1$). Let $t \in [p]$ denote the index that satisfies $B \subseteq X^t$, and define $\lambda(X) = (\omega^t, c(B)+\frac{p(k-1)+a}{p-1})$.
\end{enumerate}
Observe that the function $\lambda$ returns pairs in $\Z_p \times [s]$ for
\[s = m + \frac{p(k-1)+a}{p-1} = \Big \lfloor \frac{n-p(k-1)-1}{p-1} \Big \rfloor + \frac{p(k-1)+a}{p-1} = \Big \lfloor \frac{n+a-1}{p-1} \Big \rfloor = \Big \lfloor \frac{n'-1}{p-1} \Big \rfloor.\]
We claim that $\lambda$ is $\Z_p$-equivariant. This indeed follows by considering the two cases in the definition of $\lambda$, using the observation that for all $X \in (\Z_p \cup \{0\})^{n'} \setminus \{0\}^{n'}$ and $t \in [p]$, it holds that $\alt(X) = \alt(\omega^t X)$.
It can be verified that given a Boolean circuit that represents the coloring $c$, it is possible to construct in polynomial time a Boolean circuit that represents the function $\lambda$.

To prove the correctness of the reduction, consider a solution for the produced instance of $\ZpTucker$, that is, a chain of $p$ signed sets $X_1 \preceq X_2 \preceq \cdots \preceq X_p \in (\Z_p \cup \{0\})^{n'} \setminus \{0\}^{n'}$ for which there exist a permutation $\pi$ of $[p]$ and an $\ell \in [s]$, such that $\lambda(X_t) = (\omega^{\pi(t)}, \ell)$ for all $t \in [p]$. We first claim that $\ell > \frac{p(k-1)+a}{p-1}$. To see this, suppose for contradiction that $\ell \leq \frac{p(k-1)+a}{p-1}$, and observe that the definition of $\lambda$ implies that $\ell = \lceil \frac{\alt(X_t)}{p-1} \rceil$ for all $t \in [p]$. It follows that some indices $t_1 < t_2 \in [p]$ satisfy that $\alt(X_{t_1}) = \alt(X_{t_2})$. However, by $X_{t_1} \preceq X_{t_2}$, a longest alternating subsequence of nonzero terms of $X_{t_1}$ is also a subsequence of $X_{t_2}$. This implies that the first nonzero elements in $X_{t_1}$ and $X_{t_2}$ are equal, in contradiction to the fact that $\lambda_1(X_{t_1}) \neq \lambda_1(X_{t_2})$.

Finally, using $\ell > \frac{p(k-1)+a}{p-1}$, the definition of $\lambda$ implies that for every $t \in [p]$, there exists a subset of $X_t^{\pi(t)}$ that belongs to $\binom{[n]}{k}_{\AlmostTwoStab}$. For each $t \in [p]$, let $B_t$ denote the minimal subset of $X_t^{\pi(t)}$ that belongs to $\binom{[n]}{k}_{\AlmostTwoStab}$. By the definition of $\lambda$, we have $\ell = c(B_t) + \frac{p(k-1)+a}{p-1}$ for all $t \in [p]$. For any $t_1 < t_2 \in [p]$, the fact that $X_{t_1} \preceq X_{t_2}$ implies that $B_{t_1}$ and $B_{t_2}$ are disjoint. It thus follows that the sets $B_1, \ldots, B_p$ form a monochromatic hyperedge of the coloring $c$ of $K^p(n,k)_{\AlmostTwoStab}$.
It can be verified that this monochromatic hyperedge can be found in polynomial time given the signed sets $X_1,X_2, \ldots, X_p$, so we are done.
\end{proof}

By combining Theorem~\ref{thm:TuckerinPPAp} with Theorem~\ref{thm:Schrijver^p->Tucker}, the proof of Theorem~\ref{thm:IntroPPApSch} is completed.

\subsection{Limitations on the Complexity of \texorpdfstring{$\KneserP^r$}{Kneser-r} Problems}

We next show that the results from the previous section imply limitations on the complexity of variants of the $\KneserP^r$ problem, restricted to colorings with a bounded number of colors.
We start with the following simple lemma, which says that the $\KneserP^{r}(\calF,m)$ problem does not become easier when $r$ increases.
\begin{lemma}\label{lemma:Kneser_p1_p2}
Let $r_1 \leq r_2$ be integers, let $\calF = (\calF_\alpha)_{\alpha \in \calA}$ be a sequence of set families, and let $m: \calA \rightarrow \N$ be a function such that $m(\alpha) < \chi(K^{r_2}(\calF_\alpha))$ for all $\alpha \in \calA$. Then, $\KneserP^{r_1}(\calF,m)$ is polynomial-time reducible to $\KneserP^{r_2}(\calF,m)$.
\end{lemma}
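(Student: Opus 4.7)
The plan is to use the trivial identity reduction: given an input $(\alpha, C)$ to $\KneserP^{r_1}(\calF, m)$, simply pass the same pair $(\alpha, C)$ as an instance to $\KneserP^{r_2}(\calF, m)$. No computation is needed to form the target instance, so the mapping is clearly polynomial-time; the content of the lemma lies in verifying that both problems are total on these inputs and that target-solutions can be converted back efficiently to source-solutions.

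For totality of the target, the hypothesis $m(\alpha) < \chi(K^{r_2}(\calF_\alpha))$ is exactly what is needed for $\KneserP^{r_2}(\calF, m)$ to admit a solution on every input (per the discussion following Definition~\ref{def:KneserProblem}). Given any solution to the target instance, namely $r_2$ pairwise disjoint sets $B_1, \ldots, B_{r_2} \in \calF_\alpha$ with $C(B_1) = \cdots = C(B_{r_2})$, the reduction returns the first $r_1$ of them, $B_1, \ldots, B_{r_1}$. These are still pairwise disjoint and share a common color, so they form a monochromatic hyperedge in $K^{r_1}(\calF_\alpha)$, which is a valid solution to the source instance.

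The only point requiring a brief sanity check is that the source problem $\KneserP^{r_1}(\calF, m)$ is itself total, since the statement of the lemma presupposes reductions between meaningful (total) problems. This follows from the chromatic-number monotonicity $\chi(K^{r_2}(\calF_\alpha)) \leq \chi(K^{r_1}(\calF_\alpha))$ for $r_1 \leq r_2$: any proper coloring of $K^{r_1}(\calF_\alpha)$ is also proper for $K^{r_2}(\calF_\alpha)$, because a monochromatic hyperedge on $r_2$ pairwise disjoint sets would contain a monochromatic hyperedge on any $r_1$ of them, contradicting properness for $K^{r_1}(\calF_\alpha)$. Combined with the hypothesis, this gives $m(\alpha) < \chi(K^{r_2}(\calF_\alpha)) \leq \chi(K^{r_1}(\calF_\alpha))$, so the source is total as well. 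There is no real obstacle in the proof; the lemma essentially records the observation that requiring $r_2 \geq r_1$ pairwise disjoint monochromatic sets is a strictly stronger demand than $r_1$ of them, so any algorithm solving the former immediately solves the latter by truncation.
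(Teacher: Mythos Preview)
Your proof is correct and follows essentially the same approach as the paper: apply the identity reduction and, from a returned monochromatic $r_2$-hyperedge, keep any $r_1$ of the sets to obtain a monochromatic $r_1$-hyperedge. The additional remark about totality of the source problem via $\chi(K^{r_2}(\calF_\alpha)) \leq \chi(K^{r_1}(\calF_\alpha))$ is not in the paper's proof but is a harmless (and correct) elaboration.
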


\begin{proof}
Fix two integers $r_1 \leq r_2$.
Consider an instance of the $\KneserP^{r_1}(\calF,m)$ problem, that is, an element $\alpha \in \calA$ and a Boolean circuit that represents a coloring of $\calF_\alpha$ with $m(\alpha)$ colors.
We simply apply the identity reduction to the $\KneserP^{r_2}(\calF,m)$ problem.
A solution for the obtained instance of $\KneserP^{r_2}(\calF,m)$, whose existence follows by $m(\alpha) < \chi(K^{r_2}(\calF_\alpha))$, is a collection of $r_2$ pairwise disjoint sets of $\calF_\alpha$ with the same color.
By $r_1 \leq r_2$, any $r_1$ sets from this collection form a solution for the same input as an instance of $\KneserP^{r_1}(\calF,m)$. The correctness of the reduction follows.
\end{proof}

By combining Lemma~\ref{lemma:Kneser_p1_p2} with Theorem~\ref{thm:IntroPPAp}, we obtain the following result.
\begin{theorem}\label{thm:limit_Kneser}
For every integer $r$ and for every prime $p$ such that $r \leq p$, $\KneserP^{r}(\calF^{(p)}, \lfloor \frac{n-p(k-1)-1}{p-1} \rfloor )$ lies in $\PPAp$.
\end{theorem}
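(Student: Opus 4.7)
The plan is to obtain Theorem~\ref{thm:limit_Kneser} as an immediate consequence of Lemma~\ref{lemma:Kneser_p1_p2} combined with Theorem~\ref{thm:IntroPPAp}. Concretely, I would invoke Lemma~\ref{lemma:Kneser_p1_p2} with $r_1 = r$, $r_2 = p$, the sequence $\calF = \calF^{(p)}$, and the function $m(n,k) = \lfloor \tfrac{n-p(k-1)-1}{p-1} \rfloor$, in order to reduce $\KneserP^{r}(\calF^{(p)}, m)$ in polynomial time to $\KneserP^{p}(\calF^{(p)}, m)$, which is exactly the $\KneserP^p$ problem since the function $m$ here coincides with $m^{(p)}$.

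To justify that Lemma~\ref{lemma:Kneser_p1_p2} applies, I first need to verify its hypothesis, namely that $m(n,k) < \chi(K^p(\calF^{(p)}_{(n,k)}))$ for every $(n,k) \in \calA^{(p)}$. By~\eqref{eq:Chromatic}, the right-hand side equals $\lceil \tfrac{n-p(k-1)}{p-1}\rceil$, so the required inequality reads
\[ \Big\lfloor \tfrac{n-p(k-1)-1}{p-1} \Big\rfloor < \Big\lceil \tfrac{n-p(k-1)}{p-1} \Big\rceil, \]
which is a short case analysis on the residue of $N := n-p(k-1)$ modulo $p-1$ (the two sides differ by exactly $1$ when $p-1 \mid N$, and the floor on the left matches the quotient $\lfloor N/(p-1)\rfloor$ otherwise, while the ceiling exceeds it by $1$). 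The condition $r \leq p$ also ensures that any collection of $p$ pairwise disjoint $k$-subsets produced by the reduced problem contains $r$ pairwise disjoint ones, which is precisely what Lemma~\ref{lemma:Kneser_p1_p2} relies on.

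Having established the hypothesis, Lemma~\ref{lemma:Kneser_p1_p2} yields a polynomial-time reduction from $\KneserP^{r}(\calF^{(p)}, m)$ to $\KneserP^p$, and Theorem~\ref{thm:IntroPPAp} places $\KneserP^p$ in $\PPAp$. Since $\PPAp$ is closed under polynomial-time reductions, the conclusion follows. There is no genuine obstacle in this argument; the only step requiring care is the arithmetic verification above, which is essentially routine. The substantive content of the result lies entirely in Lemma~\ref{lemma:Kneser_p1_p2} and Theorem~\ref{thm:IntroPPAp}, and the present theorem is simply their composition in the special case of the sequence $\calF^{(p)}$.
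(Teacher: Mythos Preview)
Your proposal is correct and follows essentially the same approach as the paper: invoke Lemma~\ref{lemma:Kneser_p1_p2} with $r_1=r$, $r_2=p$, $\calF=\calF^{(p)}$, and $m=m^{(p)}$ to reduce to $\KneserP^p$, then apply Theorem~\ref{thm:IntroPPAp}. The only difference is that you spell out the arithmetic verification of the hypothesis $m(n,k)<\chi(K^p(n,k))$, whereas the paper takes this for granted since $m^{(p)}(n,k)=\chi(K^p(n,k))-1$ by definition (see the paragraph following~\eqref{eq:Chromatic}).
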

\begin{proof}
Fix an integer $r$ and a prime $p$ such that $r \leq p$.
Put $m(n,k) = \lfloor \frac{n-p(k-1)-1}{p-1} \rfloor $, and apply Lemma~\ref{lemma:Kneser_p1_p2} to obtain that $\KneserP^{r}(\calF^{(p)},m)$ is polynomial-time reducible to $\KneserP^{p}(\calF^{(p)},m)$. The latter coincides with the $\KneserP^{p}$ problem, which, by Theorem~\ref{thm:IntroPPAp}, lies in $\PPAp$.
It thus follows that $\KneserP^{r}(\calF^{(p)},m)$ lies in $\PPAp$, as required.
\end{proof}

Theorem~\ref{thm:limit_Kneser} yields, for any integer $r$, a limitation on the complexity of the $\KneserP^{r}$ problem, restricted to colorings with a bounded number of colors. For example, consider the $\KneserP$ problem, which asks to find a monochromatic edge in a graph $K(n,k)$ colored with $n-2k+1$ colors, and recall that it lies in $\PPA$. By Theorem~\ref{thm:limit_Kneser}, applied with $r=2$ and $p=3$, the $\KneserP(n,k,\lfloor \frac{n-3k+2}{2} \rfloor)$ problem, which asks to find a monochromatic edge in a graph $K(n,k)$ colored with only $\lfloor \frac{n-3k+2}{2} \rfloor$ colors, lies in $\PPAthree$. This implies that the latter problem is not $\PPA$-hard, unless $\PPA \subseteq \PPAthree$. We next present analogue consequences for the $\SchrijverP$ problem.

We need the following simple lemma.
\begin{lemma}\label{lemma:Stab->AStab}
For every integer $r$, $\KneserPstab^r(n,k,\lfloor \frac{n-rk}{r-1} \rfloor)$ is polynomial-time reducible to $\KneserPAstab^r$.
\end{lemma}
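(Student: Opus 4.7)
The plan is to reduce an instance of $\KneserPstab^r(n,k,\lfloor \frac{n-rk}{r-1}\rfloor)$ to an instance of $\KneserPAstab^r$ on the parameters $(n',k)$ with $n' = n - r + 1$. The intuition is that shrinking the ground set by $r-1$ elements converts the cyclic stability condition into the path stability condition ``for free'': since an almost stable $k$-subset of $[n']$ lives inside $[n-r+1]$ and therefore avoids $n$, the obstruction $\{1,n\}$ that separates $\TwoStab$ from $\AlmostTwoStab$ cannot occur, so every almost stable $k$-subset of $[n']$ is automatically a (cyclically) stable $k$-subset of $[n]$.

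First I would check that the color counts match under this choice of $n'$. The target problem $\KneserPAstab^r$ on input $(n',k)$ uses $m' = \lfloor \frac{n'-r(k-1)-1}{r-1}\rfloor$ colors, and substituting $n' = n - r + 1$ gives $m' = \lfloor \frac{n-rk}{r-1}\rfloor$, which is exactly the number of colors $m$ available in the $\KneserPstab^r$ instance. One may assume $m \geq 1$, since otherwise there is no valid input; this assumption is equivalent to $n \geq rk + r - 1$, which in turn ensures $n' \geq rk$ so that the $\KneserPAstab^r$ instance is well-defined.

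Next I would describe the reduction itself. Given the input circuit $C:\{0,1\}^n \to [m]$ representing a coloring $c$ of $\binom{[n]}{k}_{\TwoStab}$, produce the circuit $C':\{0,1\}^{n'} \to [m]$ that on input $B \in \{0,1\}^{n'}$ pads $B$ with $r-1$ trailing zeros, obtaining an $n$-bit vector $\widetilde{B}$, and outputs $C(\widetilde{B})$. This construction is clearly polynomial-time, and on any $B \in \binom{[n']}{k}_{\AlmostTwoStab}$ the embedding $\widetilde{B}$ lies in $\binom{[n]}{k}_{\TwoStab}$ by the observation above, so $C'$ represents a genuine coloring of $\binom{[n']}{k}_{\AlmostTwoStab}$ with $m$ colors.

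Finally I would verify correctness. A solution for the produced $\KneserPAstab^r$ instance is a family of $r$ pairwise disjoint almost stable $k$-subsets $B_1,\ldots,B_r$ of $[n']$ with $C'(B_1)=\cdots=C'(B_r)$. Viewed as subsets of $[n]$, these sets remain pairwise disjoint, belong to $\binom{[n]}{k}_{\TwoStab}$ by the embedding, and satisfy $C(B_t)=C'(B_t)$ for all $t \in [r]$, so they form a monochromatic hyperedge of $K^r(n,k)_{\TwoStab}$ under $c$, which is a valid solution to the original instance. There is no significant obstacle here; the content of the lemma is really the combinatorial alignment between the cyclic-stability colorability defect and the Kří\v{z} bound for the almost stable hypergraph after the shift $n' = n - r + 1$, which is precisely why the hypothesis uses $\lfloor \frac{n-rk}{r-1}\rfloor$ colors rather than the one-larger value $\lfloor \frac{n-r(k-1)-1}{r-1}\rfloor$ that would match $\chi(K^r(n,k)_{\TwoStab})-1$.
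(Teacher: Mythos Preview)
Your proof is correct but takes a different route from the paper. The paper keeps the same parameters $(n,k)$ and \emph{extends} the given coloring $c$ to a coloring $c'$ of $\binom{[n]}{k}_{\AlmostTwoStab}$ with $m+1$ colors by assigning the fresh color $m+1$ to every almost stable set containing $n$; since all such sets pairwise intersect at $n$, no monochromatic hyperedge can have color $m+1$, so the returned hyperedge avoids $n$ and is therefore cyclically stable. You instead \emph{shrink} the ground set to $[n-r+1]$ and restrict the coloring, so that almost stable subsets of the smaller ground set automatically avoid $n$ and hence are cyclically stable in $[n]$. Both arguments hinge on the same observation (an almost stable subset of $[n]$ that misses $n$ is cyclically stable) and on the same arithmetic identity $\lfloor \tfrac{n-rk}{r-1}\rfloor + 1 = \lfloor \tfrac{n-r(k-1)-1}{r-1}\rfloor$; the paper spends the ``$+1$'' as an extra color, while you spend it as $r-1$ fewer ground-set elements. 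Your version has the minor cost of needing the side condition $n' \geq rk$ (which you correctly derive from $m \geq 1$), whereas the paper's version works uniformly without adjusting parameters.
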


\begin{proof}
For an integer $r$, consider an instance of $\KneserPstab^r(n,k,\lfloor \frac{n-rk}{r-1} \rfloor)$, that is, integers $n$ and $k$ with $n \geq r \cdot k$ along with a coloring $c: \binom{[n]}{k}_\TwoStab \rightarrow [m]$ of the vertices of $K^r(n,k)_\TwoStab$ where $m = \lfloor \frac{n-rk}{r-1} \rfloor$. We define a coloring $c': {\binom{[n]}{k}}_{\AlmostTwoStab} \rightarrow [m+1]$ of the vertices of ${K^r(n,k)}_{\AlmostTwoStab}$ as follows. For every $A \in \binom{[n]}{k}_{\AlmostTwoStab}$, if $n \in A$ then $c'(A) = m+1$. Otherwise, it holds that $A \in \binom{[n]}{k}_\TwoStab$, and we define $c'(A) = c(A) \in [m]$. Note that the number of colors used by $c'$ is $m+1 = \lfloor \frac{n-r(k-1)-1}{r-1} \rfloor$, as needed for an instance of the $\KneserPAstab^r$ problem. Notice that given a Boolean circuit that represents the coloring $c$, it is possible to construct in polynomial time a Boolean circuit that represents the coloring $c'$.

For correctness, consider a solution for the produced $\KneserPAstab^r$ instance, that is, a monochromatic hyperedge of the coloring $c'$ of ${K^r(n,k)}_{\AlmostTwoStab}$. By the definition of $c'$, the vertices colored $m+1$ are pairwise intersecting, hence the vertices of the monochromatic hyperedge are not colored $m+1$, and thus do not include the element $n$. Since an almost stable subset of $[n]$ that does not include $n$ is stable, this is a monochromatic hyperedge of the coloring $c$ of $K^r(n,k)_\TwoStab$ and thus a solution for the given instance of $\KneserPstab^r(n,k,\lfloor \frac{n-rk}{r-1} \rfloor)$, so we are done.
\end{proof}

By combining Lemma~\ref{lemma:Stab->AStab} with Theorem~\ref{thm:IntroPPApSch}, we derive the following.
\begin{corollary}\label{cor:Stab-PPAp}
For every prime $p$, the $\KneserPstab^p(n,k,\lfloor \frac{n-pk}{p-1} \rfloor)$ problem lies in $\PPAp$.
\end{corollary}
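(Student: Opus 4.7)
The plan is to derive this corollary by simply chaining the two results the paper has already established. Fix a prime $p$ and consider an instance of $\KneserPstab^p(n,k,\lfloor \frac{n-pk}{p-1} \rfloor)$. First I would invoke Lemma~\ref{lemma:Stab->AStab} with $r=p$, which provides a polynomial-time reduction from $\KneserPstab^p(n,k,\lfloor \frac{n-pk}{p-1} \rfloor)$ to $\KneserPAstab^p$; concretely, the reduction extends the stable coloring to an almost-stable one by assigning the fresh color $m+1$ to every almost-stable $k$-subset containing the element $n$, and any monochromatic hyperedge produced by the oracle necessarily avoids color $m+1$ (since vertices of that color pairwise intersect at $n$), hence yields a monochromatic hyperedge in $K^p(n,k)_{\TwoStab}$.

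Next I would apply Theorem~\ref{thm:IntroPPApSch}, which places $\KneserPAstab^p$ in $\PPAp$. Since $\PPAp$ is closed under polynomial-time (many-one) reductions, the composition of the reduction from Lemma~\ref{lemma:Stab->AStab} with the $\PPAp$-membership of $\KneserPAstab^p$ yields that $\KneserPstab^p(n,k,\lfloor \frac{n-pk}{p-1} \rfloor)$ lies in $\PPAp$, as claimed. There is no real obstacle here: both ingredients are already stated and proved in the paper, and the number of colors in the stable instance, namely $\lfloor \frac{n-pk}{p-1} \rfloor$, is precisely tuned so that after adding one extra color in the reduction, one obtains $\lfloor \frac{n-p(k-1)-1}{p-1} \rfloor$, which is exactly the count used in the definition of $\KneserPAstab^p$ (i.e., one less than $\chi(K^p(n,k)_{\AlmostTwoStab})$ as given by~\eqref{eq:Chromatic}). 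This bookkeeping is the only thing to double-check; the rest is immediate from the closure of $\PPAp$ under polynomial-time reductions.
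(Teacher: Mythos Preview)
Your proposal is correct and follows exactly the paper's approach: the corollary is stated immediately after Lemma~\ref{lemma:Stab->AStab} and Theorem~\ref{thm:IntroPPApSch} with the one-line justification ``By combining Lemma~\ref{lemma:Stab->AStab} with Theorem~\ref{thm:IntroPPApSch}, we derive the following.'' Your additional remarks about the color-count bookkeeping and the closure of $\PPAp$ under polynomial-time reductions simply spell out what the paper leaves implicit.
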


We further derive the following corollary.
\begin{corollary}\label{cor:SchriverPPA3}
The $\SchrijverP(n,k,\lfloor \frac{n-3k}{2} \rfloor)$ problem lies in $\PPAthree$.
\end{corollary}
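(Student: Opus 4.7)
The plan is to derive this as an immediate consequence of Lemma \ref{lemma:Kneser_p1_p2} together with the already established Corollary \ref{cor:Stab-PPAp}. Specialising Corollary \ref{cor:Stab-PPAp} to $p=3$, the problem $\KneserPstab^3(n,k,\lfloor \tfrac{n-3k}{2}\rfloor)$ lies in $\PPAthree$. Since by the earlier notational convention $\SchrijverP(n,k,m)$ coincides with $\KneserPstab^2(n,k,m)$, it suffices to produce a polynomial-time reduction from $\KneserPstab^2(n,k,\lfloor \tfrac{n-3k}{2}\rfloor)$ to $\KneserPstab^3(n,k,\lfloor \tfrac{n-3k}{2}\rfloor)$, exactly in the spirit of the derivation of Theorem~\ref{thm:limit_Kneser}.

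To obtain such a reduction, I would invoke Lemma \ref{lemma:Kneser_p1_p2} with $r_1=2$, $r_2=3$, the sequence $\calF = \calF^{(3,\TwoStab)}$, and the function $m(n,k) = \lfloor \tfrac{n-3k}{2}\rfloor$. The only hypothesis to check is that $m(n,k) < \chi(K^3(\binom{[n]}{k}_\TwoStab))$ for every $(n,k) \in \calA^{(3)}$. By~\eqref{eq:Chromatic}, one has $\chi(K^3(n,k)_\TwoStab) = \lceil \tfrac{n-3(k-1)}{2} \rceil = \lceil \tfrac{n-3k+3}{2} \rceil$, and the chain of inequalities
\[
\Big\lfloor \frac{n-3k}{2} \Big\rfloor \;\le\; \frac{n-3k}{2} \;<\; \frac{n-3k+3}{2} \;\le\; \Big\lceil \frac{n-3k+3}{2} \Big\rceil
\]
confirms the gap. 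Lemma \ref{lemma:Kneser_p1_p2} then supplies a polynomial-time reduction from $\KneserPstab^2(n,k,m)$ to $\KneserPstab^3(n,k,m)$, which, combined with the $\PPAthree$ membership of the latter, yields the corollary.

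There is essentially no genuine obstacle; the statement is a direct combination of a uniformity-monotonicity reduction with a previously established membership result. The only subtlety worth noting is the domain of the reduction: inputs $(n,k)$ with $n < 3k$ make $m \le 0$, so they either fall outside the intended instances of $\SchrijverP(n,k,\lfloor \tfrac{n-3k}{2}\rfloor)$ or can be handled trivially, and the substantive content lies on $\calA^{(3)}$ where Lemma \ref{lemma:Kneser_p1_p2} applies cleanly.
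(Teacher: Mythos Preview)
Your proposal is correct and follows essentially the same route as the paper: apply Lemma~\ref{lemma:Kneser_p1_p2} with $r_1=2$, $r_2=3$ to reduce $\SchrijverP(n,k,m)=\KneserPstab(n,k,m)$ to $\KneserPstab^{3}(n,k,m)$, and then invoke Corollary~\ref{cor:Stab-PPAp} with $p=3$. You even spell out the verification of the hypothesis $m(n,k)<\chi(K^3(n,k)_\TwoStab)$ and the domain issue on $\calA^{(3)}$, both of which the paper leaves implicit.
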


\begin{proof}
Put $m(n,k) = \lfloor \frac{n-3k}{2} \rfloor$.
By Lemma~\ref{lemma:Kneser_p1_p2}, the $\SchrijverP(n,k,m)$ problem, which can be written as $\KneserPstab(n,k,m)$, is polynomial-time reducible to $\KneserPstab^{3}(n,k,m)$.
By Corollary~\ref{cor:Stab-PPAp}, the latter lies in $\PPAthree$. It thus follows that $\SchrijverP(n,k,m)$ lies in $\PPAthree$ as well.
\end{proof}

We finally state the following consequence of Corollary~\ref{cor:SchriverPPA3} regarding the $\SchrijverP(n,k,m)$ problem with the function $m(n,k) = \lfloor n/2 \rfloor-2k+1$ considered in~\cite{Haviv23}.
We use here the fact that for all integers $n$ and $k \geq 2$, it holds that $\lfloor n/2 \rfloor-2k+1 \leq \lfloor \frac{n-3k}{2} \rfloor$.

\begin{corollary}
The $\SchrijverP(n,k,\lfloor n/2 \rfloor-2k+1)$ problem lies in $\PPAthree$.
\end{corollary}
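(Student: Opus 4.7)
The plan is to use Corollary~\ref{cor:SchriverPPA3} almost verbatim: since the preceding corollary places $\SchrijverP(n,k,\lfloor(n-3k)/2\rfloor)$ in $\PPAthree$, and the number of colors provided here is at most $\lfloor(n-3k)/2\rfloor$ for $k\geq 2$, the identity reduction suffices. More precisely, given an instance $(n,k,C)$ of $\SchrijverP(n,k,\lfloor n/2\rfloor-2k+1)$ where $C$ is a Boolean circuit representing a coloring $c:\binom{[n]}{k}_{\TwoStab}\rightarrow[\lfloor n/2\rfloor-2k+1]$, I would pass the very same circuit $C$ as an instance of $\SchrijverP(n,k,\lfloor(n-3k)/2\rfloor)$; the hypothesis $\lfloor n/2\rfloor-2k+1\leq\lfloor(n-3k)/2\rfloor$ for $k\geq 2$ (noted just before the statement) ensures $[\lfloor n/2\rfloor-2k+1]\subseteq[\lfloor(n-3k)/2\rfloor]$, so $C$'s outputs are valid colors for the new instance. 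A monochromatic edge returned by the $\PPAthree$ procedure is then directly a monochromatic edge of the original coloring $c$, hence a solution for the input instance.

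The only remaining case is $k=1$, where the inequality $\lfloor n/2\rfloor-2k+1\leq\lfloor(n-3k)/2\rfloor$ may fail, but the problem is trivial: the graph $S(n,1)$ coincides with the complete graph $K_n$ on the $n$ singleton vertices, all of which are stable, and edges are simply pairs of distinct singletons. The number of colors is $\lfloor n/2\rfloor-1<n$, so by pigeonhole at least two singletons receive the same color, and such a pair forms a monochromatic edge. I would find it in polynomial time by scanning over the $n$ singletons, recording the color of each, and returning the first collision. Since any polynomial-time total search problem lies in $\PPAthree$, this case is handled unconditionally.

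Combining the two cases, an instance of $\SchrijverP(n,k,\lfloor n/2\rfloor-2k+1)$ is first tested for $k=1$, which is resolved directly, and otherwise reduced to $\SchrijverP(n,k,\lfloor(n-3k)/2\rfloor)$ by the identity reduction described above. I do not anticipate any obstacle: the reduction requires no modification of $C$, the correctness is immediate from the fact that the set of monochromatic edges depends only on $c$ and not on the declared upper bound on colors, and the $\PPAthree$ membership is inherited from Corollary~\ref{cor:SchriverPPA3}.
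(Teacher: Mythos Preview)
Your proposal is correct and follows the same approach as the paper: the paper simply notes (in the sentence preceding the corollary) that $\lfloor n/2\rfloor-2k+1\leq\lfloor(n-3k)/2\rfloor$ for all $k\geq 2$ and invokes Corollary~\ref{cor:SchriverPPA3}, without giving a separate proof environment. You additionally handle the boundary case $k=1$ explicitly, which the paper glosses over; your treatment of that case is correct and the extra care is warranted, since the inequality indeed fails for $k=1$ and even $n$.
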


\section*{Acknowledgments}
We are grateful to Aris Filos-Ratsikas for a fruitful discussion and for clarifications on~\cite{DeligkasFH22} and to the anonymous referees for their useful and constructive feedback.

\bibliographystyle{abbrv}
\bibliography{kneser-hyper}

\end{document}